\title{Fast Hashing with Strong Concentration Bounds\footnote{An extended abstract appeared at the 52nd Annual ACM Symposium on Theory of Computing (STOC'20).}}
\author{Anders Aamand\footnote{Basic Algorithms Research Copenhagen (BARC), University of Copenhagen.} \and Jakob B.\ T.\ Knudsen$^\dagger$ \and Mathias B.\ T.\ Knudsen\footnote{SupWiz ApS.} \and Peter M. R. Rasmussen$^\dagger$ \and Mikkel Thorup$^\dagger$}
\date{\today}
\newcommand{\abs}[1]{\left\vert #1\right\vert}
\newcommand{\eps}{\varepsilon}
\newcommand{\R}{\mathbb{R}}
\newcommand{\N}{\mathbb{N}}
\newcommand{\Z}{\mathbb{Z}}
\newcommand{\C}{\mathcal C}
\newcommand\cU{{\mathcal U}}
\newcommand\fct{\rightarrow}
\newcommand{\F}{\mathcal F}
\newcommand{\FF}{\mathbb F}
\newcommand{\PR}[1]{\Pr\left[ #1 \right]}
\newcommand{\PRC}[2]{\Pr\left[ #1 \, \middle\vert \, #2\right]}
\newcommand\drop[1]{}
\newcommand{\req}[1]{\eqref{#1}}
\newcommand{\E}[1]{\mathbb E\left[#1\right]}
\newcommand{\EC}[2]{\E{\left. #1 \, \right\vert \, #2}}
\newcommand{\Var}[1]{\mathrm{Var}\left[ #1 \right]}
\newcommand{\VarC}[2]{\Var{\left. #1 \, \right\vert \, #2 }}
\newcommand{\Covar}[2]{\mathrm{Cov}\left(#1, #2 \right)}
\newcommand{\CovarC}[3]{\Covar{#1}{#2 \; \middle\vert \; #3}}
\newcommand{\indicator}[1]{\left[ \, #1 \, \right]}
\newcommand{\set}[1]{\left\{ #1 \right\}}
\newcommand{\setbuilder}[2]{\set{ #1 \; \middle\vert \; #2 }}
\newcommand{\chernoff}[1]{\C\left( #1 \right)}
\newcommand{\ceil}[1]{\lceil #1 \rceil}
\newcommand{\cupdot}{\mathbin{\mathaccent\cdot\cup}}
\newcommand\errorterm{added error probability }
\newcommand\Errorterm{Added error probability}
\newtheorem*{rep@theorem}{\rep@title}
\newcommand{\newreptheorem}[2]{%
\newenvironment{rep#1}[1]{%
 \def\rep@title{#2 \ref{##1}}%
 \begin{rep@theorem}}%
 {\end{rep@theorem}}}
\newtheorem{theorem}{Theorem}
\newtheorem{lemma}[theorem]{Lemma}
\newtheorem{corollary}[theorem]{Corollary}
\theoremstyle{definition}
\newtheorem{definition}{Definition}
\theoremstyle{remark}
\newtheorem*{remark}{Remark}
\begin{document}
\pagenumbering{roman}
\maketitle

\begin{abstract}
    Previous work on tabulation hashing by P{\v a}tra{\c s}cu and Thorup
from STOC'11 on simple tabulation and from SODA'13 on twisted tabulation
offered Chernoff-style concentration bounds on hash based sums, e.g., the number of balls/keys hashing to a given bin, but
under some quite severe restrictions on the expected values of these
sums. The basic idea in tabulation hashing is to view
a key as consisting of $c=O(1)$ characters, e.g., a 64-bit key as
$c=8$ characters of 8-bits. The character domain $\Sigma$ should be
small enough that character tables of size $|\Sigma|$ fit in fast
cache. The schemes then use $O(1)$ tables of this size,
so the space of tabulation hashing is $O(|\Sigma|)$. However, the concentration bounds by P{\v a}tra{\c s}cu and Thorup only apply if the expected sums are $\ll |\Sigma|$.

To see the problem, consider the very simple case where we use
tabulation hashing to throw $n$ balls into $m$ bins and want to analyse the number of balls in a given bin. With their concentration bounds, we are fine if $n=m$, for then the expected value is $1$.
However, if $m=2$, as when tossing $n$
unbiased coins, the expected value $n/2$ is $\gg |\Sigma|$ for large data sets, e.g.,
data sets that do not fit in fast cache.

To handle expectations that go beyond the limits of our small
space, we need a much more advanced analysis of simple tabulation,
plus a new tabulation technique that we call
\emph{tabulation-permutation} hashing which is at most twice as
slow as simple tabulation. No other hashing scheme of comparable speed offers
similar Chernoff-style concentration bounds.
\end{abstract}

\newpage
\tableofcontents
\newpage
\pagenumbering{arabic}
\def\cC{\mathcal{C}}
\def\deltaerr{\varepsilon}
\def\xor{\oplus}

\section{Introduction}
Chernoff's concentration bounds~\cite{Che52:chernoff} date back to the
1950s but bounds of this types go even further back to Bernstein in the 1920s~\cite{Bernstein1924}. Originating from the area of statistics they are now one of the
most basic tools of randomized algorithms \cite{motwani95book}. A
canonical form considers the sum $X=\sum_{i=1}^nX_i$ of independent
random variables $X_1, \dots, X_n\in [0, 1]$. Writing $\mu = \E X$ it 
holds for every $\deltaerr\geq0$ that
\begin{align}\label{eq:classic-chernoff+}
\Pr[X\ge (1+\deltaerr)\mu]& \le \exp(-\mu\;\cC(\deltaerr))& \left[\;\leq\ 
\exp(-\deltaerr^2\mu/3)\textnormal{ for }\deltaerr\leq 1\right],\\
\label{eq:classic-chernoff-}
\Pr[X\le (1-\deltaerr)\mu]& \le \exp(-\mu\;\cC(-\deltaerr))& \left[\;\leq\ 
\exp(-\deltaerr^2\mu/2)\textnormal{ for }\deltaerr\leq 1\right].
\end{align}
Here $\mathcal{C}: (-1,\infty) \to [0,\infty)$ is given by 
$\cC(x)=(x+1)\ln (x+1)-x$, so $\exp(-\C(x))=
   \frac{e^x}{(1+x)^{(1+x)}}$. Textbook proofs
of \req{eq:classic-chernoff+} and \req{eq:classic-chernoff-} can be
found in \cite[\S 4]{motwani95book}\footnote{The bounds in \cite[\S
4]{motwani95book} are stated as working only for $X_i\in\{0,1\}$, but
the proofs can easily handle any $X_i\in[0,1]$.}. Writing $\sigma^2=\Var X$, a more general bound is
\begin{align}
    \label{eq:var-chernoff}
\Pr[|X-\mu|\geq t]& \le 2\exp(-\sigma^2\cC(t/\sigma^2)) & \left[\;\leq\ 
2\exp(-(t/\sigma)^2/3)\textnormal{ for }t\leq \sigma^2\right].
\end{align}
Since $\sigma^2\leq\mu$ and $\cC(-\deltaerr)\leq 1.5\,\cC(\deltaerr)$ for $\deltaerr\leq 1$, \eqref{eq:var-chernoff} is at least as good as \eqref{eq:classic-chernoff+} and \eqref{eq:classic-chernoff-}, up to constant factors, and often better. In this work, we state our results in relation to \req{eq:var-chernoff}, known as Bennett's inequality
\cite{BennetInequality}.

Hashing is another fundamental tool of randomized algorithms dating
back to the 1950s \cite{Dum56}. A random hash function, $h:U\fct R$,
assigns a hash value, $h(x)\in R$, to every key $x\in U$. Here
both $U$ and $R$ are typically bounded integer ranges. The original
application was hash tables with chaining where $x$ is placed in bin
$h(x)$, but today, hash functions are ubiquitous in randomized algorithms. For instance, 
 they play a fundamental role in streaming
and distributed settings where a system uses a hash function to
coordinate the random choices for a given key. In most applications, we require concentration bounds for one of the following cases of increasing generality.
\begin{enumerate}
    \item Let $S\subseteq U$ be a set of balls and assign to each ball, $x\in S$, a weight, $w_x\in [0, 1]$. We wish to distribute the balls of $S$ into a set of bins $R=[m] = \{0, 1, \dots, m-1\}$. For a bin, $y\in [m]$, $X=\sum_{x\in S}w_x\cdot [h(x)=y]$ is then the total weight of the balls landing in bin $y$.
    \item We may instead be interested in the total weight of the balls with hash values in the interval $[y_1, y_2)$ for some $y_1, y_2\in [m]$, that is, $X=\sum_{x\in S}w_x\cdot [y_1\leq h(x)< y_2]$.    
    \item More generally, we may consider a fixed \emph{value function} $v\colon U\times R\to[0, 1]$. For each key $x\in U$, we define the random variable $X_x=v(x,h(x))$, where the randomness of $X_x$ stems from that of $h(x)$. We write $X=\sum_{x\in U}v(x, h(x))$ for the sum of these values. 
    %More generally, we may consider a fixed \emph{value function} $v\colon U\times R\to[0, 1]$. Each key $x\in U$ is now weighted according to its hash value $h(x)\in R$. We write $X=\sum_{x\in U}v(x, h(x))$ for the sum of these weights. 
\end{enumerate}
To exemplify applications, the first case is common when trying to allocate resources; the second case arises in streaming algorithms; and the
 third case handles the computation of a complicated statistic, $X$, on incoming data.
In each case, we wish the variable $X$ to be concentrated around its mean, $\mu = \E X$, according to the Chernoff-style bound of \eqref{eq:var-chernoff}.
If we had fully random hashing, this would indeed be the case. However, storing a fully random hash function is infeasible. The goal of this paper is to obtain such concentration with a practical constant-time hash function. More specifically, we shall construct hash functions that satisfy the following definition when $X$ is a random variable as in one of the three cases above.%\todo{With this definition, we should spend more time talking about query balls and the choice of value function}
\begin{definition}[Strong Concentration]\label{def:strongly-concentrated}
	Let $h\colon [u]\to [m]$ be a hash function, $S\subseteq [u]$ be a set of hash keys of size $n = \abs S$, and $X=X(h, S)$ be a random variable, which is completely determined by $h$ and $S$. Denote by $\mu = \E X$ and $\sigma^2=\Var{X}$ the expectation and variance of $X$. We say that $X$ is \emph{strongly concentrated with \errorterm $f(u, n, m)$} if for every $t>0$,
	\begin{align}
		\Pr\left[\abs{X-\mu}\geq t\right]\leq O\left(\exp\left(-\Omega(\sigma^2\C(t/\sigma^2)\right)\right)+f(u, n, m).
	\end{align}
\end{definition}

Throughout the paper we shall prove properties of random variables that are determined by some hash function. In many cases, we would like these properties to continue to hold while conditioning the hash function on its value on some hash key.
\begin{definition}[Query Invariant]\label{def:query-invariance}
	Let $h\colon [u]\to [m]$ be a hash function, let $X=X(h)$ be a random variable determined by the outcome of $h$, and suppose that some property $T$ is true of $X$. We say that the property is \emph{query invariant} if whenever we choose $x\in [u]$ and $y\in [m]$ and consider the hash function $h'= (h|h(x)=y)$, i.e.,\ $h$ conditioned on $h(x)=y$, property $T$ is true of $X'=X(h')$.
\end{definition}
\begin{remark}
For example, consider the case (1) from above. We are interested in the random variable $X=\sum_{x\in S}w_x\cdot [h(x)=y]$. Suppose that for every choice of weights, $(w_x)_{x\in S}$, $X$ is strongly concentrated and that this concentration is query invariant. Let $x_0\in [u]$ be a distinguished query key. Then since for every $y_0\in [m]$, the hash function $h' = (h|h(x_0)=y_0)$ satisfies that $X'=\sum_{x\in S}w_x\cdot [h'(x)=y_0]$ is strongly concentrated, it follows that $X'' =\sum_{x\in S}w_x\cdot [h(x)=h(x_0)]$ is strongly concentrated. Thus, $h$ allows us to get Chernoff-style concentration on the weight of the balls landing in the same bin as $x_0$. 

This may be generalized such that in the third case from above, the weight function may be chosen as a function of $h(x_0)$. Thus, the property of being query invariant is very powerful. It is worth noting that the constants of the asymptotics may change when conditioning on a query. Furthermore, the expected value and variance of $X'$ may differ from that of $X$, but this is included in the definition.
\end{remark}

One way to achieve Chernoff-style bounds in all of the above cases is through the classic
$k$-independent hashing framework of Wegman and
Carter \cite{wegman81kwise}. The random hash function $h:U\fct R$ is
$k$-independent if for any $k$ distinct keys $x_1,\ldots,x_k\in U$,
$(h(x_1),\ldots,h(x_k))$ is uniformly distributed in $R^k$. Schmidt
and Siegel \cite{schmidt95chernoff} have shown that with $k$-independence,
the above Chernoff
bounds hold with an \errorterm decreasing exponentially in
$k$. Unfortunately, a lower bound by Siegel \cite{siegel04hash} implies that evaluating a $k$-independent hash function takes
$\Omega(k)$ time unless we use a lot of space (to be detailed later).

P\v{a}tra\c{s}cu and Thorup have shown that Chernoff-style bounds can
be achieved in constant time with tabulation based hashing methods;
namely simple tabulation \cite{patrascu12charhash} for the first case described above and twisted
tabulation \cite{PT13:twist} for all cases. However, their results suffer from some
severe restrictions on the expected value, $\mu$, of the sum. More
precisely, the speed of these methods relies on using space small
enough to fit in fast cache, and the Chernoff-style
bounds \cite{patrascu12charhash,PT13:twist} all require that $\mu$ is
much smaller than the space used. For larger values of $\mu$,
P\v{a}tra\c{s}cu and Thorup \cite{patrascu12charhash,PT13:twist}
offered some weaker bounds with a deviation that was off by several logarithmic
factors. It can be shown that some of these limitations are inherent to simple and twisted tabulation. For instance, they cannot even reliably distribute balls into $m=2$ bins, as described in the first case above, if the expected number of balls in each bin exceeds the space used.

In this paper, we construct and analyse a new family of fast hash functions \emph{tabulation-permutation} hashing that has Chernoff-style concentration bounds like~\eqref{eq:var-chernoff} without any restrictions  
on $\mu$. This generality is important if building a general online system with no knowledge of future input. Later, we shall give concrete examples from
streaming where $\mu$ is in fact large.
Our bounds hold for all of the cases described above and all possible inputs. Furthermore, tabulation-permutation hashing is an order of magnitude faster than any other known hash function with similar concentration bounds, and almost as fast as simple and twisted tabulation. We demonstrate this both theoretically and experimentally.
Stepping back, our main theoretical contribution lies in the field of analysis of algorithms, and is in the spirit of Knuth's analysis of linear probing~\cite{knuth63linprobe}, which shows strong theoretical guarantees for a very practical algorithm. We show that tabulation-permutation hashing has strong theoretical Chernoff-style concentration bounds. Moreover, on the practical side, we perform experiments, summarized in \Cref{tab:speed}, demonstrating that it is comparable in speed to some of the fastest hash functions in use, none of which provide similar concentration bounds.

When talking about hashing in constant time, the actual size of the constant is of crucial importance. First, hash functions typically execute the same instructions on all keys, in which case we always incur the worst-case running time. Second, hashing is often an inner-loop bottle-neck of data processing. Third, hash functions are often applied in time-critical settings. Thus, even speedups by a multiplicative constant are very impactful. 
 As an example from the Internet,
suppose we want to process packets passing through a high-end
Internet router. Each application only gets very limited time to look
at the packet before it is forwarded.  If it is not done in time,
the information is lost. Since processors and routers use some of the
same technology, we never expect to have more than a few instructions
available. Slowing down the Internet is typically not an option.
The papers of Krishnamurthy et al.\ \cite{KSZC03} and Thorup and Zhang \cite{thorup12kwise} explain in more detail how high
speed hashing is necessary for their Internet traffic analysis.
Incidentally, our hash function is a bit faster than the ones
from \cite{KSZC03,thorup12kwise}, which do not provide Chernoff-style
concentration bounds.

Concrete examples of the utility of our new hash-family may be found in \cite{NoRepetitions}. In \cite{NoRepetitions} it is shown that some classic streaming algorithms enjoy very substantial speed-ups when implemented using tabulation-permutation hashing; namely
the original similarity estimation of Broder~\cite{Broder97onthe}
and the estimation of distinct elements of Bar-Yossef et
al.~\cite{BJKST02}. The strong concentration bounds makes the use of independent repetitions unnecessary, allowing the implementations of the algorithms to be both simpler and faster. We stress that in high-volume streaming algorithms, speed is of critical importance.

Tabulation-permutation hashing builds on top of simple tabulation hashing, and to analyse it, we require a new and better understanding of the behaviour and inherent limitations of simple tabulation, which we proceed to describe. Afterwards we break these limitations by introducing our new powerful tabulation-permutation hashing scheme.

\subsection{Simple Tabulation Hashing}
\emph{Simple tabulation} hashing dates back to 
Zobrist \cite{zobrist70hashing}. In simple tabulation hashing, we consider the key domain $U$ to be of the form $U=\Sigma^c$ for some character alphabet $\Sigma$ and $c=O(1)$, such that each key consists of $c$ characters of $\Sigma$. 
Let $m=2^\ell$ be given and identify $[m]=\{0, 1, \dots, m-1\}$ with $[2]^\ell$.
A simple tabulation hash function, $h\colon \Sigma^c\to [m]$, is then
defined as follows. 
For each $j\in \{1, \dots, c\}$ store a fully random character table $h_j\colon \Sigma\to [m]$ mapping
characters of the alphabet $\Sigma$ to $\ell$-bit hash values.  To evaluate $h$ on a key
$x=(x_1, \dots, x_c)\in \Sigma^c$, we compute 
$h(x) =
h_1(x_1)\oplus \cdots \oplus h_c(x_c),$ where $\oplus$ denotes bitwise
XOR -- an extremely fast operation. With character tables in cache,
this scheme is the fastest known 3-independent hashing
scheme \cite{patrascu12charhash}. We will denote by $u=\abs U$ the size of the key domain, identify $U=\Sigma^c$ with $[u]$, and always assume the size of the alphabet, $\abs \Sigma$, to be a power of two. For instance, we could consider 32-bit keys consisting of four 8-bit characters. For a given computer, the best choice of $c$ in terms of speed is easily determined experimentally once and for all, and is independent of the problems considered.

Let $S\subseteq U$ and consider hashing $n=\abs S$ weighted balls or keys into $m=2^\ell$ bins using a simple tabulation function, $h\colon [u]\to [m]$, in line with the first case mentioned above. We shall prove the theorem below.

\begin{theorem}\label{thm:intro-simple-tab}
    Let $h\colon [u]\to [m]$ be a simple tabulation hash function
with $[u]=\Sigma^c$, $c=O(1)$. Let $S\subseteq [u]$ be given of size $n=\abs{S}$ and assign to each key/ball $x\in S$ a weight $w_x \in [0,1]$. Let $y\in [m]$, and define $X=\sum_{x\in S}w_x\cdot [h(x) = y]$ to be the total weight of the balls hashing to bin $y$. Then for any constant $\gamma>0$, $X$ is strongly concentrated with \errorterm $n/m^\gamma$, where the constants of the asymptotics are determined solely by $c$ and $\gamma$. Furthermore, this concentration is query invariant.
\end{theorem}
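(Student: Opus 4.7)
The plan is to prove strong concentration by recursion on the number of characters $c$. The base case $c=1$ is immediate: since $h=h_1$ is fully random on $\Sigma$ and the keys of $S$ are distinct characters, the contributions $\{w_x\cdot[h(x)=y]\}_{x\in S}$ are independent and bounded, so Bennett's inequality~\eqref{eq:var-chernoff} applies directly and yields the stated bound with no \errorterm.

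For the inductive step, I would condition on the first $c-1$ character tables $h_1,\dots,h_{c-1}$ and write $g(x)=h_1(x_1)\oplus\cdots\oplus h_{c-1}(x_{c-1})$, so that $h(x)=g(x)\oplus h_c(x_c)$. Grouping keys by their last character, set $S_\sigma=\set{x\in S:x_c=\sigma}$ and, for each $z\in[m]$, $W_\sigma(z)=\sum_{x\in S_\sigma,\,g(x)=z}w_x$. The contribution from group $\sigma$ is then $X_\sigma=W_\sigma(y\oplus h_c(\sigma))$, a function of the single random value $h_c(\sigma)$. Since the $h_c(\sigma)$ are mutually independent across $\sigma$, the $X_\sigma$'s are independent conditional on $h_1,\dots,h_{c-1}$, and $X=\sum_{\sigma\in\Sigma}X_\sigma$ is a sum of $|\Sigma|$ independent bounded terms to which Bennett's inequality can be applied.

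The heart of the argument is to pin down a threshold $M_0$ such that the good event $\G=\set{\max_{\sigma,z}W_\sigma(z)\leq M_0}$ both (i) holds except with probability $\leq n/m^\gamma$ and (ii) is strong enough that Bennett, applied to $\sum_\sigma X_\sigma$ on $\G$, reproduces the target tail $\exp(-\Omega(\sigma^2\,\C(t/\sigma^2)))$ for every $t>0$. To bound $\Pr[\G^c]$, I would apply the inductive hypothesis to the $(c-1)$-character simple tabulation hash $g$ together with each weight family $\set{w_x}_{x\in S_\sigma}$: every $W_\sigma(z)$ is itself an instance of the theorem with one fewer character and hence strongly concentrated around its mean $\mu_\sigma/m$. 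A union bound over the $|\Sigma|\cdot m$ pairs $(\sigma,z)$ then yields the required tail, absorbing the union factor by slightly enlarging the exponent used at the inductive level.

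The main obstacle — and the reason previous work~\cite{patrascu12charhash} imposed a restriction $\mu\ll|\Sigma|$ — is to carry the argument through when $\mu$ is much larger than the single-table budget. Naive bounds on $M$ via expectation alone break down in this regime, so one must work throughout with Bennett's variance-form bound, exploiting that the conditional variances of the $X_\sigma$ sum correctly to a fraction of $\Var{X}$ and that variance (unlike expectation) scales appropriately across the recursion; this is precisely why the conclusion is stated in the form~\eqref{eq:var-chernoff} rather than in expectation-form. Finally, query invariance comes almost for free from the XOR structure: conditioning on $h(x_0)=y_0$ is equivalent to shifting one character table by a fixed constant, which leaves the entire decomposition — and therefore the above argument — intact.
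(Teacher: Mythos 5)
Your high-level strategy --- fix the earlier character tables, group the keys, apply Bennett's inequality to the conditionally independent sum, and control a max-deviation threshold $M_0$ via the inductive hypothesis --- is in the right spirit, and your remark that the argument must run through the variance form of Bennett rather than the expectation form is the right instinct. But you leave out the one step that is the real technical heart of the result: controlling the \emph{sum of conditional variances}. After fixing $g$, the quantity $\sum_\sigma \VarC{X_\sigma}{g} = \frac1m\sum_\sigma\sum_z\bigl(W_\sigma(z)-\mu_\sigma/m\bigr)^2$ is a random variable depending on $g$; its expectation is $O(\sigma^2)$, but for Bennett to recover $\exp(-\Omega(\sigma^2\C(t/\sigma^2)))$ you need it to be $O(\sigma^2)$ with probability $1-O(n/m^\gamma)$, and this is exactly what has to be proved, not asserted. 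You dismiss it with ``the conditional variances of the $X_\sigma$ sum correctly to a fraction of $\Var X$,'' but there are realizations of $g$ for which this fails. Establishing the high-probability bound is the content of the paper's Theorem~\ref{thm:sum-squares} (the sum-of-squared-deviations bound), and proving it requires its own induction on $c$ together with the combinatorial moment estimates of Lemmas~\ref{lem:CollBound} and~\ref{lem:moment}. This is where most of the work lives, and it is entirely absent from your sketch.

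There is a second gap in the grouping itself. Grouping by last character gives groups $S_\sigma$ that can be arbitrarily unbalanced --- in the extreme, all of $S$ in a single group --- in which case the relevant threshold $M_0 = \max_{\sigma,z}\abs{W_\sigma(z)-\mu_\sigma/m}$ is $\Theta\bigl(\sqrt{(n/m)\log m}\bigr)$, so $\sigma^2/M_0^2 = O(1/\log m)$ and Bennett no longer delivers a constant-delay Chernoff bound. The paper avoids this by ordering \emph{all} $c\abs\Sigma$ position characters via Lemma~\ref{lem:groups} so that every group $G_i$ satisfies $v'(G_i)\le v'(S)^{1-1/c}v'_\infty(S)^{1/c}$; that balance, combined with the three-case analysis on the size of $v'(S)$ in the proof of Theorem~\ref{thm:simple-concentration-only}, is what keeps $M$ at the right order in every regime of $\mu$. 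Because the balancing order may interleave characters from different positions, there is no clean split into ``the first $c-1$ tables'' and ``the last table,'' and one has to use the martingale form of Bennett (Corollary~\ref{martingalebennettcor}) rather than the i.i.d.\ version you invoke. (Also a small typo: you define $M_0=\max_{\sigma,z}W_\sigma(z)$ when you mean the deviation $\max_{\sigma,z}\abs{W_\sigma(z)-\mu_\sigma/m}$; the former cannot be $O(1)$ once $\mu_\sigma/m$ is large, regardless of concentration.)
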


In \cref{thm:intro-simple-tab}, we note that the expectation, $\mu = \E{X}$, and the variance, $\sigma^2 = \Var{X}$, are the same as if $h$ were a fully random hash function since $h$ is 3-independent. This is true even when conditioning on the hash value of a query key having a specific value.
The bound provided by~\Cref{thm:intro-simple-tab} is therefore the same as the variance based Chernoff
bound \req{eq:var-chernoff} except for a constant delay in the exponential decrease and an
\errorterm of $n/m^\gamma$. Since
$\sigma^2\leq \mu$,~\Cref{thm:intro-simple-tab} also implies the classic
one-sided Chernoff bounds \req{eq:classic-chernoff+}
and \req{eq:classic-chernoff-}, again with the constant delay and the \errorterm
as above, and a leading factor of 2.

P{\v a}tra{\c s}cu and Thorup \cite{patrascu12charhash} proved
an equivalent probability bound, but without weights, and, more importantly, 
with the restriction that the number of bins $m\geq n^{1-1/(2c)}$. In particular, this implies the restriction $\mu\leq |\Sigma|^{1/2}$. Our new
bound gives Chernoff-style concentration with high probability in $n$ for any $m\geq n^{\eps}$, $\eps=\Omega(1)$. Indeed, letting 
$\gamma'=(\gamma+1)/\eps$, the \errorterm becomes $n/m^{\gamma'}\leq 1/n^{\gamma}$. 

However, for small $m$ the error probability $n/m^{\gamma}$ is
prohibitive. For instance, unbiased coin tossing, corresponding to the
case $m=2$, has an \errorterm of $n/2^\gamma$ which is
useless.  In \cref{sec:few-bin-problem}, we will show that it is inherently impossible to get good
concentration bounds using simple tabulation hashing when the number of bins $m$ is
small. To handle all instances, including those with few bins, and to
support much more general Chernoff bounds, we introduce a new hash
function: tabulation-permutation hashing.

\subsection{Tabulation-Permutation Hashing}
We start by defining \emph{tabulation-permutation hashing} from
$\Sigma^c$ to $\Sigma^d$ with $c, d=O(1)$. 
A tabulation-permutation hash function $h\colon \Sigma^c\to \Sigma^d$
is given as a composition, $h=\tau\circ g$, of a simple tabulation
hash function $g\colon \Sigma^c\to \Sigma^d$ and a permutation
$\tau\colon \Sigma^d\to \Sigma^d$. The permutation is a coordinate-wise fully random
permutation: for each $j\in \{1,\ldots,d\}$, pick a uniformly random character
permutation $\tau_j: \Sigma \to \Sigma$. Now, $\tau =
(\tau_1, \dots, \tau_d)$ in the sense that for $z=(z_1, \dots,
z_d)\in \Sigma^d$, $\tau(z)
= \left(\tau_1\left(z_1\right), \dots, \tau_d\left(z_d\right)\right)$. In words, a tabulation-permutation hash function hashes $c$ characters to $d$ characters using simple tabulation, and then randomly permutes each of the $d$ output characters. As is, tabulation-permutation hash functions yield values in $\Sigma^d$, but we will soon see how we can hash to $[m]$ for any $m\in \N$.

If we precompute tables
$T_i\colon \Sigma\to \Sigma^d$, where
\[T_i(z_i) = \left(\overbrace{0, \dots, 0}^{i-1}, \tau_i(z_i), \overbrace{0, \dots, 0}^{d-i}\right), \quad z_i\in \Sigma\textnormal,\] 
then $\tau(z_1, \dots, z_d) = T_1(z_1)\oplus\cdots \oplus
T_d(z_d)$. Thus, $\tau$ admits the same implementation as simple
tabulation, but with a special  distribution on the character tables. If in particular $d\leq c$, the permutation step can be executed at least as fast as
the simple tabulation step.

Our main result is that with tabulation-permutation
hashing, we get high probability Chernoff-style bounds for the third and most general case described in the beginning of the introduction.
\begin{theorem}\label{thm:intro-tab-perm}
     Let $h\colon [u]\to [r]$ be a tabulation-permutation hash
     function with $[u]=\Sigma^c$ and $[r]=\Sigma^d$, $c,d=O(1)$. Let 
      $v\colon [u]\times [r]\to [0, 1]$ be a fixed value
     function that to each key $x\in [u]$ assigns a value $X_x=v(x,
     h(x))\in[0,1]$ depending on the hash value $h(x)$ and define $X=\sum_{x\in
     [u]} X_x$. For any constant $\gamma>0$, $X$ is strongly concentrated with \errorterm $1/u^\gamma$, where the constants of the asymptotics are determined solely by $c$, $d$, and $\gamma$. Furthermore, this concentration is query invariant. %\todo{Why are codomains of value functions $[0, 1]$ in intro/techniques and $[-1, 1]$ in paper?}
\end{theorem}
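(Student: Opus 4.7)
The plan is to exploit the two-layer structure $h = \tau \circ g$ by conditioning on the simple tabulation part $g$ and then analyzing the randomness of the coordinate permutations $\tau_1, \dots, \tau_d$ sequentially, using \cref{thm:intro-simple-tab} as the basic concentration hammer at each level.

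Conditioning on $g$, let $S_y = g^{-1}(y)$ and $V_y(z) = \sum_{x \in S_y} v(x, z) \in [0, |S_y|]$ for $y, z \in \Sigma^d$. Since $\tau$ is a bijection,
\[ X = \sum_{y \in \Sigma^d} V_y(\tau(y)), \]
and a short calculation gives $\E_\tau[X] = |\Sigma|^{-d} \sum_{y, z} V_y(z) = \mu$, so the conditional mean is correct. As a preliminary step, I would apply \cref{thm:intro-simple-tab} to $g$ (with unit weights and weighted variants for each fixed $z$) together with a union bound over the $|\Sigma|^d = O(u)$ bins $y$ to show that, except on an event of probability at most $u^{-(\gamma+1)}$, all $|S_y|$ are close to $u/|\Sigma|^d$ and the $V_y(z)$ cluster around their means. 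On this "balanced" event, the conditional variance of $X$ given $g$ is comparable to $\sigma^2$ up to constants depending only on $c$ and $d$.

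Next, I would expose $\tau_1, \dots, \tau_d$ one at a time and form the Doob martingale $Y_i = \E[X \mid g, \tau_1, \dots, \tau_i]$, with $Y_0 = \mu$ and $Y_d = X$. Fixing $g$ and $\tau_1, \dots, \tau_{i-1}$, the increment $Y_i - Y_{i-1}$ is a function of the single random permutation $\tau_i$ on $\Sigma$. Partitioning bins $y \in \Sigma^d$ by their first $i-1$ and last $d-i$ coordinates recasts the increment as a sum over $y_i \in \Sigma$ in which $\tau_i$ acts by permutation on the $i$-th coordinate; this is a balls-into-bins instance on the small alphabet $\Sigma$. Applying \cref{thm:intro-simple-tab} (either directly, via a negative-association treatment of permutations, or by coupling $\tau_i$ to a simple tabulation hash at the cost of an added error $|\Sigma|^{-\gamma'} \leq u^{-(\gamma+1)}$) yields variance-based concentration of each increment, and a martingale version of Bennett's inequality then combines the $d$ increments to give the desired $\exp(-\Omega(\sigma^2 \C(t/\sigma^2)))$ tail.

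The main obstacle will be the variance accounting across the $d$ layers: one must show that the sum of conditional variances of the $Y_i - Y_{i-1}$ does not exceed $O(\sigma^2)$, and that the $O(d)$ added error terms from repeated applications of \cref{thm:intro-simple-tab} aggregate to at most $1/u^\gamma$ after choosing the internal exponents sufficiently large (using $d = O(1)$). A secondary delicate point is that $\tau_i$ is a uniform random permutation rather than a uniform random function, so care is needed when invoking \cref{thm:intro-simple-tab}, either through a coupling with quantitative error control or through a direct permutation-based argument. Finally, query invariance follows by observing that conditioning on $h(x_0) = y_0$ fixes one value of each $\tau_i$ and one entry of each character table of $g$, which can be absorbed into the conditional analysis from the outset at the expense only of adjusting constants in the asymptotics.
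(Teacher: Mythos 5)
Your proposal takes a genuinely different route from the paper (which first settles $d=1$ and then lifts to general $d$ via a concatenation/``squaring'' theorem rather than exposing the $d$ permutations in a single martingale), but it has two gaps at exactly the points you flag as obstacles, and neither of the fixes you sketch would close them.

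\paragraph{The permutation step.} You cannot invoke \cref{thm:intro-simple-tab} on the increment in $\tau_i$: after fixing $g$ and $\tau_1,\dots,\tau_{i-1}$, the increment is a function of a uniformly random \emph{permutation} of $\Sigma$, not of a simple tabulation function. The coupling you propose --- replace $\tau_i$ by a random function at a small added-error cost --- cannot work, because a random function on $\Sigma$ is at total-variation distance close to $1$ from a random permutation, and more importantly the substitution destroys the very variance savings that motivate using permutations at all: as the paper notes in \cref{sec:techperm}, swapping the permutation for a random function inflates the conditional variance by a factor of order $n/|\Sigma|$, which is polynomial in $u$, not a negligible perturbation. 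What is actually required is a Bennett-type inequality tailored to permutations (the paper's \cref{lemma:helpful}), whose proof exploits the exact-partitioning and negative-dependence structure of $\tau$; ``a negative-association treatment'' is the right keyword but is essentially the whole lemma, not a one-line fix.

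\paragraph{The conditional-variance bound.} Your preliminary ``balanced event'' step --- applying \cref{thm:intro-simple-tab} to each $V_y(z)$ and union-bounding over bins --- is not strong enough. A per-pair deviation bound at error probability $p$ gives $|V_y(z) - \E V_y(z)| \lesssim \sigma_{y,z}\sqrt{\log(1/p)}$, and summing the resulting squared deviations over all $(y,z)$ yields $\sum_{y,z}(V_y(z)-\E V_y(z))^2 = O\left(\log(1/p)\sum_{x,z}v(x,z)^2\right)$, an extra $\Theta(\log u)$ factor on what you need. Feeding that into the martingale Bennett gives a tail of the form $\exp(-\Omega(\sigma^2\,\C(t/\sigma^2)/\log u))$, which is strictly weaker than the theorem claims. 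What the paper actually uses here is the sum-of-squares bound, part (2) of \cref{thm:simpleConcentration}: a \emph{second-moment} estimate
\[
\Pr\Bigl[\textstyle\sum_{j}(V_j-\mu)^2 > D_{\gamma,c}\sum_{x,k}v(x,k)^2\Bigr] = O(n/m^\gamma),
\]
proved by a direct combinatorial moment computation (\cref{lem:moment}, \cref{lem:CollBound}) rather than by union-bounding first-order tails. Without some analogue of this you cannot certify $\sum_i \VarC{Y_i - Y_{i-1}}{\mathcal F_{i-1}} = O(\sigma^2)$ with high probability, which you correctly identify as the crux but then leave unaddressed.

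\paragraph{Comparison with the paper's route.} The paper avoids your $d$-layer martingale altogether: it proves the $d=1$ case (\cref{thm:valuefunctions}) by combining the sum-of-squares bound on $g$ with the permutation lemma, then handles general $d$ by the concatenation result (\cref{thm:extendingCodomain}), splitting $\Sigma^d$ into two halves and applying induction on $d$ a constant $O(\log d)$ number of times. This modularization sidesteps the cross-layer variance accounting entirely: each invocation of \cref{thm:extendingCodomain} treats the two halves as black-box hash functions with concentration, and the constants only degrade by a bounded factor per level. If you want to pursue your direct martingale instead, you would need to independently prove both missing ingredients --- the permutation Bennett lemma and the sum-of-squares bound --- at which point the remaining cross-layer bookkeeping is likely more delicate than the paper's induction.
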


Tabulation-permutation hashing inherits the 3-independence
of simple tabulation, so as in~\Cref{thm:intro-simple-tab},
 $\mu = \E{X}$ and $\sigma^2 = \Var{X}$ have
exactly the same values as if $h$ were a fully-random hash function. Again, this is true even
when conditioning on the hash value of a query key having a specific value.

Tabulation-permutation hashing allows us to hash into $m$ bins for any $m\in \N$ (not necessarily a power of two) preserving the strong concentration from~\cref{thm:intro-tab-perm}. To do so, simply define the hash function $h^m\colon [u]\to [m]$ by $h^m(x)=h(x)\bmod m$. 
Relating back to~\Cref{thm:intro-simple-tab}, consider a set
$S\subseteq U$ of $n$ balls where each ball $x\in S$ has a weight $w_x\in[0,1]$ and balls $x$ outside $S$ are defined to have weight $w_x=0$.
To measure the total weight of the balls landing in a given bin $y \in [m]$, we define the value
function $v(x,z)=w_x \cdot[z \bmod m=y]$. Then
\[X= \sum_{x\in [u]} v(x,h(x))=\sum_{x\in S} w_x \cdot [h^m(x)=y]\]
is exactly the desired quantity and we get the concentration bound from~\Cref{thm:intro-tab-perm}.
Then the big advantage of tabulation-permutation hashing over simple tabulation hashing is that it reduces the \errorterm from
$n/m^\gamma$ of~\Cref{thm:intro-simple-tab} to the $1/u^\gamma$ of~\Cref{thm:intro-tab-perm}, where $u$ is the 
size of the key universe. Thus, with tabulation-permutation hashing, we actually get Chernoff bounds with 
high probability regardless of the number of bins.

P{\v a}tra{\c s}cu and Thorup \cite{PT13:twist} introduced twisted
tabulation that like our tabulation-permutation achieved Chernoff-style
concentration bounds with a general value function $v$. Their bounds are
equivalent to those of Theorem~\ref{thm:intro-tab-perm}, but only under the 
restriction $\mu\leq \abs\Sigma^{1-\Omega(1)}$.  To understand
how serious this restriction is, consider again tossing an unbiased coin for each key $x$ in a set $S\subseteq [u]$, corresponding to the case $m=2$ and $\mu=|S|/2$.
With the restriction from  \cite{PT13:twist}, we can only
handle $|S|\leq 2\abs\Sigma^{1-\Omega(1)}$, but recall that $\Sigma$ is chosen
small enough for character tables to fit in fast cache, so this
rules out any moderately large data set. We are going to show
that for certain sets $S$, twisted tabulation has the same problems as simple tabulation when hashing to few bins. This implies that the restrictions
from  \cite{PT13:twist} cannot be lifted with a better analysis.

P{\v a}tra{\c s}cu and Thorup \cite{PT13:twist} were acutely aware
of how prohibitive the restriction $\mu\leq \abs\Sigma^{1-\Omega(1)}$ is. For
unbounded $\mu$, they proved a weaker bound; namely that with twisted tabulation
hashing, $X=\mu\pm O(\sigma (\log u)^{c+1})$ 
with probability $1-u^{-\gamma}$ for any $\gamma=O(1)$. 
With our tabulation-permutation hashing, we
get $X=\mu\pm O(\sigma(\log u)^{1/2})$ with the same high probability, $1-u^{-\gamma}$.
Within a constant factor on the deviation, our high probability bound is as
good as with fully-random hashing. 

More related work, including Siegel's~\cite{siegel04hash} and Thorup's~\cite{Tho13:simple-simple} highly independent hashing will be discussed in~\Cref{sec:relatedwork}. 

\subsection{Tabulation-1Permutation}
Above we introduced tabulation-permutation hashing which yields Chernoff-style
bounds with an arbitrary value function. This is the same general
scenario as was studied for twisted tabulation in \cite{PT13:twist}. However, for almost all applications we are aware of, we only need the generality of the second case presented at the beginning of the introduction. Recall that in this case we are only interested in the total weight of the balls hashing to a certain interval. As it turns out, a significant simplification of tabulation-permutation hashing suffices to achieve strong concentration bounds.  We call this simplification \emph{tabulation-1permutation}. Tabulation-permutation hashing randomly permutes each of the $d$ output characters of a simple tabulation function $g\colon \Sigma^c\to\Sigma^d$. Instead, tabulation-1permutation only permutes the most significant character. 
 
More precisely,
a tabulation-1permutation hash function $h\colon \Sigma^c\to \Sigma^d$ is a composition, $h=\tau\circ g$, of a simple tabulation function, $g\colon \Sigma^c\to\Sigma^d$, and a random permutation, $\tau\colon \Sigma^d\to\Sigma^d$, of the most significant character, $\tau(z_1, \dots, z_d) = (\tau_1(z_1), z_2, \dots, z_d)$ for a random character permutation $\tau_1\colon \Sigma\to\Sigma$.

To simplify the implementation of the hash function and speed up its evaluation, we can
precompute a table
$T\colon \Sigma\to \Sigma^d$ such that for $z_1\in \Sigma$,
\[T(z_1) = \left(z_1\xor \tau_1(z_1), \overbrace{0, \dots, 0}^{d-1}\right).\]
Then if $g(x) = z = (z_1, \dots, z_d)$, $h(x)= z\xor T(z_1)$.

This simplified scheme, needing only $c+1$ character lookups, is
powerful enough for concentration within an arbitrary interval. %Because we use a general reduction by P{\v a}tra{\c s}cu and Thorup~\cite{patrascu12charhash}, our bound is based on the expectation as in~\eqref{eq:classic-chernoff+} and~\eqref{eq:classic-chernoff-}, which is still powerful enough for most applications % including those we will discuss for streaming in~\Cref{sec:streaming}.

\begin{theorem}\label{thm:intro-tab-1perm}
     Let $h\colon [u]\to [r]$ be a tabulation-1permutation hash
     function with $[u]=\Sigma^c$ and $[r]=\Sigma^d$, $c,d=O(1)$. Consider
a key/ball set $S\subseteq [u]$ of size $n=\abs{S}$ where each ball $x\in S$ 
is assigned a weight $w_x \in [0,1]$.  Choose arbitrary hash values $y_1, y_2\in [r]$ with
$y_1\leq y_2$.  Define $X=\sum_{x\in
S}w_x\cdot [y_1\leq h(x)< y_2]$ to be the total weight of balls hashing to
the
interval $[y_1, y_2)$. Then for any constant
$\gamma>0$, $X$ is strongly concentrated with \errorterm $1/u^\gamma$, where the constants of the asymptotics are determined solely by $c$, $d$, and $\gamma$. Furthermore, this concentration is query invariant. 
\end{theorem}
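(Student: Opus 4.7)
The plan is to reduce the statement to Theorem~\ref{thm:intro-simple-tab} applied to the inner simple tabulation $g$, together with a sampling-without-replacement argument for the top-character permutation $\tau_1$. Write $y_1=(a_L,\beta_L)$ and $y_2=(a_R,\beta_R)$, splitting each hash value into its most significant character in $\Sigma$ and its $(d-1)$-character tail in $\Sigma^{d-1}$, and let $g_1$ and $g_{\geq 2}$ denote the first and the last $d-1$ output coordinates of $g$. For each $a\in\Sigma$ define
\[ W_a=\!\!\sum_{x\in S:\,g_1(x)=a}\!\!w_x,\qquad W_a^L=\!\!\sum_{\substack{x\in S:\,g_1(x)=a\\ g_{\geq 2}(x)\geq \beta_L}}\!\!w_x,\qquad W_a^R=\!\!\sum_{\substack{x\in S:\,g_1(x)=a\\ g_{\geq 2}(x)< \beta_R}}\!\!w_x. \]
Since $h(x)=(\tau_1(g_1(x)),\,g_{\geq 2}(x))$, the variable $X$ decomposes as $X=X_F+X_L+X_R$, where $X_F=\sum_{a\in\Sigma} W_a\indicator{a_L<\tau_1(a)<a_R}$ is the weight in the ``full'' super-bins and $X_L=W_{\tau_1^{-1}(a_L)}^{L}$, $X_R=W_{\tau_1^{-1}(a_R)}^{R}$ are the two ragged edges.

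First I would condition on $g$ and show that, except on a failure event of probability at most $1/u^\gamma$, the family $\set{W_a, W_a^L, W_a^R}_{a\in\Sigma}$ is \emph{well behaved}: every $W_a$ is strongly concentrated around $W/\abs\Sigma$ (where $W=\sum_x w_x$), and every $W_a^L, W_a^R$ is strongly concentrated around its own mean. For the $W_a$'s this is a direct application of Theorem~\ref{thm:intro-simple-tab} to $g_1\colon\Sigma^c\to\Sigma$. For the tail-refined weights one uses the same theorem together with the independence of the character tables defining $g_1$ and $g_{\geq 2}$, applied inside each super-bin $S_a=\set{x:g_1(x)=a}$ to the interval of $\Sigma^{d-1}$-values $\geq\beta_L$ or $<\beta_R$. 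A union bound over the $\abs\Sigma\leq u^{1/c}$ indices $a$ is absorbed into the constant $\gamma$ of Theorem~\ref{thm:intro-simple-tab}, yielding in particular $\max_a W_a, \max_a W_a^L, \max_a W_a^R = O(W/\abs\Sigma)+O(\sqrt{\sigma^2\log u})$.

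Next, on the good event I would analyse $X_F$ as a sampling-without-replacement statistic: the multiset $\set{W_a}_{a\in\Sigma}$ is fixed by $g$, and $\tau_1$ selects a uniformly random $k=a_R-a_L-1$ subset whose corresponding $W_a$-values are summed. A Bernstein-type inequality for sampling without replacement (Serfling or Bardenet--Maillard) then gives $X_F$ a Bennett-style tail of the form $\exp(-\Omega(\sigma^2\C(t/\sigma^2)))$. The two edge terms $X_L, X_R$ are deterministically bounded by $\max_a W_a^L$ and $\max_a W_a^R$, which on the good event are absorbed into the allowed deviation. Query invariance follows because conditioning on $h(x_0)=y_0$ only fixes the value of $g_{\geq 2}(x_0)$ and the value of $\tau_1$ at one point, leaving $g$ on the remaining keys query-invariantly simple-tabulation and $\tau_1$ uniformly random subject to that single constraint.

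The main obstacle is calibrating the sampling-without-replacement bound so that its tail is governed by $\sigma^2=\Var X$ itself rather than by some ad-hoc functional of $\set{W_a}_{a\in\Sigma}$. When $k$ is small, the variance of $X$ is concentrated in a few dominant super-bins and in the ragged edges; when $k=\Theta(\abs\Sigma)$, it is spread uniformly across $\Sigma$, and the Bennett-type bound has to match both regimes seamlessly. Equally delicate is extending Theorem~\ref{thm:intro-simple-tab} from single bins to the tail-refined interval sums $W_a^L, W_a^R$ inside each super-bin, while keeping the joint failure event across all $a\in\Sigma$ within the target $1/u^\gamma$.
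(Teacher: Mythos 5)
Your decomposition by super-bins ($X = X_F + X_L + X_R$) is a genuinely different route from the paper's proof, which instead partitions the interval $I$ into $O(\log d) = O(1)$ dyadic pieces — one ``coarse'' piece determined solely by the permuted top character (handled via Theorem~\ref{thm:intro-tab-perm}) and $O(\log d)$ ``edge'' pieces, each a bounded-support value function for a truncated simple tabulation hash, handled via Theorem~\ref{thm:simpleConcentration}. The paper thereby needs no sampling-without-replacement machinery, no control on $\max_a W_a$, and no conditioning event; a simple union bound over $O(1)$ pieces finishes the proof. Your plan, if made to work, would trade the paper's recursion on bit-prefixes for a single Serfling/Bardenet--Maillard step, which is arguably conceptually appealing — but it has genuine gaps that the paper's route avoids by construction.

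The most serious gap is one you flag but do not resolve: getting a Bennett tail for $X_F$ in terms of $\sigma^2 = \Var X$ via sampling without replacement requires a bound on $\sum_a (W_a - W/|\Sigma|)^2$, since that is what governs the finite-population variance. Taking a union bound over $a$ of the individual concentration of each $W_a$ gives only $\sum_a (W_a - W/|\Sigma|)^2 = O(W \log u)$, which is a $\log u$ factor worse than the needed $O(\sum_x w_x^2)$. The right tool here is precisely the second part of Theorem~\ref{thm:simpleConcentration} (the sum-of-squares bound, applied with $|Q|=1$), which you never invoke. Without it, the resulting tail is $\exp(-\Omega(\sigma^2 \C(t/\sigma^2)/\log u))$, not strong concentration.

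The second gap is the edge terms. First, your claim that $\max_a W_a, W_a^L, W_a^R = O(W/|\Sigma|) + O(\sqrt{\sigma^2 \log u})$ is not justified: the variance of a single $W_a$ is up to $\sum_x w_x^2/|\Sigma|$, which can greatly exceed $\sigma^2 = \sum_x w_x^2 \cdot \tfrac{|I|}{r}(1 - \tfrac{|I|}{r})$ when $|I|$ is very small or very close to $r$. Second, even granting a bound like $\max_a W_a^L = O(W/|\Sigma|) + \dots$, absorbing $X_L$ ``deterministically'' into the allowed deviation fails whenever $n \gg |\Sigma|^2$: then $W/|\Sigma| \gg \sigma\sqrt{\log u}$, so this crude bound swamps the target deviation. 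What is needed is concentration of $X_L$ around its mean, and since $X_L$ is a single uniformly random draw from the multiset $\{W_a^L\}_a$, the only way to get an exponential tail is to argue that the individual $W_a^L$'s are simultaneously concentrated — but $W_a^L$ involves the interval $\{\beta \ge \beta_L\} \subseteq \Sigma^{d-1}$, which does \emph{not} have bounded support for Theorem~\ref{thm:simpleConcentration}, so you would in effect have to recurse on the tail, exactly the step the paper makes explicit.

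Finally, you do not address the regime $|I| > r/2$, where $\sigma^2$ is governed by $1 - |I|/r$ and the naive decomposition degrades; the paper handles it by passing to the complementary interval.
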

One application of Theorem~\ref{thm:intro-tab-1perm} is in
the following sampling scenario: We set $y_1=0$, and sample all keys
with $h(x)<y_2$. Each key is then sampled with probability $y_2/r$,
and~\Cref{thm:intro-tab-1perm} gives concentration on the number of samples.
In \cite{NoRepetitions} this is used for more efficient implementations of streaming algorithms.

Another application is efficiently hashing into
an arbitrary number $m\leq r$ of bins. We previously discussed using
hash values modulo $m$, but a general mod-operation is often quite slow.
Instead we can think of hash values as fractions $h(x)/r\in [0,1)$. Multiplying
by $m$, we get a value in $[0,m)$, and the bin index is then obtained by rounding down to the nearest integer. This implementation is very efficient because $r$ is
a power of two, $r=2^b$, so the rounding is obtained by a right-shift by $b$ bits. To hash a key
$x$ to $[m]$, we simply compute $h^m(x)=(h(x)*m)\,\texttt{>\/>} \, b$. 
Then $x$ hashes to bin $d\in[m]$ if and only if 
$d\in[y_1,y_2)\subseteq[r]$ where $y_1=\lfloor rd/m\rfloor$ and $y_2=\lfloor r(d+1)/m\rfloor$, so the number of
keys hashing to a bin is concentrated as in Theorem~\ref{thm:intro-tab-1perm}. Moreover,
$h^m$ uses only $c+1$ character lookups and a single
multiplication in addition to some very fast shifts and bit-wise Boolean operations.

\subsection{Subpolynomial Error Probabilities}
In Theorem~\ref{thm:intro-tab-perm} and~\ref{thm:intro-tab-1perm}, we have $\Pr[|X-\mu|\geq t] =O(\exp(-\Omega(\sigma^2\cC(t/\sigma^2))))+1/u^\gamma$
which holds for any fixed $\gamma$. The value of $\gamma$ affects the
constant hidden in the $\Omega$-notation delaying the exponential decrease. In~\Cref{sec:few-bin-problem}, we will show that the
same bound does not hold if $\gamma$ is replaced by any slow-growing
but unbounded function. Nevertheless, it follows from our analysis that for  every $\alpha(u) = \omega(1)$ there exists $\beta(u)= \omega(1)$ such that whenever $\exp(-\sigma^2\cC(t/\sigma^2))<1/u^{\alpha(u)}$, $\Pr[|X-\mu|\geq
t]\leq 1/u^{\beta(u)}$.

\subsection{Generic Remarks on Universe Reduction and Amount of Randomness}\label{sec:unired} The following observations are fairly standard in the literature. 
Suppose we wish to hash a set of keys $S$ belonging to some universe $\cU$. The universe may be so large compared to $S$ that it is not efficient to directly implement a theoretically powerful hashing scheme like tabulation-permutation hashing. A standard first step is to perform a {\em universe reduction}, mapping
$\cU$ randomly to ``signatures'' in $[u]=\{0,1,\ldots,u-1\}$, where
$u=n^{O(1)}$, e.g.\ $u=n^3$, so that no two keys from $S$ are expected
to get the same signature \cite{carter77universal}. As the only theoretical property required for the universe reduction is a low collision probability, this step can be implemented using very simple hash functions as described in~\cite{Thorup2015HighSH}. In this paper, we
generally assume that this universe reduction has already been done, if
needed, hence that we only need to deal with keys from a
universe $[u]$ of size polynomial in $n$. For any small constant $\eps>0$ we may thus pick $c=O(1/\eps)$ such that the space used for our
hash tables, $\Theta(\abs\Sigma)$, is $O(n^\eps)$. 
Practically speaking, this justifies focusing on the hashing of $32$- and $64$-bit keys. 

When we defined simple tabulation above, we said the character tables 
were fully random. However, for the all the bounds in this paper,
it would suffice if they were populated with a $O(\log
u)$-independent pseudo-random number generator (PNG), so we only need
a seed of $O(\log u)$ random words to be shared among all applications
who want to use the same simple tabulation hash function.  Then, as a
preprocesing for fast hashing, each application can locally fill
the character tables in $O(|\Sigma|)$ time
\cite{christiani14prg}. Likewise, for our tabulation permutation
hashing, our bounds only require a
$O(\log u)$-independent PNG to generate the permutations. The
basic point here is that tabulation based hashing does
not need a lot of randomness to fill the tables, but only space
to store the tables as needed for the fast computation of hash values.

\subsection{Techniques}

\label{sec:introduction_techniques}The paper relies on three main technical insights to establish the concentration inequality for tabulation-permutation hashing of \cref{thm:intro-tab-perm}. We shall here describe each of these ideas and argue that each is in fact necessary towards an efficient hash function with strong concentration bounds.

\subsubsection{Improved Analysis of Simple Tabulation}\label{sec:techsimple}
The first step towards proving \cref{thm:intro-tab-perm} is to better understand the distribution of simple tabulation hashing. We describe below how an extensive combinatorial analysis makes it possible to prove a  generalised version of \cref{thm:intro-simple-tab}.

To describe the main idea of this technical contribution, we must first introduce some ideas from previous work in the area. This will also serve to highlight the inherent limitations of previous approaches. A simplified account is the following. Let $h\colon \Sigma^c\to[m]$ be a simple tabulation hash function, let $y\in [m]$ be given, and for some subset of keys $S\subseteq \Sigma^c$, let $X=\sum_{x\in S}[h(x)=y]$ be the random variable denoting the number of elements $x\in S$ that have hash value $h(x)=y$. Our goal is to bound the deviation of $X$ from its mean $\mu=\abs S/m$. 
We first note that picking a random simple tabulation hash function $h: \Sigma^c \to[m]$ amounts to filling the $c$ character tables, each of size $\Sigma$, with uniformly random hash values. Thus, picking a simple tabulation hash function $h:\Sigma^c \to [m]$ corresponds to picking a uniformly random hash function $h\colon [c] \times \Sigma \to [m]$. We call $[c] \times \Sigma$ the set of \emph{position characters}. Viewing a key $x=(x_1,\dots,x_c)\in \Sigma^c$ as a set of position characters, $x=\{(1,x_1),\dots,(c,x_c)\}$, and slightly abusing notation, it then holds that $h(x)=\bigoplus_{\alpha \in x}h(\alpha)$.
Now let $\alpha_1, \dots, \alpha_{r}$ be a (for the sake of the proof) well-chosen ordering of the position characters. For each $k \in [r+1]$, we define the random variable $X_k=\EC{X}{h(\alpha_1), \dots, h(\alpha_k)}$, where $h(\alpha_i)$ is the value of the entry of the lookup table of $h$ corresponding to $\alpha_i$. The process $(X_k)_{k=0}^{r}$ is then a martingale. We can view this as revealing the lookup table of $h$ one entry at a time and adjusting our expectation of the outcome of $X$ accordingly. Defining the martingale difference $Y_k = X_k-X_{k-1}$, we can express $X$ as a sum $X= \mu + \sum_{k=1}^{c\cdot \abs \Sigma}Y_k$. Previous work has then bounded the sum using a Chernoff inequality for martingales as follows. Due to the nature of the ordering of $\{\alpha_i\}_{i=1}^r$, we can find $M>0$ such that with high probability, $\abs {Y_k}\leq M$ for every $k$. Then conditioned on each of the $Y_k$s being bounded, $X$ satisfies the Chernoff bounds of \eqref{eq:classic-chernoff+} and \eqref{eq:classic-chernoff-} except the exponent is divided by $M$. As long as the expectation, $\mu$, satisfies $\mu = O( \abs\Sigma)$, it is possible\footnote{In \cite{patrascu12charhash}, the actual analysis of simple tabulation using this approach achieves $\mu=O(\sqrt{\abs\Sigma})$.} that $M=O(1)$, yielding Chernoff bounds with a constant in the delay of the exponential decrease.
 However, since there are only $c\cdot \abs \Sigma$ variables, $Y_k$, it is clear that $M\geq \mu/(c\cdot \abs\Sigma)$. Thus, whenever $\mu = \omega(\abs\Sigma)$, the delay of the exponential decrease is super-constant, meaning that we do not get asymptotically tight Chernoff-style bounds. This obstacle has been an inherent issue with the previous techniques in analysing both simple tabulation \cite{patrascu12charhash} as well as twisted tabulation \cite{PT13:twist}. Being unable to bound anything beyond the absolute deviation of each variable $Y_k$, it is impossible to get good concentration bounds for large expectations, $\mu$.

Going beyond the above limitation, we dispense with the idea of bounding absolute deviations and instead  
 bound the sum of variances, $\sigma^2=\sum_{k=1}^{c\cdot \abs \Sigma}\Var {Y_k}$. This sum has a combinatorial interpretation relating to the number of collisions of hash keys, i.e., the number of pairs $y_1, y_2\in \Sigma^c$ with $h(y_1)=h(y_2)$. 

An extensive combinatorial analysis of simple tabulation hashing yields high-probability bounds on the sum of variances that is tight up to constant factors. This is key in establishing an induction that allows us to prove \cref{thm:intro-simple-tab}. 
Complementing our improved bounds, we will show that simple tabulation hashing inherently does not support Chernoff-style concentration bounds for small $m$.

\subsubsection{Permuting the Hash Range}\label{sec:techperm}
 Our next step is to consider the hash function $h = \tau \circ g\colon \Sigma^c\to \Sigma$ where $g\colon \Sigma^c\to \Sigma$ is a simple tabulation hash function and $\tau\colon \Sigma \to \Sigma$ is a uniformly random permutation. Our goal is to show that $h$ provides good concentration bounds for any possible value function. To showcase our approach, we consider the example of hashing to some small set, $[m]$, of bins, e.g., with $m=2$ as in our coin tossing example. This can be done using the hash function $h^m \colon \Sigma^c \to [m]$ defined by $h^m(x)=(h(x) \mod{m})$. For simplicity we assume that $m$ is a power of two, or equivalently, that $m$ divides $|\Sigma|$. We note that the case of small $m$ was exactly the case that could not be handled with simple tabulation hashing alone.

Let us look at the individual steps of $h^m$. First, we use simple tabulation mapping into the ``character bins'', $\Sigma$. The number of balls in any given character bin is nicely concentrated, but only because $|\Sigma|$ is large. Next, perform a permutation followed by the mod $m$ operation. The last two steps correspond to the way we would deal a deck of $|\Sigma|$ cards into $m$ hands. The cards are shuffled by a random permutation, then dealt to the $m$ players one card at a time in cyclic order.
The end result is that each of the final $m$ bins is assigned exactly $|\Sigma|/m$ random character bins. An important point is now that because the partitioning is \emph{exact}, the error in the number of balls in a final bin stems solely from the errors in the $|\Sigma|/m$ character bins, and because the partitioning is \emph{random}, we expect the positive and negative errors to cancel out nicely. 
The analysis, which is far from trivial, requires much more than these properties. For example, we also need the bound described in~\cref{sec:techsimple} on the sum of variances. This bound ensures that not only is the number of balls in the individual character bins nicely concentrated around the mean, but moreover, there is only a small number of character bins for which the error is large.
That these things combine to yield strong concentration, not only in the  specific example above, but for general value functions as in \cref{thm:intro-tab-perm}, is quite magical.
\\ \\
We finish the discussion by mentioning two approaches that do not work and highlight how a permutation solves the issues of these strategies.

First, one may ask why we need the permutation at all. After all, the mod $m$ operation also partitions the $|\Sigma|$ character bins into groups of the same size, $|\Sigma|/m$. The issue is that while a simple tabulation hash function, $g:\Sigma^c \to \Sigma$, has good concentration in each of the individual character bins, the $|\Sigma|/m$ character bins being picked out by the mod $m$ operation constitute a very structured subset of $\Sigma$, and the errors from this set of bins could be highly correlated. We indeed show that the structure of simple tabulation causes this to happen for certain sets of keys, both theoretically (\cref{sec:few-bin-problem}) and experimentally (\cref{sec:experiments}).

Second, the reader may wonder why we use a permutation, $\tau \colon \Sigma \to \Sigma$, instead of a random hash function 
as in double tabulation \cite{Tho13:simple-simple}. In terms of the card dealing analogy, this would correspond to throwing the $|\Sigma|$ cards at the $m$ astonished card players one at a time with a random choice for each card, not guaranteeing that the players each get the same number of cards. And this is exactly the issue. Using a fully random hash function $\tau'$, we incur an extra error stemming from $\tau'$ distributing the $|\Sigma|$ character bins unevenly into the final bins. This is manifested in the variance of the number of balls hashing to a specific bin: Take again the coin tossing example with $n\geq |\Sigma|$ balls being distributed into $m=2$ bins. With a permutation $\tau$ the hash function becomes $2$-independent, so the variance is the same as in the fully random setting, $n/4$. Now even if the simple tabulation hash function, $g$, distributes the $n$ keys into the character bins evenly, with exactly $n/\Sigma$ keys in each, with a fully random hash function, $\tau'$, the variance becomes $(n/|\Sigma|)^2\cdot|\Sigma|/4=n^2/(4|\Sigma|)$, a factor of $n/|\Sigma|$ higher.

\subsubsection{Squaring the Hash Range}\label{sec:techsquaring}
The last piece of the puzzle is a trick to extend the range of a hash function satisfying Chernoff-style bounds. We wish to construct a hash function $h\colon \Sigma^c\to [m]$ satisfying Chernoff-style bounds for $m$ arbitrarily large as in \cref{thm:intro-tab-perm}. At first sight, the trick of the previous subsection would appear to suffice for the purpose. However, if we let $g=\tau\circ h $ be the composition of a simple tabulation hash function $h\colon \Sigma^c\to [m]$ and $\tau$ a random permutation of $[m]$, we run into trouble if for instance $[m]=\Sigma^c$. In this case, a random permutation of $[m]$ would require space equal to that of a fully random function $f\colon \Sigma^c\to [m]$, but the whole point of hashing is to use less space. Hence, we instead prove the following. Let $a\colon C\to D$ and $b\colon C\to D$ be two independent hash functions satisfying Chernoff-style bounds for general value functions. Then this property is preserved up to constant factors under ``concatenation'', i.e., if we let $c\colon C\to D^2$ be given by $c(x)=(a(x),b(x))$, then $c$ is also a hash function satisfying Chernoff-style bounds for general value functions, albeit with a slightly worse constant delay in the exponential decrease than $a$ and $b$. Thus, this technique allows us to ``square'' the range of a hash function.

With this at hand, let $h_1, h_2\colon \Sigma^c\to \Sigma$ be defined as $h_1=\tau_1\circ g_1$ and $h_2 = \tau_2\circ g_2$, where $g_1, g_2\colon \Sigma^c\to\Sigma$ are simple tabulation hash functions and $\tau_1, \tau_2\colon \Sigma\to\Sigma$ are random permutations. Then the concatenation $h\colon \Sigma^c\to\Sigma^2$ of $h_1$ and $h_2$ can be considered a composition of a simple tabulation function $g\colon \Sigma^c\to \Sigma^2$ given by $g(x)=(g_1(x), g_2(x))$ and a coordinate-wise permutation $\tau=(\tau_1, \tau_2)\colon \Sigma^2\to\Sigma^2$, where the latter is given by $\tau(x_1, x_2) = (\tau_1(x_1), \tau_2(x_2)), x_1, x_2\in \Sigma$. Applying our composition result, gives that $g$ also satisfies Chernoff-style bounds. Repeating this procedure $\lceil\log(d) \rceil=O(1)$ times, yields the desired concentration bound for tabulation-permutation hashing $h\colon\Sigma^c\to\Sigma^d$ described in \cref{thm:intro-tab-perm}.

\subsection{Related Work -- Theoretical and Experimental Comparisons}\label{sec:relatedwork}
In this section, we shall compare the performance of tabulation-permutation and tabulation-1permutation hashing with other related results. Our comparisons are both theoretical and empirical.
Our goal in this paper is fast constant-time hashing having strong concentration 
bounds with high probability, i.e., bounds of the form
$$
\Pr[|X-\mu|\geq t] \le 2\exp(-\Omega(\sigma^2\cC(t/\sigma^2)))+u^{-\gamma},
$$
%Here $X=\sum_{x \in S}X_x$ is a sum of random variables where $X_x$ depends only on the hash value $h(x)$ of the key $x$, $\mu=\E{X}$, and $\sigma^2=\Var{X}$.  
as in~\Cref{def:strongly-concentrated,thm:intro-tab-perm,thm:intro-tab-1perm}, or possibly with $\sigma^2$ replaced by $\mu\geq \sigma^2$.
Theoretically, we will only compare with other hashing schemes that are relevant to this goal.
In doing so, we distinguish between the hash functions that achieve Chernoff-style bounds with restrictions on the expected value and those that, like our new hash functions, do so without such restrictions, which is what we want for all possible input.
Empirically, we shall compare the practical evaluation time of tabulation-permutation and permutation-1permutation to the fastest commonly used hash functions and to hash functions with similar theoretical guarantees.
 A major goal of algorithmic analysis is to understand the theoretical behavior of simple algorithms that work well in practice, providing them with good theoretical guarantees such as worst-case behavior. For instance, one may recall Knuth's analysis of linear probing \cite{knuth63linprobe}, showing that this very practical algorithm has strong theoretical guarantees. In a similar vein, we not only show that the hashing schemes of tabulation-permutation and tabulation-1permutation have strong theoretical guarantees, we also perform experiments, summarized in \Cref{tab:speed}, demonstrating that in practice they are comparable in speed to some of the most efficient hash functions in use, none of which have similar concentration guarantees.  Thus, with our new hash functions, hashing with strong theoretical concentration guarantees is suddenly feasible for time-critical applications. 

 \begin{table}
\begin{center}
    \begin{tabular}{|l | r | r | r | r |}
        \cline{2-5}
        \multicolumn{1}{l|}{} & \multicolumn{4}{|c|}{Running time (ms)} \\
        \cline{2-5}
        \multicolumn{1}{l|}{} \multirow{2}{*}{}&\multicolumn{2}{|c|}{Computer 1} & \multicolumn{2}{|c|}{Computer 2} \\
        \hline Hash function                  & 32 bits & 64 bits & 32 bits & 64 bits \\
        \hline
        \emph{Multiply-Shift}                 & 4.2     &  7.5   & 23.0     & 36.5    \\
        \emph{2-Independent PolyHash}         & 14.8    & 20.0   & 72.2     & 107.3   \\
        \emph{Simple Tabulation}              & 13.7    & 17.8   & 53.1     & 55.9    \\
        \emph{Twisted Tabulation}             & 17.2    & 26.1   & 65.6     & 92.5    \\ 
        \emph{Mixed Tabulation}               & 28.6    & 68.1   & 120.1    & 236.6   \\
        \hline
        \textbf{Tabulation-1Permutation}      & 16.0    & 19.3   & 63.8     & 67.7    \\
        \textbf{Tabulation-Permutation}       & 27.3    & 43.2   & 118.1    & 123.6   \\ \hline
        Double Tabulation                     & 1130.1  & --     & 3704.1   & --      \\
        ``Random'' (100-Independent PolyHash) & 2436.9  & 3356.8 & 7416.8   & 11352.6 \\
        \hline
    \end{tabular}
    \caption{The time for different hash functions to hash $10^7$ keys of  
    length 32 bits and 64 bits, respectively, to ranges of size 32 bits and 64 bits. The experiment was carried out on two computers. The hash functions written in italics are those without general Chernoff-style bounds. Hash functions written in bold are the contributions of this paper. The hash functions in regular font are known to provide Chernoff-style bounds. Note that we were unable to implement double tabulation from 64 bits to 64 bits since the hash tables were too large to fit in memory. \label{tab:speed}}
\end{center}
\end{table}

 \begin{table}
\begin{center}
    \begin{tabular}{|l | l | l | l | l |}
        \hline Hash function                  & Time & Space & Concentration Guarantee & Restriction \\
        \hline
        {Multiply-Shift}                 & $O(1)$   &  $O(1)$   & Chebyshev's inequality & None \\
        {$k$-Independent PolyHash}         & $O(k)$ & $O(k)$   & Chernoff-style bounds & 
        \begin{tabular}{ l }Requires $k=\Omega(\log u)$ for\\ \errorterm $O(1/u^\gamma)$ 
        \end{tabular}\\
        Simple Tabulation & $O(c)$ & $O(u^{1/c})$ & Chernoff-style bounds & \Errorterm: $O(n/m^\gamma)$\\
        Twisted Tabulation & $O(c)$ & $O(u^{1/c})$ & Chernoff-style bounds & Requires: $\mu\leq \abs{\Sigma}^{1-\Omega(1)}$\\
        Mixed Tabulation &$O(c)$& $O(u^{1/c})$ & Chernoff-style bounds & Requires: $\mu = o(\abs\Sigma)$ \\
        \textbf{Tabulation-Permutation}       & $O(c)$ & $O(u^{1/c})$ & Chernoff-style bounds & \Errorterm: $O(1/u^\gamma)$ \\ 
        Double Tabulation                     & $O(c^2)$  & $O(u^{1/c})$ & Chernoff-style bounds & \Errorterm: $O(1/u^\gamma)$     \\        \hline
    \end{tabular}
    \caption{Theoretical time and space consumption of some of the hash functions discussed.   \label{tab:theoretical_performance}}
\end{center}
\end{table}

\subsubsection{High Independence and Tabulation}\label{sec:independence-and-tab}
Before this paper, the only known way to obtain unrestricted Chernoff-style concentration bounds with hash functions that can be evaluated in constant time was through $k$-independent hashing. Recall that a hash function $h:U \to R$ is $k$-independent if the distribution of $(h(x_1),\dots,h(x_k))$ is uniform in $R^k$ for every choice of distinct keys $x_1,\dots,x_k\in U$.
Schmidt, Siegel, and
Srinivasan \cite{schmidt95chernoff} have shown that with $k$-independent hashing, we have Chernoff-style concentration bounds in all three cases mentioned at the beginning of the introduction up to an \errorterm decreasing exponentially in
$k$. With $k=\Theta(\gamma\log u)$, this means Chernoff-style concentration with an
\errorterm of $1/u^\gamma$ like in Theorem~\ref{thm:intro-tab-perm} and~\ref{thm:intro-tab-1perm}. 
However, evaluating any $k$-independent hash function takes time
$\Omega(k)$ unless we use a lot of space.
Indeed, a cell probe lower bound by
Siegel~\cite{siegel04hash} states that evaluating a $k$-independent
hash function over a key domain $[u]$ using $t<k$ probes,  requires us to use at least
$u^{1/t}$ cells to represent the hash function.  
Thus, aiming for Chernoff concentration through $k$-independence with $k=\Omega(\log u)$ and with constant evaluation time, we would have to use $u^{\Omega(1)}$ space like our tabulation-permutation. 
Here it should be mentioned that $k$-independent PolyHash modulo a prime $p$
can be evaluated at $k$ points in total time $O(k \log^2 k)$ using
multipoint evaluation methods. Then the average evaluation time is $O(\log^2 k)$, but it requires that the hashing can be done to batches of $k$ keys at
a time. We can no longer hash one key at a time, continuing
with other code before we hash the next key. This
could be a problem for some applications. A bigger practical issue is
that it is no longer a black box implementation of a hash function. To
understand the issue, think of Google's codebase where thousands of
programs are making library calls to hash functions. A change to
multipoint evaluation would require rewriting all of the calling
programs, checking in each case that batch hashing suffices ---
a huge task that likely would create many errors.
A final point is that multipoint evaluation is complicated to implement yet still not as fast as our tabulation-permutation hashing. 
Turning to upper bounds, Siegel designed a $u^{\Omega(1/c^2)}$-independent hash function that can be represented in tables of size $u^{1/c}$ and evaluated in $c^{O(c)}$ time. With $c=O(1)$, this suffices for Chernoff-style concentration bounds by the argument above. However, as Siegel states, the hashing scheme is ``far too slow for any practical
application''.

In the setting of Siegel, Thorup's double
tabulation \cite{Tho13:simple-simple} is a simpler and more
efficient construction of highly independent hashing. It is the main constant-time competitor of our new tabulation-permutation hashing, and yet it is 30 times slower in our experiments. In the following, we describe the theoretical guarantees of double tabulation hashing and discuss its concrete parameters in terms of speed and use of space towards comparing it with tabulation-permutation hashing.

A \emph{double tabulation} hash function, $h:\Sigma^c\to\Sigma^c$ is the composition of two independent simple tabulation
hash functions $h_1:\Sigma^c\fct\Sigma^d$ and $h_2:\Sigma^d\fct\Sigma^c$, $h=h_2\circ h_1$. Evaluating the function thus requires $c+d$ character lookups. Assuming that each memory unit stores an element
from $[u]=\Sigma^c$ and $d\geq c$, the space used for the character
tables is $(c(d/c)+d)u^{1/c}=2d
u^{1/c}$.
Thorup \cite{Tho13:simple-simple} has shown that if $d\geq
6c$, then with probability $1-o(\Sigma^{2-d/(2c)})$ over the choice of
$h_1$, the double tabulation hash function $h$ is $k$-independent for
$k=|\Sigma|^{1/(5c)}=u^{\Omega(1/c^2)}$. More precisely, with this probability, the output keys $(h_1(x))_{x \in \Sigma^c}$ are distinct, and $h_2$ is $k$-independent when restricted to this set of keys. If we are lucky to pick such an $h_1$,
this means that we get the same high indepence as
Siegel~\cite{siegel04hash}. With $d=6c$, the space used is
$12cu^{1/c}=O(c u^{1/c})$ and the number of character lookups to
compute a hash value is $7c=O(c)$. Tabulation-permutation hashing is very comparable to Thorup's double tabulation. As previously noted, it can be implemented in the same way, except that we fill the character tables of $h_2$ with permutations and padded zeros instead of random hash values. 
To compare, a tabulation-permutation hash function $h\colon \Sigma^c\to\Sigma^c$ requires $2c$ lookups and uses space $2cu^{1/c}$, which may not seem a big difference. 
However, in the following, we demonstrate how restrictions on double tabulation cost an order of magnitude in speed and space compared with tabulation-permutation hashing when used with any realistic parameters.

With Thorup's double tabulation, for $(\log u)$-independence, we need $\log u\leq |\Sigma|^{1/(5c)}=u^{1/(5c^2)}$. 
In choosing values for $u$ and $c$ that work in practice, this inequality is very restrictive. Indeed, even for $c=2$, $\log u \leq u^{1/20}$, which roughly implies that $\log u \geq 140$. Combined with the fact that the character tables use space $12c|\Sigma|$, and that $|\Sigma|\geq (\log u)^{5c}$, this is an intimidating amount of space.
Another problem is the error probability over $h_1$ of $1-o(\Sigma^{2-d/(2c)})$. If we want this to be $O(1/u)$, like in the error bounds from Theorem~\ref{thm:intro-tab-perm} and~\ref{thm:intro-tab-1perm}, we need $d\geq 2(c^2+2c)$.  
Thus, while things work well asymptotically, these constraints make it hard to implement highly independent double tabulation on any normal computer. 
However, based on a more careful analysis of the case with 32-bit keys, Thorup shows that using $c=2$ characters of 16 bits, and $d=20$ derived characters, gives a 100-independent hash function with probability $1-1.5\times 10^{-42}$. According to~\cite{Tho13:simple-simple} we cannot use significantly fewer resources even if we just want 4-independence. 
For hashing 32-bit keys, this means making $22$ lookups for each query and using tables of total size $40\cdot 2^{16}$.  
In contrast, if we hash 32-bit keys with tabulation-permutation hashing, we may use 8-bit characters with $d=c=4$, thus making only $8$ lookups in tables of total size $8\cdot 2^{8}$. 
For this setting of parameters, our experiments (summarized in Table~\ref{tab:speed}) show that double tabulation is approximately 30 times slower than tabulation-permutation hashing. 
For 64-bit keys, Thorup~\cite{Tho13:simple-simple} suggests implementing double tabulation with $c=3$ characters of 22 bits and $d=24$. 
This would require $26$ lookups in tables of total size $48\cdot 2^{22}$. We were not able to implement this on a regular laptop due to the space requirement.

We finally mention that Christani et al.~\cite{christiani15indep} have presented a hash family which obtains the even higher independence $u^{\Omega(1/c)}$. The scheme is, however, more complicated with a slower evaluation time of $\Theta( c \log c)$.

\subsubsection{Space Bounded Independence and Chernoff Bounds}\label{sec:space-bounded-ind}
One of the earliest attempts of obtaining strong concentration bounds via hashing is a simple and elegant construction by Dietzfelbinger and Meyer auf der Heide~\cite{Dietzfelbinger1992}. For some parameters $m,s,d$, their hash family maps to $[m]$, can be represented with $O(s)$ space, and uses a $(d+1)$-independent hash function as a subroutine, where $d=O(1)$, e.g., a degree-$d$ polynomial. In terms of our main goal of Chernoff-style bounds, their result can be cast as follows: Considering the number of balls from a fixed, but unknown, subset $S \subseteq U$, with $|S|=n$, that hashes to a specific bin, their result yields Chernoff bounds like ours with a constant delay in the exponential decrease and with an \errorterm of $n\left(\frac{n}{ms}\right)^d$. The expected number of balls in a given bin is $\mu=n/m$, so the \errorterm is $n(\mu/s)^d$. To compare with tabulation-permutation, suppose we insist on using space $O(|\Sigma|)$ and that we moreover want the \errorterm to be $u^{-\gamma}=|\Sigma|^{-c\gamma}$ like in~\Cref{thm:intro-tab-perm,thm:intro-tab-1perm}. With the hashing scheme from~\cite{Dietzfelbinger1992}, we then need $\mu =O(|\Sigma|^{1-\gamma c/d})$. If we want to be able to handle expectations of order, e.g. $|\Sigma|^{1/2}$, we thus need $d\geq 2c\gamma$. For $64$-bit key, $c=8$, and $\gamma=1$, say, this means that we need to evaluate a $16$-independent hash function.
In general, we see that the concentration bound above suffers from the same issues as those provided by P\v{a}tra\c{s}cu and Thorup for simple and twisted tabulation~\cite{patrascu12charhash,PT13:twist}, namely that we only have Chernoff-style concentration if the expected value is much smaller than the space used.

Going in a different direction, Dietzfelbinger and Rink~\cite{dietzfel09splitting} use universe
splitting to create a hash function that is highly independent (building on
previous works~\cite{Dietz2007,dietzfel03tabhash,Fotakis2005,hagerup01perfhash})
but, contrasting double tabulation as described above,
only within a fixed set $S$, not the entire universe. The construction requires an
upper bound $n$ on the size of $S$, and a polynomial 
error probability of $n^{-\gamma}$ is tolerated. Here $\gamma=O(1)$ is part of
the construction and affects the evaluation time. Assuming no such error has occurred, which is not checked, the hash function is highly independent on
$S$. As with Siegel's and Thorup's highly independent hashing discussed above, this implies Chernoff bounds without the constant delay in the exponential decrease, but this time only within the fixed set $S$. In the same setting, Pagh and Pagh \cite{PP08} have presented
a hash function that uses $(1+o(1))n$ space and which is fully independent on any given set $S$  of size at most $n$ with 
high probability. This result is very useful, e.g., as part of
solving a static problem of size $n$ using linear space, since, with high probability, we may assume fully-random hashing as
a subroutine.  However, from a
Chernoff bound perspective, the fixed polynomial error probability implies
that we do not benefit from any independence above $O(\log n)$, using the aforementioned results from~\cite{schmidt95chernoff}. 
More importantly, we do not
want to impose any limitations to the size of the sets we wish to hash in this paper. Consider
for example the classic problem of counting distinct elements
in a huge data stream. The size of the data stream might be very large, but regardless,
the hashing schemes of this paper will only use space $O(u^{1/c})$ with $c$ chosen 
small enough for hash tables to fit in fast cache.

Finally, Dahlgaard et al.\ \cite{DKRT15:k-part} have shown that on a given set $S$ of size $|S|\leq |\Sigma|/2$ a double
tabulation hash function, $h=h_2\circ h_1$ as described above, is fully random with probability
$1-|\Sigma|^{1-\lfloor d/2\rfloor}$ over the choice of $h_1$. For an error probability of $1/u$, we set $d=(2c+2)$ yielding a hash function that can be evaluated with $3c+2$ character lookups and using $(4c+4)|\Sigma|$ space.  This
can be used to simplify the above construction by Pagh and Pagh \cite{PP08}. Dahlgaard et
al. \cite{DKRT15:k-part} also propose mixed tabulation hashing which
they use for statistics over $k$-partitions. Their analysis is easily
modified to yield Chernoff-style bounds for intervals similar to our bounds for
tabulation-1permuation hashing presented in Theorem \ref{thm:intro-tab-1perm},
but with the restriction that the expectation $\mu$ is at most
$|\Sigma|/\log^{2c} |\Sigma|$. This restriction is better than the earlier mentioned restictions $\mu\leq |\Sigma|^{1/2}$ for simple
tabulation \cite{patrascu12charhash} and $\mu\leq
|\Sigma|^{1-\Omega(1)}$ for twisted tabulation \cite{PT13:twist}.
For mixed tabulation hashing, Dahlgaard et al.\ use $3c+2$ lookups and
$(5c+4)|\Sigma|$ space. In comparison, tabulation-1permutation hashing, which has no restriction on $\mu$, uses only
 $c+1$ lookups and $(c+1)|\Sigma|$ space.

\subsubsection{Small Space Alternatives in Superconstant Time}\label{sec:superconstant-time}
Finally, there have been various interesting developments regarding hash functions with
small representation space that, for example, can hash $n$ balls to $n$ bins
such that the maximal number of balls in any bin is $O(\log n/\log\log n)$,
corresponding to a classic Chernoff bound.  Accomplishing this through independence of
the hash function, this requires $O(\log n/\log\log n)$-independence and evaluation time
unless we switch to hash functions using a lot of space as described above. However,
\cite{CRSW11:balance,MRRR14:balance} construct hash families taking a random seed of
$O(\log\log n)$ words and which can be evaluated using $O((\log\log n)^2)$ operations,
still obtaining a maximal load in any bin of $O(\log n/\log\log n)$ with high probability.
This is impressive as it only uses a small amount of space and a short random seed, though
it does require some slightly non-standard operations when evaluating the hash functions.
The running time however, is not constant, which is what we aim for in this paper. 

A different result is by~\cite{GopalanKM18} who construct hash families which hash $n$
balls to $2$ bins. They construct hash families that taking a random seed of $O((\log\log n)^2)$
words get Chernoff bounds with an \errorterm of $n^{-\gamma}$ for some constant $\gamma$,
which is similar to our bounds. Nothing is said about the running time of the hash function of~\cite{GopalanKM18}. Since one of our primary goals is to design hash functions with constant running time, this makes the two results somewhat incomparable.

\subsubsection{Experiments and Comparisons}\label{sec:intro-exp}
To better understand the real-world performance of our new hash functions in comparison with well-known and comparable alternatives, we performed some simple experiments on regular laptops, as presented in~\Cref{tab:speed}. We did two types of experiments. 
\begin{itemize}
\item On the one hand we compared with potentially faster hash functions with weaker or restricted concentration bounds to see how much we lose in speed with our theoretically strong tabulation-permutation hashing. We shall see that our tabulation-permutation is very competitive in speed.
\item On the other hand we compared with the fastest previously known hashing schemes with strong concentration bounds like ours. Here we will see that we gain a factor of 30 in speed.
\end{itemize}

Concerning weaker, but potentially faster, hashing schemes we have chosen two types of hash functions for the comparison. First, we have the fast 2-independent hash functions multiply-shift (with addition) and 2-independent PolyHash. They are among the fastest hash functions in use and are commonly used in streaming algorithms. It should be noted that when we use 2-independent hash functions, the variance is the same as with full randomness, and it may hence suffice for applications with constant error probability. Furthermore, for data sets with sufficient entropy, Chung, Mitzenmacher, and Vadhan \cite{chung2013simple} show that 2-independent hashing suffices. However, as previously mentioned, we want provable Chernoff-style concentration bounds of our hash functions, equivalent up to constant factors to the behavior of a fully random hash function, for any possible input.
Second, we have simple tabulation, twisted tabulation, and mixed tabulation, which are tabulation based hashing schemes similar to tabulation-1permutation and tabulation-permutation hashing, but with only restricted concentration bounds. It is worth noting that Dahlgaard, Knudsen, and Thorup \cite{Dahlgaard2017Practical} performed experiments showing that the popular hash functions MurmurHash3 \cite{murmur3} and CityHash \cite{cityhash} along with the cryptographic hash function Blake2 \cite{Aumasson2013Blake} all are slower than mixed tabulation hashing, which we shall see is even slower than permutation-tabulation hashing. These hash functions are used in practice, but given that our experiments show mixed tabulation to be slightly slower than tabulation-permutation hashing, these can now be replaced with our faster alternatives that additionally provide theoretical guarantees as to their effectiveness.

Concerning hashing schemes with previous known strong concentration bounds, we compared with double tabulation and 100-independent PolyHash, which are the strongest competitors that we are aware of using simple portable code.

The experiment measures the time taken by various hash functions to hash a large set of keys. Since the hash functions considered all run the same instructions for all keys, the worst- and best-case running times are the same, and hence choosing random input keys suffices for timing purposes. 
Further technical details of the experiments are covered in \cref{sec:experiments}. We considered both hashing 32-bit keys to 32-bit hash values and 64-bit keys to 64-bits hash values. We did not consider larger key domains as we assume that a universe reduction, as described in~\Cref{sec:unired}, has been made if needed. The results are presented in \cref{tab:speed}. Below, we comment on the outcome of the experiment for each scheme considered.

\paragraph{Multiply-Shift.} The fastest scheme of the comparison is Dietzfelbinger's 2-independent
Multiply-Shift~\cite{dietzfel96universal}. For 32-bit keys it uses
one 64-bit multiplication and a shift. For 64-bit keys it uses 
one 128-bit multiplication and a shift. As expected, this very simple hash function was the fastest in the experiment. 

\paragraph{2-Independent PolyHash.} We compare twice with the classic $k$-independent
PolyHash~\cite{wegman81kwise}. Once with $k=2$ and again with $k=100$. $k$-independent PolyHash is based on evaluating a random degree $(k-1)$-polynomial over a prime field, using Mersenne primes to make it fast:
$2^{61}-1$ for 32-bit keys and $2^{89}-1$ for 64-bit keys. 
The 2-independent version was 2-3 times slower in experiments than multiply-shift. It is possible that implementing PolyHash with a specialized carry-less multiplication \cite{Lemire2016} would provide some speedup. However, we do not expect it to become faster than multiply-shift.

\paragraph{Simple Tabulation.} The baseline for comparison of our tabulation-based schemes is simple tabulation hashing. Recall that we hash using $c$ characters from $\Sigma=[u^{1/c}]$ (in this experiment we considered $u=2^{32}$ and $u=2^{64}$). This implies $c$ lookups from the character tables, which have total size $c\abs \Sigma$. For each lookup, we carry out a few simple AC$^0$ operations, extracting the characters for the lookup and applying an XOR. Since the size of the character alphabet influences the lookup times, it is not immediately clear, which choice of $c$ will be the fastest in practice. This is, however, easily checkable on any computer by simple experiments. In our case, both computers were fastest with 8-bit characters, hence with all character tables fitting in fast cache.

Theoretically, tabulation-based hashing methods are incomparable in speed to multiply-shift and 2-independent PolyHash, since the latter methods use constant space but multiplication which has circuit complexity $\Theta(\log w/\log\log w)$ for $w$-bit words \cite{CSV84}. Our tabulation-based schemes use
only AC$^0$ operations, but larger space. This is an inherent difference, as 2-independence is only possible with AC$^0$ operations using a large amount of space \cite{AMRT96,MNT93,Mil96}.
As is evident from \cref{tab:speed}, our experiments show that simple
tabulation is 2-3 slower than multiply-shift, but as fast or faster
than 2-independent PolyHash. Essentially, this can be ascribed to the cache of the two computers used being comparable in speed to arithmetic instructions. This is not surprising as most computation in the world involves data and hence cache. It is therefore expected that most computers have cache as fast as arithmetic instructions. In fact, since fast multiplication circuits are complex and expensive, and a lot of data processing does not involve multiplication, one could imagine computers with much faster cache than multiplication \cite{HP12:hardware}.

\paragraph{Twisted Tabulation.}
Carrying out a bit more work than simple tabulation, twisted tabulation performs $c$ lookups of entries that are twice the size, as well as executing a few extra AC$^0$ operations. It hence performs a little worse than simple tabulation hashing. 

\paragraph{Mixed Tabulation.} We implemented mixed tabulation hashing with the same parameters ($c=d$) as in~\cite{Dahlgaard2017Practical}. With these parameters the scheme uses $2c$ lookups from $2c$ character tables, where $c$ of the lookups are to table entries that are double as long as the output, which may explain its worse performance with 64-bit domains. In our experiments, mixed tabulation performs slightly worse than tabulation-permutation hashing. Recall from above that mixed tabulation is faster than many popular hash functions without theoretical guarantees, hence so is our tabulation-permutation.

\paragraph{Tabulation-1Permutation.} Also only slightly more involved than simple tabulation, tabulation-1permutation performs $c+1$ lookups using $c+1$ character tables.
In our experiments, tabulation-1permutation turns out to be a little bit 
faster than twisted tabulation, at most 30\% slower than simple tabulation,
and at most 4 times slower than multiply-shift. Recall that tabulation-1permutation is our hash function of choice for streaming applications where speed is critical.

\paragraph{Tabulation-Permutation.} Tabulation-permutation hashing performs $2c$ lookups from $2c$ character
tables. In our experiments, it is slightly more than twice as slow as simple
tabulation, and at most 8 times slower than multiply-shift. It is also worth noting that it performs better than mixed tabulation.

\paragraph{Double Tabulation.}
Recall that among the schemes discussed so far, only tabulation-permutation and tabulation-1permutation
hashing offer unrestricted Chernoff-style concentration with high probability. Double tabulation is the first alternative with similar guarantees and in our experiments it is 30 times slower for 32-bit keys. For 64-bit keys, we were unable to run it on the computers at our disposal due to the large amount of space required for the hash tables. As already discussed, theoretically, double tabulation needs more space and lookups. The 32-bit version performed $26$ lookups in tables of total size $48\cdot 2^{22}$, while tabulation-permutation only needs $8$ lookups using $8\cdot 2^{8}$
space. It is not surprising that double tabulation lags so far behind.

\paragraph{100-Independent PolyHash.}
Running the experiment with 100-independent PolyHash, it turned out that for 32-bit
keys, it is slower than 100-independent double tabulation.
A bit surprisingly, 100-independent PolyHash ran nearly 200 times slower
than the 2-independent PolyHash, even though it essentially just runs
the same code 99 times. An explanation could be that the 2-independent scheme just keeps two coefficients in registers while the 100-independent
scheme would loop through all the coefficients.
We remark that the number $100$ is somewhat arbitrary. We need $k=\Theta(\log u)$, but we do not know the exact constants in the Chernoff bounds with $k$-independent hashing. The running times are, however, easily scalable and for $k$-independent PolyHash, we would expect the evaluation time to change by a factor of roughly $k/100$.

\begin{figure}
    \centering
    \def\svgwidth{\textwidth}
    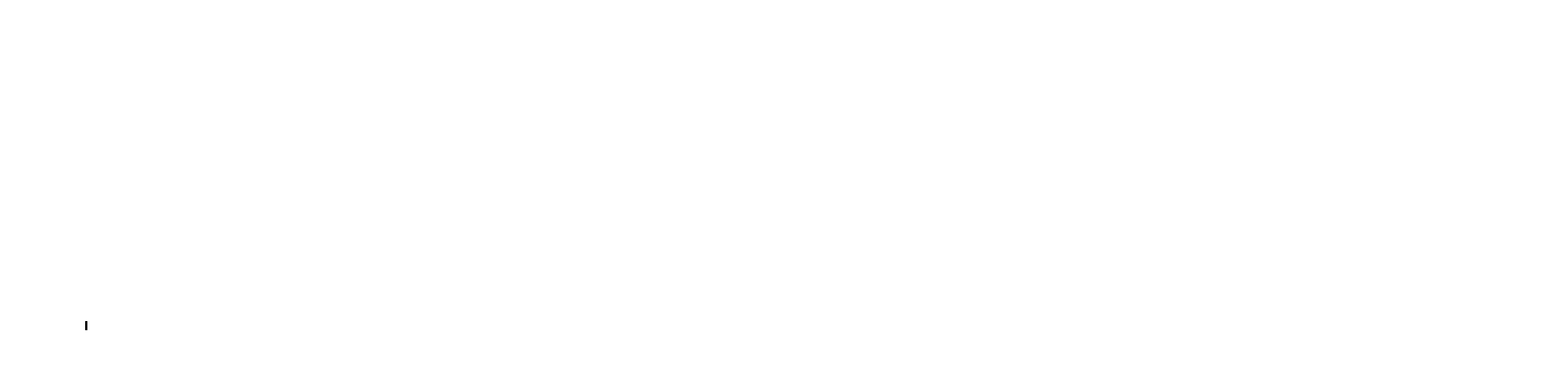
    \caption{Hashing the arithmetic progression $\setbuilder{a \cdot i}{i \in [50000]}$
    to $16$ bins for a random integer $a$. The dotted line is a 100-independent PolyHash.}
    \label{fig:polynomials_intro}
\end{figure}

\paragraph{Bad instances for Multiply-Shift and 2-wise PolyHash} We finally present experiments demonstrating concrete bad instances for the hash functions Multiply-Shift~\cite{dietzfel96universal} and 2-wise PolyHash, underscoring what it means for them to not support Chernoff-style concentration bounds. 
In each case, we compare
with our new tabulation-permutation hash function as well as 100-independent PolyHash,
which is our approximation to an ideal fully random hash function.
We refer the reader to~\Cref{sec:experiments} for bad instances for simple-tabulation~\cite{zobrist70hashing} and twisted tabulation~\cite{PT13:twist} as well as a more thorough discussion of our experiments.
%We note that all schemes considered are 2-independent, so they all have
%exactly the same variance as fully-random hashing. From $2$-independence, it also follows that the schemes work perfectly on sufficiently random input~\cite{mitzenmacher08hash}. Our concern is therefore concrete inputs making them fail in the tail.

Bad instances for Multiply-Shift and 2-independent
PolyHash are analyzed in detail in~\cite[Appendix
B]{patrascu16kwise-lb}.  The specific instance we consider is that of hashing the arithmetic
progression $A=\setbuilder{a \cdot i}{i \in [50000]}$ into $16$ bins, where we are interested in the number of keys from $A$ that hashes to a specific bin. We performed this experiment $5000$ times, with independently chosen hash functions. The cumulative distribution functions on the number of keys from $A$ hashing to a specific bin is presented in~\Cref{fig:polynomials_intro}. We see that most of the
time 2-independent PolyHash and Multiply-Shift distribute the keys perfectly with exactly $1/16$ of
the keys in the bin. By $2$-independence, the variance is the same as with fully random hashing, and this should suggest a  much heavier tail,
which is indeed what our experiments show. For contrast, we see that the cumulative distribution function with our
tabulation-permutation hash function is almost indistinguishable from that of 100-independent Poly-Hash.
We note that no amount of experimentation can prove that tabulation-permutation (or any other hash function) works well for all possible inputs. However, given the mathematical concentration guarantee of Theorem 2, the strong performance of tabulation-permutation in the experiment is no surprise.

\section{Technical Theorems and how they Combine}
We now formally state our main technical results, in their full generality, and show how they combine to yield Theorems~\ref{thm:intro-simple-tab},~\ref{thm:intro-tab-perm}, and~\ref{thm:intro-tab-1perm}. A fair warning should be given to the reader. The theorems to follow are intricate and arguably somewhat inaccessible at first read. Rather than trying to understand everything at once, we suggest that the reader use this section as a roadmap for the main body of the paper. We will, however, do our best to explain the contents of the results as well as disentangling the various assumptions in the theorems. 

As noted in \cref{sec:introduction_techniques}, the exposition is subdivided into three parts, each yielding theorems that we believe to be of independent interest. First, we provide an improved analysis of simple tabulation (\Cref{sec:simple}). We then show how permuting the output of a simple tabulation hash function yields a hash function having Chernoff bounds for arbitrary value functions (\Cref{sec:valuefunctions}). Finally, we show that concatenating the output of two independent hash functions preserves the property of having Chernoff bounds for arbitrary value functions
(\Cref{sec:codomainext}).

It turns out that the proofs of our results become a little cleaner when we assume that value functions take values in $[-1,1]$, so from here on we state our results in relation to such value functions. Theorems~\ref{thm:intro-simple-tab},~\ref{thm:intro-tab-perm}, and~\ref{thm:intro-tab-1perm} will still follow, as the value functions in these theorems can also be viewed as having range $[-1,1]$.

\subsection{Improved Analysis of Simple Tabulation}\label{sec:combine_simple} Our new and improved result on simple tabulation is the subject of~\Cref{sec:simple}. It is stated as follows.
\begin{theorem}\label{thm:simpleConcentration}
	Let $h\colon \Sigma^c \to [m]$ be a simple tabulation hash function and
    $S \subseteq \Sigma^c$ be a set of keys of size $n=\abs S$. Let $v\colon \Sigma^c \times [m] \to [-1, 1]$ be
    a value function such that the set
    $Q = \setbuilder{i \in [m]}{\exists x \in \Sigma^c : v(x, i) \neq 0}$ satisfies  $\abs{Q} \le m^{\eps}$,
    where $\eps < \frac{1}{4}$ is a constant.
    \begin{enumerate}
    	\item For any constant $\gamma\geq 1$, the random variable $V=\sum_{x\in S}v(x, h(x))$ is strongly concentrated with \errorterm $O_{\gamma, \eps, c}(n/m^\gamma)$, where the constants of the asymptotics are determined by $c$ and $\gamma$. Furthermore, this concentration is query invariant.
    	\item For $j\in [m]$ define the random variable $V_j = \sum_{x\in S}v(x, h(x)\xor j)$ and let $\mu = \E{V_j}$, noting that this is independent of $j$. For any $\gamma\geq 1$,
    \begin{align}
    	\Pr \left[\sum_{j\in [m]}(V_j-\mu)^2>D_{\gamma, c} \sum_{x\in S}\sum_{k\in [m]} v(x, k)^2\right] = O_{\gamma, \eps, c}(n/m^\gamma)\label{eq:squares}
    \end{align}
    for some constant $D_{\gamma, c}$ and this bound is query invariant up to constant factors.
    \end{enumerate}
\end{theorem}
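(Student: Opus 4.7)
My plan is to prove both parts simultaneously by induction on the number of characters $c$. Viewing a simple tabulation function $h\colon\Sigma^c\to[m]$ as a uniformly random map on the position characters $[c]\times\Sigma$, I would reveal these hash values one position-character at a time in a carefully chosen order $\alpha_1,\dots,\alpha_{c|\Sigma|}$ and work with the Doob martingale $X_k=\EC{V}{h(\alpha_1),\ldots,h(\alpha_k)}$, with differences $Y_k=X_k-X_{k-1}$. For the inductive step, it is convenient to reveal all positions $1,\dots,c-1$ first; after this conditioning, the residual randomness in the last character table partitions $S$ into groups indexed by $x_c\in\Sigma$, each of which behaves like a fully random shift on its group, reducing part of the analysis to $c-1$ characters with adjusted value functions.

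For part~(2), I would begin with the algebraic identity $\sum_{j\in[m]}V_j^2=\sum_{x,y\in S}C_{x,y}(h(x)\oplus h(y))$, where $C_{x,y}(\Delta)=\sum_{k\in[m]}v(x,k)\,v(y,k\oplus\Delta)$ is a cross-correlation depending only on $h(x)\oplus h(y)$. Since $\sum_j V_j=\sum_{x,k}v(x,k)=m\mu$ is deterministic, $\sum_j(V_j-\mu)^2=\sum_j V_j^2-m\mu^2$. The diagonal terms $x=y$ contribute exactly $\sum_x\sum_k v(x,k)^2$, matching the right-hand side of \eqref{eq:squares}. The bulk of the work is to show that, with probability $1-O(n/m^\gamma)$, the off-diagonal mass is at most a constant multiple of the diagonal. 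A Cauchy--Schwarz bound on the cross-correlations reduces this to controlling, for each fixed shift $\Delta$, the collision count $\abs{\setbuilder{(x,y)\in S^2}{x\neq y,\; h(x)\oplus h(y)=\Delta}}$, which is itself a sum of the form handled by part~(1) applied to the indicator value function $v'(x,k)=\indicator{k=\Delta}$; here the assumption $\abs{Q}\le m^\eps$ limits the number of ``active'' shifts that can contribute.

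For part~(1), I would apply a Freedman/Bennett-style martingale inequality to $X_k$: on the event that the predictable quadratic variation $\sum_k\EC{Y_k^2}{\mathcal{F}_{k-1}}$ is $O(\sigma^2)$ and $\abs{Y_k}=O(1)$, the deviation satisfies $\Pr[\abs{V-\mu}\ge t]\le O(\exp(-\Omega(\sigma^2\,\C(t/\sigma^2))))$. The small bound on $\abs{Y_k}$ uses the sparsity assumption $\abs{Q}\le m^\eps$ together with a union bound on overloaded bins; the bound on the quadratic variation follows from observing that, after conditioning on positions $1,\ldots,c-1$, the conditional variance $\EC{Y_k^2}{\mathcal{F}_{k-1}}$ is exactly of the form $\sum_j(V_j-\mu)^2$ appearing in part~(2), but for a value function of arity $c-1$. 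Query invariance is automatic because conditioning on $h(x_0)=y_0$ fixes one position-character per position and XOR-shifts the codomain, preserving all structural hypotheses.

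The main obstacle, I expect, is the coupling inherent in this simultaneous induction: the variance bound of part~(2) at level $c$ requires the concentration of part~(1) at level $c-1$ (to control collision counts), while the concentration of part~(1) at level $c$ requires the variance bound of part~(2) at level $c-1$ (to control the martingale's quadratic variation). Closing this bootstrap while keeping the \errorterm at a clean $O(n/m^\gamma)$ -- with no accumulating logarithmic losses and with the constants $D_{\gamma,c}$ and the implicit $O(\cdot)$ constants bounded by quantities depending only on $c,\gamma,\eps$ -- is where the delicate combinatorial accounting on pairs of position-characters, chains of collisions, and the ordering of revelations must all be carefully aligned. This is the technical heart of the simple-tabulation analysis, and everything else in the theorem follows once the bootstrap is made to close.
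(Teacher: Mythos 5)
Your high-level plan — induction on $c$, a Doob martingale revealing the position-character tables one entry at a time, and a Bennett/Freedman inequality whose quadratic variation for part~(1) is supplied by part~(2) at the previous level — is recognizably the same skeleton as the paper's. But your proposed proof of part~(2) has a genuine gap. You write $\sum_j V_j^2 = \sum_{x,y}C_{x,y}(h(x)\oplus h(y))$ and then argue that Cauchy--Schwarz reduces the off-diagonal mass to collision counts, controlled by part~(1) at level $c-1$. Cauchy--Schwarz gives $\abs{C_{x,y}(\Delta)}\le\sqrt{v'(x)v'(y)}$ \emph{independently} of $\Delta$, and summing this over $x\neq y$ exceeds the diagonal $\sum_x v'(x)$ by up to a factor of $n$; the true bound relies on cancellation between the signed cross-correlations, which a pointwise Cauchy--Schwarz discards. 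Moreover, a collision count $\abs{\setbuilder{(x,y)}{h(x)\oplus h(y)=\Delta}}$ is a pair sum, not a sum $\sum_x v'(x,h(x))$, so part~(1) does not directly apply to it; and in the special case $\abs Q=1$ the off-diagonal mass \emph{is} the deviation of the collision count, so your proposed reduction is circular. The paper avoids both problems: it decomposes $V_j-\mu=\sum_i Y_i^{(j)}$ over groups $G_i$ obtained from a weight-balanced ordering (Lemma~\ref{lem:groups}, ensuring $v'(G_i)\le v'(S)^{1-1/c}v'_\infty(S)^{1/c}$), writes $\sum_j(V_j-\mu)^2$ as a diagonal term (handled by the part-(2) induction) plus a cross-group martingale $V_2=\sum_i Z_i$, and controls the conditional variance of $Z_i$ by a high-moment estimate (Lemma~\ref{lem:moment}) proved \emph{purely combinatorially} via the generalized collision bound of Lemma~\ref{lem:CollBound}. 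Consequently part~(2) never calls part~(1) and the mutual bootstrap you anticipate does not arise.

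For part~(1), the quadratic variation is indeed bounded via part~(2) on the groups, as you say, but your claim that $\abs{Y_k}=O(1)$ follows from ``sparsity plus a union bound on overloaded bins'' is not correct: the increments are not $O(1)$, and the paper instead bounds $\abs{X_i^{(j)}}$ by $C_{\gamma+1,c-1}M$ for a carefully tuned threshold $M$ (chosen so that $\tau^2\C(M/\tau^2)=(\gamma+1)\log m$), and controls the failure probability of this event by invoking the \emph{part-(1) induction hypothesis} at level $c-1$ on each group (events $A_i,B_i$ in the proof of Theorem~\ref{thm:simple-concentration-only}). Your suggested revelation order — positions $1,\dots,c-1$ first, then position $c$ — also does not yield the balanced group weights of Lemma~\ref{lem:groups}, which are what make the constants close after the induction. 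In short, the martingale scaffolding is right, but the combinatorial engine that makes the quantitative claims hold (Lemmas~\ref{lem:groups}, \ref{lem:CollBound}, \ref{lem:moment}) is missing, and the proposed substitute of deriving part~(2) from part~(1) does not work.
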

The technical assumption involving $Q$ states that the value function has \emph{bounded support} in the hash range: The value $v(x,h(x))$ can only possibly be non-zero if $h(x)$ lies in the relatively small set $Q$ of size at most $m^{\eps}$. In fact, when proving~\Cref{thm:intro-simple-tab} it suffices to assume that $|Q|= 1$, as we shall see below, but for our analysis of tabulation-permutation hashing we need the more general result above. Another nice illustration of the power of~\Cref{thm:simpleConcentration} holding with value functions of \emph{any} bounded support will appear when we prove~\Cref{thm:intro-tab-1perm} in~\Cref{sec:concinintervals}. 

To see that \cref{thm:intro-simple-tab} is implied by \cref{thm:simpleConcentration}, one may observe that the latter is a generalization of the former. Let $y\in [m]$ be the bin and $(w_x)_{x\in S}$ be the weights of the balls from $S$ in the setting of \cref{thm:intro-simple-tab}. Then defining the value function $v\colon \Sigma^c\times [m]\to [0, 1]$, 
\begin{align*}
	v(x, y') = 
	\begin{cases}
		w_x\cdot [y'=y], & x\in S,\\
		0, & x\not\in S,
	\end{cases}
\end{align*}
we find that $X=\sum_{x\in S}w_x\cdot [h(x)=y] = \sum_{x\in S}v(x, h(x))$ is strongly concentrated by part 1 of \cref{thm:simpleConcentration} and the concentration is query invariant.

Finally, the bound~\eqref{eq:squares} requires some explaining. For this, we consider the toy example of \cref{thm:intro-simple-tab}. Suppose we have a set $S\subseteq[u]$ of balls with weights $(w_x)_{x\in S}$ and we throw them into the bins of $[m]$ using a simple tabulation hash function. We focus on the total weight of balls landing in bin $0$, defining the value function by $v(x,y)=w_x$ for $x \in S$ and $y=0$, and $v(x,y)=0$ otherwise. In this case, $ \mu=\frac{1}{m}\sum_{x\in S} w_x$ denotes the expected total weight in any single bin and $V_j=\sum_{x\in S}w_x \cdot [h(x)=j]$ denotes the total weight in bin $j\in [m]$. Then~\eqref{eq:squares} states that $\sum_{j \in [m]}(V_j-\mu)^2=O(\|w\|_2^2)$ with high probability in $m$. This is exactly a bound on the \emph{variance} of the weight of balls landing in one of the bins when each of the hash values of the keys of $S$ are shifted by an XOR with a uniformly random element of $[m]$. Note that this  example corresponds to the case where $\abs Q=1$. In its full generality, i.e., for general value functions of bounded support,~\eqref{eq:squares} is similarly a bound on the variance of the value obtained from the keys of $S$ when their hash values are each shifted by a uniformly random XOR.
This variance bound turns out to be an important ingredient in our proof of the strong concentration in the first part of~\cref{thm:simpleConcentration}. As described in Section~\ref{sec:techsimple} the proof proceeds by fixing the hash values of the position characters $[c] \times \Sigma$ in a carefully chosen order, $\alpha_1 \prec \cdots \prec \alpha_r$. Defining $G_i$ to be those keys that contain $\alpha_i$ as a position character but no $\alpha_j$ with $j>i$, the internal clustering of the keys of $G_i$ is determined solely by $(h(\alpha_j))_{j<i}$ and fixing $h(\alpha_i)$ ``shifts'' each of these keys by an XOR with $h(\alpha_i)$. Now~\eqref{eq:squares}, applied with $S=G_i$, exactly yields a bound on the \emph{variance} of the total value obtained from the keys from $G_i$ when fixing the random XOR $h(\alpha_i)$. Thus,~\eqref{eq:squares} conveniently bounds the variance of the martingale described in Section~\ref{sec:techsimple}. 
As such,~\eqref{eq:squares} is merely a technical tool, but we have a more important reason for including the bound in the theorem. As it turns out, for \emph{any} hash function satisfying the conclusion of~\Cref{thm:simpleConcentration}, composing with a uniformly random permutation yields a hash family having Chernoff-style concentration bounds for any value function as we describe next.

\subsection{Permuting the Hash Range}
Our next step in proving~\Cref{thm:intro-tab-perm} is to show that, given a hash
function with concentration bounds like
in~\Cref{thm:simpleConcentration}, composing with a uniformly random permutation
of the entire range yields a hash function with Chernoff-style concentration
for general value functions. The main theorem, proved in~\Cref{sec:valuefunctions}, is as follows.	\begin{theorem}\label{thm:valuefunctions}
		Let $\eps\in (0, 1]$ and $m\geq 2$ be given. Let $g\colon [u]\to [m]$ be a 3-independent hash function satisfying the following. For every $\gamma>0$, and for every value function $v\colon [u]\times [m]\to [-1, 1]$ such that the set $Q=\setbuilder{i\in [m]}{\exists x\in [u]\colon v(x, i)\neq 0}$ is of size $\abs Q\leq m^\eps$, the two conclusions of \cref{thm:simpleConcentration} holds with respect to $g$.
		
		Let $v'\colon [u]\to [-1, 1]$ be any value function, $\tau\colon [m]\to [m]$ be a uniformly random permutation independent of $g$, and $\gamma>0$. Then the for the hash function $h=\tau\circ g$, the sum $\sum_{x\in [u]}v'(x, h(x))$ is strongly concentrated with \errorterm $O_{\gamma, \eps}(u/m^{\gamma})$, where the constants of the asymptotics are determined solely by $\eps$ and $\gamma$. Furthermore, this concentration is query invariant.
	\end{theorem}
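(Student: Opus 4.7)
The plan is to condition on the simple tabulation component $g$ and exploit the randomness of the permutation $\tau$ for the concentration, using both parts of \cref{thm:simpleConcentration} to ensure that $g$ produces a well-behaved intermediate distribution with high probability. For $i,j\in[m]$, let $w_{i,j}=\sum_{x:g(x)=i} v'(x,j)$, so that $V=\sum_{x\in[u]} v'(x,\tau(g(x)))=\sum_{i\in[m]} w_{i,\tau(i)}$. A direct computation shows that $\mu:=\E{V}=\frac{1}{m}\sum_{x,j} v'(x,j)$ is independent of $g$, so the deviation $V-\mu$ splits cleanly into a part coming from the randomness in $g$ (through the $w_{i,j}$'s) and a part coming from the randomness of $\tau$ given $g$.

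To control $g$, I would apply \cref{thm:simpleConcentration} to the family of value functions $v_{(i_0,j_0)}(x,i)=v'(x,j_0)\cdot[i=i_0]$, each having support of size $\abs{Q}=1\le m^\eps$. The first part of the theorem, combined with a union bound over all pairs $(i_0,j_0)\in[m]^2$, gives a uniform absolute bound $\abs{w_{i,j}-\bar w_j}\le M$ for every $i,j$ with high probability over $g$, where $\bar w_j=\frac{1}{m}\sum_x v'(x,j)$ and $M$ is a suitably chosen threshold. The second part of \cref{thm:simpleConcentration}, applied to the same family, yields with equally high probability the sum-of-squares bound $\sum_i (w_{i,j}-\bar w_j)^2 = O\!\left(\sum_x v'(x,j)^2\right)$ for every $j$, and hence $\sum_{i,j} (w_{i,j}-\bar w_j)^2 = O\!\left(\sum_{x,j} v'(x,j)^2\right)$ overall---a quantity matching the variance of $V$ in the fully random setting up to constants.

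Conditional on $g$ satisfying these two bounds, I analyse $V=\sum_i w_{i,\tau(i)}$ through a Doob martingale arising from a Fisher--Yates construction of $\tau$: reveal $\tau(1),\tau(2),\dots$ one coordinate at a time, with the martingale differences controlled by $M$ and the conditional variances controlled by $\sum_{i,j} (w_{i,j}-\bar w_j)^2$. A Freedman/Bennett-style martingale inequality then gives $\PRC{\abs{V-\mu}\ge t}{g}\le 2\exp(-\Omega(\sigma^2\C(t/\sigma^2)))$, where $\sigma^2$ matches the fully random variance of $V$ up to constants. Combining this with the error from the previous step yields the overall \errorterm $O_{\gamma,\eps}(u/m^\gamma)$.

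The main obstacle lies in this last step: the martingale inequality must attain the sharpness of \eqref{eq:var-chernoff} without accruing extra logarithmic factors, which is precisely where the variance-of-shifts bound from the second part of \cref{thm:simpleConcentration} pays off, because it guarantees that the accumulated conditional variance of the Fisher--Yates martingale is $O(\sigma^2)$ rather than a factor of $\log m$ larger. Query invariance follows by decomposing the conditioning on $h(x_0)=y_0$ into conditioning on $g(x_0)=i_0$---handled by the query invariance of $g$ in the hypothesis---and on $\tau(i_0)=y_0$, which leaves a uniformly random bijection on the remaining $m-1$ positions; the same martingale argument then applies with only the asymptotic constants changing.
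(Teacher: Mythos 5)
Your proposal has a genuine gap that the paper resolves by a decomposition you omit. The weak point is the claimed uniform bound $\abs{w_{i,j}-\bar w_j}\le M$ for \emph{all} $(i,j)\in[m]^2$. For a ``heavy'' bin $j^*$ with $\sigma_{j^*}^2:=\frac{1}{m}\sum_x v'(x,j^*)^2$ of order $\sigma^2$ (e.g.\ a value function $v'$ supported on a single bin $j^*$), the concentration from \Cref{thm:simpleConcentration} forces $M=\Omega(\sigma\sqrt{\log m})$ to have the bound hold with probability $1-O(m^{-\gamma})$. With that $M$, the Bennett martingale bound from your Fisher--Yates argument has $\sigma_{\mathrm{mart}}^2/M^2=O(1/\log m)$, so for deviations $t$ of order $\sigma\sqrt{\log m}$ (precisely the regime where the error probability should dip below $m^{-\gamma}$) the exponent you obtain is roughly $t^2/(\sigma^2\log m)$ rather than the required $t^2/\sigma^2$; you lose a factor of $\log m$ and the conclusion fails. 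So ``a suitably chosen threshold $M$'' cannot simultaneously hold with high probability for heavy bins \emph{and} be small enough for the permutation-martingale step.

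The paper sidesteps this exactly by splitting the range into the $m^\eps$ bins carrying the largest per-bin variances $\sigma_j^2$ (call this index set $S$) and the remaining bins. For $j\notin S$ one has $\sigma_j^2\le\sigma^2/m^\eps$, and then the concentration from the hypothesis on $g$ does yield a uniform bound of the form $M=\max\{C,\sigma/\sqrt{\Theta(\log m)}\}$ on $\abs{\mathcal{V}(i,j)}$ for $j\notin S$ with high probability; this is precisely what \Cref{lemma:helpful} (their permutation-martingale lemma, Fisher--Yates in spirit like yours) requires. The contribution $V_1$ from the bins in $S$, which is where your argument breaks down, is handled by the \emph{opposite} conditioning: fix $\tau$, observe that $\sum_x v'(x,\tau(g(x)))$ restricted to $\tau(g(x))\in S$ is a value function of $g$ supported on $\tau^{-1}(S)$, a set of size $\le m^\eps$, and apply the hypothesis on $g$ directly. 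Your proposal never invokes the small-support hypothesis on $g$ in this role, and without the split it cannot. The sum-of-squares bound and the overall martingale skeleton are essentially as in the paper, so the fix would be to adopt the $V_1$/$V_2$ decomposition and use your martingale only for $V_2$.
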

We believe the theorem to be of independent interest. From a hash function that only performs well for value functions supported on an asymptotically small subset of the bins we can construct a hash function performing well for any value function -- simply by composing with a random permutation.~\Cref{thm:simpleConcentration} shows that simple tabulation satisfies the two conditions in the theorem above. It follows that if $m=|U|^{\Omega(1)}$, e.g., if $m=|\Sigma|$, then composing a simple tabulation hash function $g:\Sigma^c \to [m]$ with a uniformly random permutation $\tau:[m] \to [m]$ yields a hash function $h=\tau \circ g$ having Chernoff-style bounds for general value functions asymptotically matching those from the fully random setting up to an \errorterm inversely polynomial in the size of the universe.
In particular these bounds hold for tabulation-permutation hashing from $\Sigma^c$ to $\Sigma$, that is, using just a single permutation, which yields the result of~\Cref{thm:intro-tab-perm} in the case $d=1$. If we desire a range of size $m\gg |\Sigma|$ the permutation $\tau$ becomes too expensive to store. Recall that in tabulation-permutation hashing from $\Sigma^c$ to $\Sigma^d$ we instead use $d$ permutations $\tau_1,\dots, \tau_d:\Sigma \to \Sigma$, hashing
$$
\Sigma^c \xrightarrow[]{g^{\text{simple}}} \Sigma^{d} \xrightarrow[]{(\tau_1,\dots,\tau_d)} \Sigma^d.
$$
Towards proving that this is sensible, the last step in the proof of~\Cref{thm:intro-tab-perm} is to show that concatenating the outputs of independent hash functions preserves the property of having Chernoff-style concentration for general value functions.

\subsection{Squaring the Hash Range}
The third and final step towards proving~\Cref{thm:intro-tab-perm} is showing that
concatenating the hash values of two independent hash functions each with Chernoff-style
bounds for general value functions yields a new hash function with
similar Chernoff-style bounds up to constant factors.
In particular it will follow that tabulation-permutation hashing has Chernoff-style
bounds for general value functions. However, as with~\Cref{thm:valuefunctions}, the result is of more general interest. Since it uses the input hash functions in a black box manner, it is a general tool towards constructing new hash functions with Chernoff-style bounds. The main theorem, proved in~\Cref{sec:codomainext}, is the following.
\begin{theorem}\label{thm:extendingCodomain}
	Let $h_1\colon A\to B_1$ and $h_2\colon A\to B_2$ be 2-wise independent hash functions with a common domain such that for every pair of value functions, $v_1\colon A\times B_1\to [-1, 1]$ and $v_2\colon A\times B_2\to [-1, 1]$, the random variables $X_1 = \sum_{a\in A}v_1(a, h_1(a))$ and $X_2 = \sum_{a\in A}v_2(a, h_2(a))$ are strongly concentrated with \errorterm $f_1$ and $f_2$, respectively, and the concentration is query invariant. Suppose further that $h_1$ and $h_2$ are independent. Then the hash function $h = (h_1, h_2)\colon A\to B_1\times B_2$, which is the concatenation of $h_1$ and $h_2$, satisfies that for every value function $v\colon A\times(B_1\times B_2)\to [-1, 1]$, the random variable $X=\sum_{a\in A} v(a, h(a)) = \sum_{a\in A}v(a, h_1(a), h_2(a))$ is strongly concentrated with additive error $O(f_1+f_2)$ and the concentration is query invariant.
\end{theorem}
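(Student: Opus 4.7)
The plan is to decompose $X$ by conditioning on $h_1$. Write $X = Y + Z$ with $Y := \mathbb{E}[X \mid h_1]$ and $Z := X - Y$; then $Y = \sum_{a\in A} w(a, h_1(a))$ for $w(a, b_1) := \mathbb{E}_{h_2}[v(a, b_1, h_2(a))] \in [-1,1]$, so $Y$ is a value-function sum in $h_1$ alone, while conditionally on $h_1 = g_1$, $Z = \sum_{a\in A} \tilde v_{g_1}(a, h_2(a))$ for the centered value function $\tilde v_{g_1}(a, b_2) := v(a, g_1(a), b_2) - w(a, g_1(a)) \in [-2,2]$ (rescale by $1/2$ to land in $[-1,1]$). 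From $|X - \mu| \le |Y - \mu| + |Z|$ one has $\Pr[|X - \mu| \ge t] \le \Pr[|Y - \mu| \ge t/2] + \Pr[|Z| \ge t/2]$, and I handle the two tails separately.

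For $Y$, I invoke $h_1$'s hypothesis applied to $w$. By the law of total variance $\Var{X} = \Var{Y} + \mathbb{E}[\Var{Z \mid h_1}]$, so $\sigma_Y^2 := \Var{Y} \le \sigma^2$; since $u \mapsto u\cC(t/u)$ is decreasing in $u$ for fixed $t$, this immediately yields $\Pr[|Y - \mu| \ge t/2] \le O(\exp(-\Omega(\sigma^2 \cC(t/\sigma^2)))) + f_1$ after absorbing constants into the $\Omega$.

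For $Z$, I apply $h_2$'s hypothesis conditionally to obtain $\Pr[|Z| \ge t/2 \mid h_1 = g_1] \le O(\exp(-\Omega(\sigma_2^2(g_1)\cC(t/\sigma_2^2(g_1))))) + f_2$, where $\sigma_2^2(g_1) := \Var{Z \mid h_1 = g_1} = \sum_a r(a, g_1(a))$ and $r(a, b_1) \in [0,4]$ is the $h_2$-variance of $v(a, b_1, h_2(a))$. The crucial observation is that $\sigma_2^2(h_1)$ is itself a value-function sum for $h_1$ on the rescaled $r/4 \in [0,1]$, so $h_1$'s hypothesis applies to it. Combined with the $2$-wise independence of $h_1$, which gives $\Var{\sigma_2^2(h_1)} \le 4\bar\sigma_Z^2$ for $\bar\sigma_Z^2 := \mathbb{E}[\sigma_2^2(h_1)] = \sigma^2 - \sigma_Y^2 \le \sigma^2$, I conclude $\Pr[\sigma_2^2(h_1) > 2\bar\sigma_Z^2] \le O(\exp(-\Omega(\bar\sigma_Z^2))) + f_1$. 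Splitting the expectation over $h_1$ at this threshold and again using the monotonicity of $u \mapsto u\cC(t/u)$, I arrive at $\Pr[|Z| \ge t/2] \le O(\exp(-\Omega(\bar\sigma_Z^2 \cC(t/\bar\sigma_Z^2)))) + O(\exp(-\Omega(\bar\sigma_Z^2))) + f_1 + f_2$, and the first exponential is dominated by $O(\exp(-\Omega(\sigma^2 \cC(t/\sigma^2))))$ since $\bar\sigma_Z^2 \le \sigma^2$.

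The main obstacle is absorbing the atypical-variance term $\exp(-\Omega(\bar\sigma_Z^2))$ into the target $O(\exp(-\Omega(\sigma^2 \cC(t/\sigma^2))))$ uniformly in $t$. When $\bar\sigma_Z^2 \gtrsim \sigma^2 \cC(t/\sigma^2)$ this is automatic. In the complementary regime I would choose the threshold on $\sigma_2^2(h_1)$ adaptively as a function of $t$ rather than fixing it at $2\bar\sigma_Z^2$, so that $h_1$'s tail bound on the variance sum has exponent at least $\Omega(\sigma^2 \cC(t/\sigma^2))$, at the cost of a slightly looser Bennett bound on the conditional $Z$; this trade-off always closes because $\bar\sigma_Z^2 \le \sigma^2$ forces the conditional Bennett exponent to be at least that of the target. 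Finally, query invariance transfers for free: conditioning $h(x_0) = (y_1, y_2)$ is equivalent to independently conditioning $h_1(x_0) = y_1$ and $h_2(x_0) = y_2$, so both the independence of $h_1, h_2$ and the individual query-invariant hypotheses are preserved, and the entire decomposition argument runs identically in the conditioned setting, with only the hidden constants potentially changing.
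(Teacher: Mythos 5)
Your decomposition $X = Y + Z$ with $Y = \E{X\mid h_1}$ and $Z = X - Y$ does match the outer structure of the paper's proof (your $Y$ is the paper's $T_1$, your $Z$ collects the paper's $T_2 + T_3$), and your treatment of $Y$ via the law of total variance and monotonicity of $u\mapsto u\cC(t/u)$ is correct. The gap lies in the tail of $Z$. You control $\Pr[\sigma_2^2(h_1) > \theta]$ by applying $h_1$'s Chernoff hypothesis to the raw variance-sum $\sum_a r(a, h_1(a))$, whose variance is at best $O(\bar\sigma_Z^2) \le O(\sigma^2)$. By monotonicity of $u\mapsto u\cC(t/u)$ and \cref{lem:Benn-func-consts}, the conditional Bennett exponent $\theta\,\cC(t/\theta)$ exceeds $\Omega(\sigma^2\cC(t/\sigma^2))$ only when $\theta \lesssim \max\{\sigma^2, \sqrt{t\sigma^2}\}$. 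But with $\theta \lesssim \sqrt{t\sigma^2}$ and $\bar\sigma_Z^2\le\sigma^2$, the tail bound on the variance-sum has exponent at most of order $\sigma^2\cC(\sqrt{t/\sigma^2})$, which is smaller than the target $\sigma^2\cC(t/\sigma^2)$ by roughly a factor $\sqrt{t/\sigma^2}$ once $t\gg\sigma^2$. Pushing $\theta$ up to restore the tail exponent (your ``adaptive threshold'') necessarily degrades the Bennett side by a compensating amount; a short calculation shows no single threshold on the \emph{total} conditional variance can satisfy both constraints simultaneously for large $t/\sigma^2$, so the claim that ``this trade-off always closes because $\bar\sigma_Z^2\le\sigma^2$'' is not justified.

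The missing ingredient is the paper's \emph{per-key} truncation. The paper splits $Z$ into $T_2$ (keys $a$ with $\VarC{V_a}{X_a} > \sqrt{6\sigma^2/t}$) and $T_3$ (the remaining keys). Two things then happen. First, $|T_2|$ is dominated by the \emph{count} of large-variance keys, which is again a value-function sum in $h_1$ with small mean and variance ($\le \sqrt{t\sigma^2/6}$ via Markov), so the $h_1$-hypothesis plus \cref{lem:increase-var} gives the right exponent. Second — and this is what your approach cannot replicate — the value function encoding $\VarC{T_3}{(X_a)}$ is now bounded by $\sqrt{6\sigma^2/t}$ per key, so multiplying by $\sqrt{t/(6\sigma^2)}$ makes it $[0,1]$-valued with variance $O(\sqrt{t\sigma^2})$ rather than $O(\sigma^2)$. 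The $h_1$-Chernoff bound applied to this rescaled sum, together with \cref{lem:increase-var}, yields a tail bound of $\exp(-\Omega(\sqrt{t\sigma^2}\,\cC(\sqrt{t/\sigma^2}))) = \exp(-\Omega(\sigma^2\cC(t/\sigma^2)))$ on the event that the conditional variance exceeds $2\max\{\sigma^2,\sqrt{t\sigma^2}\}$, exactly recovering the factor $\sqrt{t/\sigma^2}$ your bound loses. Your remark on query invariance is fine and matches the paper's.
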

We argue that~\Cref{thm:extendingCodomain}, combined with the previous results, leads to~\Cref{thm:intro-tab-perm}. 
\begin{proof}[Proof of~\Cref{thm:intro-tab-perm}]

We proceed by induction on $d$. For $d=1$ the result follows from~\Cref{thm:simpleConcentration} and~\ref{thm:valuefunctions} as described in the previous subsection. Now suppose $d>1$ and that the result holds for smaller values of $d$. Let $\gamma=O(1)$ be given.  Let $d_1=\lfloor d/2 \rfloor$ and $d_2=\lceil d/2 \rceil$. A tabulation-permutation hash function $h: \Sigma^c \to \Sigma^d$ is the concatenation of two independent tabulation-permutation hash functions $h_1:\Sigma^c \to \Sigma^{d_1}$ and $h_2:\Sigma^c \to \Sigma^{d_2}$. Letting $A=\Sigma^c$, $B_1=\Sigma^{d_1}$, $B_2=\Sigma^{d_2}$, the induction hypothesis gives that the conditions of \cref{thm:extendingCodomain} are satisfied and the conclusion follows. Note that since $d=O(1)$, the induction is only applied a constant number of times. Hence, the constants hidden in the asymptotics of \cref{def:strongly-concentrated} are still constant.
\end{proof}

\subsection{Concentration in Arbitrary Intervals.}\label{sec:concinintervals}
We will now show how we can use our main result,~\Cref{thm:intro-tab-perm}, together with our improved understanding of simple tabulation~\Cref{thm:simpleConcentration} to obtain~\Cref{thm:intro-tab-1perm} which shows that the extra efficient tabulation-1permutation hashing provides Chernoff-style concentration for the special case of weighted balls and intervals. This section also serves as an illustration of how our previous results play in tandem, and it illustrates the importance of~\Cref{thm:simpleConcentration} holding, not just for single bins, but for any value function of bounded support.

\begin{proof}[Proof of~\Cref{thm:intro-tab-1perm}]
Let $S\subseteq[u]$ be a set of keys, with each key $x\in S$ having a weight 
$w_x\in [0,1]$. Let $h=\tau\circ g\colon \Sigma^c \to \Sigma^d=[r]$ be a tabulation-1permutation hash function, with $g:\Sigma^c \to  \Sigma^d$ a simple tabulation hash function and $\tau\colon \Sigma^d \to \Sigma^d$ a random permutation of the most significant character, $\tau(z_1,\dots,z_d)=(\tau_1(z_1),z_2,\dots,z_d)$ for a uniformly random permutation $\tau_1 \colon \Sigma \to \Sigma$. Let $y_1,y_2\in\Sigma^d$ and $X$ be defined as in~\Cref{thm:intro-tab-1perm}, $X=\sum_{x\in
S}w_x\cdot [y_1\leq h(x)< y_2]$. Set $\mu=\E{X}$, and $\sigma^2=\Var{X}$. 
For simplicity we assume that $|I|\geq r/2$.  Otherwise, we just apply the argument below with $I$ replaced by $[r]\setminus I=[0,y_1)\cup[y_2,r)$, which we view as an interval in the cyclic ordering of $[r]$. We will partition $I=[y_1,y_2)$ into a constant number of intervals in such a way that our previous results yield Chernoff style concentration bound on the total weight of keys landing within each of these intervals. The desired result will follow.

To be precise, let $t>0$ and $\gamma=O(1)$ be given. Let $P_1=\{x \in \Sigma\mid \forall y \in \Sigma^{d-1}:(x,y)\in I\}$ and $I_1=\{(x_1,\dots,x_d)\in \Sigma^d \mid x_1 \in P_1\}$. Whether or not $h(x)\in I_1$ for a key $x\in \Sigma^c$ depends solely on the most significant character of $h(x)$. With $X_1=\sum_{x\in
S}w_x\cdot [h(x)\in I_1]$, $\mu_1=\E{X_1}$, and $\sigma_1^2=\Var{X_1}$, we can therefore apply~\Cref{thm:intro-tab-perm} to obtain that for any $t'>0$ and $\gamma'=O(1)$,
\begin{align}\label{eq:Cherforfirstint}
\Pr[|X_1-\mu_1|\geq t'] \le C\exp(-\Omega(\sigma_1^2\cC(t'/\sigma_1^2)))+1/u^{\gamma'}\leq C\exp(-\Omega(\sigma^2\cC(t'/\sigma^2)))+1/u^{\gamma'},
\end{align}
for some constant $C$.
Here we used that $\sigma_1^2\leq \sigma^2$ as $|I_1|\leq |I|\leq |\Sigma^d|/2$.
Next, let $d_1=\lg |\Sigma|$ and $d_2,\dots, d_\ell\in \N$ be such that for $2\leq i \leq \ell$, it holds that $2^{d_i}\leq (2^{d_1+d_2+\cdots+d_i})^{1/4}$, and further $2^{d_1+d_2+\cdots+d_{\ell}}=|\Sigma|^{d}$. We may assume that $u$ and hence $|\Sigma|$ is larger than some constant as otherwise the bound in~\Cref{thm:intro-tab-1perm} is trivial. It is then easy to check that we may choose $\ell$ and the $(d_i)_{2\leq i \leq \ell}$ such that $\ell =O(\log d)=O(1)$. We will from now on consider elements of $\Sigma^d$ as numbers written in binary or, equivalently, bit strings of length $d':=d_1+\cdots+d_\ell$. For $i=1,\dots,\ell$ we define a map $\rho_i:\Sigma^d \to [2]^{d_1+\cdots+d_i}$ as follows. If $x=b_1\dots b_{d'}\in [2]^{d'}$, then $\rho_i(x)$ is the length $d_1+\cdots+d_i$ bit string $b_{1}\dots b_{d_1+\cdots+d_{i}}$
 Set $J_1=I$. For $i=2,\dots,\ell$ we define $J_i\subseteq I$ and $I_i\subseteq I$ recursively as follows. First, we let $J_i=J_{i-1}\setminus I_{i-1}$. Second, we define $I_i$ to consist of those elements of $x\in J_i$ such that if $y\in \Sigma^c$ has $\rho_i(y)=\rho_i(x)$, then $y\in J_i$. In other words, $I_i$ consists of those elements of $ J_i$ that remain in $J_i$ when the least significant $d_{i+1}+\dots+d_\ell$ bits of $x$ are changed in an arbitrary manner. It is readily checked that for $i=1,\dots, \ell$, $I_i$ is a disjoint union of two (potentially empty) intervals $I_i=I_i^{(1)}\cup I_i^{(2)}$ such that for each $j\in \{1,2\}$ and $x,y\in I_i^{(j)}$, $\rho_i(x)=\rho_i(y)$. Moreover, the sets $(I_i)_{i=1}^\ell$ are pairwise disjoint and $I=\bigcup_{i=1}^\ell I_i$.

We already saw in~\eqref{eq:Cherforfirstint} that we have Chernoff-style concentration for the total weight of balls landing in $I_1$. We now show that the same is true for $I_i^{(j)}$ for each $i=2,\dots,\ell$ and $j\in\{0,1\}$. So let such an $i$ and $j$ be fixed. Note that whether or not $h(x)\in I_i^{(j)}$, for a key $x\in \Sigma^c$, depends solely on the most significant $d_1+\cdots+d_i$ bits of $h(x)$.  
Let $h':\Sigma^c\to [2]^{d_1+\cdots+d_i}$ be defined by $h'(x)=\rho_i(h(x))$. Then $h'$ is itself a simple tabulation hash function and $h'(x)$ is obtained by removing the $d_{i+1}+\cdots+d_\ell$ least significant bits of $h(x)$. Letting $I'=\rho_i(I_i^{(j)})$, it thus holds that $h(x)\in I_i^{(j)}$ if and only if $h'(x)\in I'$. Let now $X_i^{(j)}=\sum_{x\in
S}w_x\cdot [h(x)\in I_i^{(j)}]$, $\mu_i^{(j)}=\E{X_i^{(j)}}$, and $\sigma_1^2=\Var{X_i^{(j)}}\leq \sigma^2$. As $|I'|\leq 2^{d_i}\leq (2^{d_1+\cdots+d_i})^{1/4}$, we can apply~\Cref{thm:simpleConcentration} to conclude that for $t'>0$ and $\gamma'=O(1)$,
\begin{align}\label{eq:Cherforlastint}
\Pr[|X_i^{(j)}-\mu_i^{(j)}|\geq t'] \le C\exp(-\Omega(\sigma_1^2\cC(t'/\sigma_1^2)))+1/u^{\gamma'}\leq C\exp(-\Omega(\sigma^2\cC(t'/\sigma^2)))+1/u^{\gamma'}.
\end{align} 
Now applying~\eqref{eq:Cherforfirstint} and~\eqref{eq:Cherforlastint} with $t'=t/(2\ell-1)$ and $\gamma'=\gamma +\frac{\log (2\ell)}{\log u}=O(1)$, it follows that
\begin{align*}
\Pr[|X-\mu|\geq  t] 
\le&
\Pr[|X_1-\mu_1|\geq t']+\sum_{i=2}^\ell\sum_{j=1}^2 \Pr[|X_i^{(j)}-\mu_i^{(j)}|\geq t']
\le 
2C\ell\exp(-\Omega(\sigma^2\cC(t'/\sigma^2)))+2\ell/u^{\gamma'} \\
=&
O(\exp(-\Omega(\sigma^2 \; \cC(t/\sigma^2))))+1/u^\gamma,
\end{align*}
as desired.
\end{proof}

\section{Preliminaries}
Before proceeding, we establish basic definitions and describe results from the literature which we will use.
\subsection{Notation}
    Throughout the paper, we use the following general notation. 
    \begin{itemize}
        \item We let $[n]$ denote the set $\{0, 1, \dots, n-1\}$.
        \item For a statement or event $Q$ we let $[Q]$ be the indicator variable on $Q$, i.e.,\
    \begin{align*}
        [Q] = 
        \begin{cases}
            1, & Q\text{ occurred or is true},\\
            0, & \text{otherwise}.
        \end{cases}    
    \end{align*}
    \item Whenever  $Y_0, \dots, Y_{n-1}\in \R$ are variables and $i\in [n+1]$, we shall denote by $Y_{<i}$ the sum $\sum_{j<i}Y_j$. Likewise, whenever $A_0, \dots, A_{n-1}$ are sets and $i\in [n+1]$, we shall denote by $A_{<i}$ the set $\bigcup_{j<i}A_j$.
    \item Suppose we have a hash function $h\colon A\to B$ with domain $A$ and range $B$. We shall often associate weight and value functions with $h$ as follows.
		\begin{itemize}
			\item A function $w\colon A\to \R$ is called a \emph{weight function}, corresponding to the idea that every ball or key $x\in A$ has an associated weight, $w(x)\in \R$. Occasionally, we shall write $w_x$ for $w(x)$.
			\item A function $v\colon A\times B\to \R$ is called a \emph{value function}, with the interpretation that a key $x\in A$ yields a value $v(x, h(x))$ depending on the bin/hash value $h(x)\in B$.
		\end{itemize}
		  For weight functions $w:A \to \R$, a subset of balls, $S\subset A$, and a bin $y_0\in B$, we will be interested in sums of the form $W=\sum_{x\in S}w(x)[h(x)=y_0]$, i.e., the total weight of the balls in $S$ that are hashed to bin $y_0$. Defining the value function $v: A\times B \to \R$ by $v(x,y)=w(x)[y=y_0]$, $W$ is exactly equal to $\sum_{x\in S} v(x,h(x))$, i.e., the total value obtained by the balls in $S$. From this perspective, value functions are more general objects than weight functions.
    \end{itemize}

\subsection{Probability Theory and Martingales}
In the following, we introduce the necessary notions of probability theory.
 A note of caution is in order. The paper at hand relies on results from the theory of martingales to arrive at its conclusion. Working with martingales, we shall require probability theoretic notions of a fairly general and abstract character. For an introduction to measure and probability theory, see, for instance, \cite{schilling2005}.

For the most basic notation, let $(\Omega, \mathcal{F}, \Pr)$ be a probability space. 
\begin{itemize}
	\item Let $X_1, \dots, X_n:\Omega \to \R$ be $\mathcal{F}$-measurable random variables. We denote by $\mathcal G = \sigma(X_1, \dots, X_n)\subset \mathcal{F}$ the smallest $\sigma$-algebra such that $X_1, \dots, X_n$ are all $\mathcal G$-measurable. We say that $\mathcal G$ is the \emph{sigma algebra generated by} $X_1, \dots, X_n$. Intuitively, $\sigma(X_1, \dots, X_n)$ represents the collective information regarding the outcome of the joint distribution $(X_1, \dots, X_n)$.
	\item Let $X: \Omega \to \R$ be an $\mathcal{F}$-measurable random variable, and let $\mathcal{G}$ be a $\sigma$-algebra with $\mathcal{G}\subset \mathcal{F}$. If $\E{|X|}<\infty$, we may define the random variable $\EC X{\mathcal G}$ to be the \emph{conditional expectation} of $X$ given $\mathcal G$. It is important to note that $\EC{X}{\mathcal G}$ is $\mathcal G$-measurable. In the context of the above notation, $\EC X{\sigma(X_1, \dots, X_n)} = \EC X{X_1, \dots, X_n}$ is the expectation of $X$ as a function of the outcomes of $X_1, \dots, X_n$.
\end{itemize}

We proceed to discuss martingales and martingale differences. For convenience we shall assume all random variables to be bounded, i.e., whenever $X$ is a random variable, we assume that there exists a constant $M\geq 0$ such that $|X|\leq M$ almost surely. 

\begin{definition}[Filtration]
	Let $(\Omega,\mathcal{P}(\Omega),\Pr)$ be a finite measure space. A sequence of $\sigma$-algebras, $(\mathcal{F}_i)_{i=0}^r$, is a \emph{filtration} of $(\Omega,\mathcal{P}(\Omega),\Pr)$ if  $\{\emptyset,\Omega\}=\mathcal{F}_0\subseteq \mathcal{F}_1 \subseteq \cdots \subseteq \mathcal{F}_r=\mathcal{P}(\Omega)$. We shall usually omit explicit reference to the background space.
\end{definition}
\begin{definition}[Adapted Sequence]
	Let $(\mathcal F_i)_{i=0}^r$ be a filtration. A sequence of random variables $(X_i)_{i=0}^r$ is  \emph{adapted} to $(\mathcal F_i)_{i=0}^r$ if for every $i\in [r+1]$, $X_i$ is $\mathcal F_i$-measurable. In that case, we say that $(X_i, \mathcal F_i)$ is an \emph{adapted sequence}.
\end{definition}
\begin{definition}[Martingale]
	A \emph{martingale} is an adapted sequence, $(X_i, \mathcal F_i)$, satisfying that for every $i\in \{1, \dots, r\}$, $\EC{X_i}{\mathcal{F}_{i-1}}=X_{i-1}$.
\end{definition}
\begin{definition}[Martingale Difference]
	A \emph{martingale difference} is a an adapted sequence,  $(Y_i,\mathcal{F}_i)_{0=1}^r$, such that $Y_0=0$ almost surely and for every $i\in \{1, \dots, r\}$, $\EC{Y_i}{\mathcal{F}_{i-1}}=0$.
\end{definition} 
If $(X_i,\mathcal{F}_i)_{i=0}^r$ is a martingale, we may define the sequence of random variables $(Y_i)_{i=0}^r$ by $Y_0=0$ and $Y_i=X_i-X_{i-1}$ for $i=1,\dots,r$. Then $(Y_i,\mathcal{F}_i)_{i=0}^r$ is a martingale difference. Conversely, if $(Y_i,\mathcal{F}_i)_{i=0}^r$ is a martingale difference, a martingale $(X_i,\mathcal{F}_i)_{i=0}^r$ can be constructed by letting $X_i=Y_{<i+1}=\sum_{j\leq i}Y_j$. Under this correspondence, martingales and martingale differences are in a sense two sides of the same coin. 

 Concluding the section, we describe canonical constructions of a martingale and a martingale difference, respectively, that we shall use later on.
\begin{itemize}
	\item Let $X$ be a random variable and consider a filtration $(\mathcal{F}_i)_{i=0}^r$. We may define a martingale from $X$ with respect to $(\mathcal F_i)_{i=0}^r$ by defining the sequence  of random variables $(X_i)_{i=0}^r$ by $X_i=\EC{X}{\mathcal{F}_i}$ for each $i\in [r+1]$. Clearly, $\EC{X_i}{\mathcal{F}_{i-1}}=\EC{X}{\mathcal{F}_{i-1}} = X_{i-1}$, so $(X_i, \mathcal F_i)_{i=0}^r$ is indeed a martingale.

We shall apply this construction in the following situation. Suppose we have random variables $Z_1, \dots, Z_r$ taking values in the measure spaces $A_1, \dots, A_r$ and denote by $Z$ the joint distribution $(Z_1, \dots, Z_r)$. For some function $f\colon A_1\times \dots A_r\to \R$, we wish to assess the value of $f(Z)$. We may then define the filtration $\mathcal F_i=\sigma(Z_1, \dots, Z_i)$ for $i\in [r+1]$ and set $X_i=\EC{f(Z)}{\mathcal{F}_i}$. This yields a martingale $(X_i, \mathcal F_i)_{i=0}^r$ with $X_0=\E{f(Z)}$ and $X_r=f(Z)$. This is known as a Doob martingale and the construction will be used in~\Cref{sec:valuefunctions} to prove~\Cref{thm:valuefunctions}.
	\item Let $(Z_i, \mathcal{F}_i)_{i=0}^r$ be an adapted sequence and define $Y_0=0$ and $Y_i=Z_i-\EC{Z_i}{\mathcal{F}_{i-1}}$ for $i\in \{1, \dots, r\}$. Then $(Y_i,\mathcal{F}_i)_{i=1}^r$ is a martingale difference. This construction is applied in~\Cref{sec:simple} to prove~\Cref{thm:simpleConcentration}.
\end{itemize}

\subsection{Martingale Concentration Inequalities}
In applications of probability theory, we often consider a sequence of random variables $X_0, \dots, X_r$. If we are lucky, the random variables are independent, pair-wise independent, or a derivative thereof. It is unfortunately often the case, however, that there is no such independence notion that apply to $X_0, \dots, X_r$. One reason that martingales have been as successful as they are, is that frequently, one may instead impose a martingale structure on the variables, and martingales satisfy many of the same theorems that independent variables do. In this exposition, we shall consider sums of the form $X=\sum_{i=0}^rX_i$ where the $X_i$ are far from independent, yet we would like $X$ to satisfy Chernoff-style bounds. 

To this end, we state a martingale version of Bennett's inequality due to Fan et al~\cite{FanGramaLiu}. The reader may note the similarity to \cref{eq:var-chernoff}.
\begin{definition}
	We denote by $\mathcal{C}: (-1,\infty) \to [0,\infty)$ the function given by $\cC(x)=(x+1)\ln (x+1)-x$.
\end{definition}
\begin{theorem}[Fan et al.~\cite{FanGramaLiu}]\label{martingalebennettthm}
    Let $\sigma>0$ be given. Let $(X_i,\mathcal{F}_{i})_{i=0}^r$ be a martingale difference such that almost surely $|X_i|\leq 1$ for all $i \in \{1, \dots, r\}$ and $\sum_{i=1}^r \E {X_i^2 \mid \mathcal{F}_{i-1}}\leq \sigma^2$. Writing $X=\sum_{i=1 }^r X_i$, it holds for any $t\geq 0$  that
    $$
    	\Pr\left[X\geq t\right]\leq e^t\cdot \left( \frac{\sigma^2}{\sigma^2+t}\right)^{\sigma^2+t}.
	$$
\end{theorem}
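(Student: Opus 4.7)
The plan is to establish this via the standard exponential moment method adapted to the martingale setting, concluding with an optimization over the Chernoff parameter $\lambda$. The whole argument is classical once the right elementary inequality is identified.

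First I would establish the pointwise bound that for every $\lambda\geq 0$ and every $x\in[-1,1]$,
\begin{align*}
e^{\lambda x}\;\leq\; 1+\lambda x+(e^{\lambda}-1-\lambda)\, x^{2}.
\end{align*}
This follows from checking that $g(x)=(e^{\lambda x}-1-\lambda x)/x^{2}$ is non-decreasing on $[-1,1]\setminus\{0\}$ (extended by $\lambda^{2}/2$ at $0$), so it is bounded above by $g(1)=e^{\lambda}-1-\lambda$. Applying this with $x=X_{i}$ and taking conditional expectation, the martingale-difference property $\E{X_{i}\mid\mathcal{F}_{i-1}}=0$ together with $1+y\leq e^{y}$ gives
\begin{align*}
\E{e^{\lambda X_{i}}\mid\mathcal{F}_{i-1}}\;\leq\;\exp\!\bigl((e^{\lambda}-1-\lambda)\,V_{i}\bigr),\qquad V_{i}:=\E{X_{i}^{2}\mid\mathcal{F}_{i-1}}.
\end{align*}

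Next I would iterate this bound by the tower property. Writing $\varphi(\lambda)=e^{\lambda}-1-\lambda\geq 0$ and $V=\sum_{i=1}^{r}V_{i}$, I would show by reverse induction on $k$ that
\begin{align*}
\E{\exp\!\Bigl(\lambda\!\!\sum_{i\leq k}\!X_{i}-\varphi(\lambda)\!\!\sum_{i\leq k}\!V_{i}\Bigr)}\;\leq\; 1,
\end{align*}
by conditioning on $\mathcal{F}_{k-1}$ and invoking the one-step bound above (note that the $V_{i}$'s are $\mathcal{F}_{i-1}$-measurable, hence pull out of the conditional expectation at step $k$). Taking $k=r$ and using the hypothesis $V\leq\sigma^{2}$ together with $\varphi(\lambda)\geq 0$, this yields the cumulant bound
\begin{align*}
\E{e^{\lambda X}}\;\leq\;\exp\!\bigl(\sigma^{2}(e^{\lambda}-1-\lambda)\bigr).
\end{align*}

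Finally I would apply Markov's inequality: for every $\lambda\geq 0$,
\begin{align*}
\Pr[X\geq t]\;\leq\; e^{-\lambda t}\,\E{e^{\lambda X}}\;\leq\;\exp\!\bigl(\sigma^{2}(e^{\lambda}-1)-\lambda(t+\sigma^{2})\bigr),
\end{align*}
and optimize. Differentiating in $\lambda$ gives the minimizer $\lambda^{*}=\ln(1+t/\sigma^{2})\geq 0$, and substituting back produces the exponent $t-(t+\sigma^{2})\ln\!\bigl((t+\sigma^{2})/\sigma^{2}\bigr)$, which is exactly $\ln\bigl(e^{t}\,(\sigma^{2}/(\sigma^{2}+t))^{\sigma^{2}+t}\bigr)$, yielding the claimed inequality.

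The only slightly delicate point is the iteration in step two: the relevant quantity $V$ is random rather than deterministic, so one cannot naively pull $\exp(\varphi(\lambda)\sigma^{2})$ out of the expectation at the end. Carrying the term $\exp(-\varphi(\lambda)V)$ through the induction (so that each step exactly cancels the $V_{i}$ contributed by the corresponding one-step bound) and only at the very end using $V\leq\sigma^{2}$ with $\varphi(\lambda)\geq 0$ resolves this cleanly; I expect this to be the main place one has to be careful, but otherwise the proof is routine.
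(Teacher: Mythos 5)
Your proof is correct. The paper does not actually prove this theorem---it simply cites Fan, Grama, and Liu---so there is no in-paper argument to compare against. Your argument is the standard Bennett-style exponential-moment derivation: the pointwise bound $e^{\lambda x}\le 1+\lambda x+(e^\lambda-1-\lambda)x^2$ for $x\in[-1,1]$, $\lambda\ge 0$ (equivalently, monotonicity of $u\mapsto(e^u-1-u)/u^2$), the tower-property iteration carrying the random $\sum_i V_i$ inside the exponential so that the hypothesis $\sum_i V_i\le\sigma^2$ is only invoked at the end, then Markov and the choice $\lambda^*=\ln(1+t/\sigma^2)$. You correctly identify the one subtle point (the predictable quadratic variation is random, so one cannot simply substitute $\sigma^2$ term by term) and handle it properly. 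One minor quibble: you announce ``reverse induction on $k$'' but then state and use the forward inductive claim $\E{\exp(\lambda\sum_{i\le k}X_i-\varphi(\lambda)\sum_{i\le k}V_i)}\le1$ with base case $k=0$; the terminology is off but the argument itself is sound.
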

Simple calculations yield the following corollary.
\begin{corollary}\label{martingalebennettcor}
    Suppose that $(X_i,\mathcal{F}_{i})_{i=0}^r$ is a martingale difference and there exist $M,\sigma\geq 0$ such that $|X_i|\leq M$ for all $i\in \{1, \dots, r\}$ and $\sum_{i=1}^r \E {X_i^2 \mid \mathcal{F}_{i-1}}\leq \sigma^2$. Define $X=\sum_{i=1 }^r X_i$. For any $t\geq 0$ it holds that 
    $$
    	\Pr\left[X\geq t\right]\leq  \exp\left(-\frac{\sigma^2}{M^2} \mathcal{C}\left(\frac{tM}{\sigma^2} \right)\right),
    $$
    where $\mathcal{C}(x)=(x+1)\ln (x+1)-x$.
\end{corollary}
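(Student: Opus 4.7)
The plan is to reduce \cref{martingalebennettcor} directly to \cref{martingalebennettthm} by rescaling, and then to show that the bound produced by the theorem, once rewritten, matches the $\mathcal{C}$-form in the corollary. I do not expect any genuine obstacle here; this is essentially a bookkeeping exercise in which the ``hard part'' is making sure the algebraic identity between the two forms of Bennett's bound is carried out cleanly.

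First I would define the rescaled sequence $X_i' = X_i / M$ (assuming $M > 0$; the case $M = 0$ forces every $X_i$ to vanish almost surely, so $X = 0$ and the bound is trivial). Since the $X_i'$ are deterministic multiples of the $X_i$, the sequence $(X_i', \mathcal{F}_i)_{i=0}^r$ is again a martingale difference, with $|X_i'| \leq 1$ almost surely and
\[
\sum_{i=1}^r \E{(X_i')^2 \mid \mathcal{F}_{i-1}} \;\leq\; \frac{\sigma^2}{M^2} \;=: \;(\sigma')^2.
\]
Thus the hypotheses of \cref{martingalebennettthm} hold for $(X_i')$ with variance proxy $(\sigma')^2$. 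Setting $X' = \sum_{i=1}^r X_i' = X/M$ and applying the theorem at threshold $t' = t/M$ gives
\[
\Pr[X \geq t] \;=\; \Pr[X' \geq t/M] \;\leq\; e^{t/M}\left(\frac{(\sigma')^2}{(\sigma')^2 + t/M}\right)^{(\sigma')^2 + t/M}.
\]

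The remaining step is to rewrite the right-hand side in terms of $\mathcal{C}$. Writing $s = (\sigma')^2 = \sigma^2/M^2$ and $u = t/M$ and taking logarithms, the exponent equals
\[
u - (s+u)\ln\!\left(\frac{s+u}{s}\right) \;=\; -\Bigl[(s+u)\ln((s+u)/s) - u\Bigr] \;=\; -\,s\cdot\mathcal{C}(u/s),
\]
where the last equality is just the definition $\mathcal{C}(x) = (x+1)\ln(x+1) - x$ applied at $x = u/s$. Substituting back $s = \sigma^2/M^2$ and $u/s = tM/\sigma^2$ yields exactly
\[
\Pr[X \geq t] \;\leq\; \exp\!\left(-\frac{\sigma^2}{M^2}\,\mathcal{C}\!\left(\frac{tM}{\sigma^2}\right)\right),
\]
which is the stated bound. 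The one mild subtlety worth flagging is the degenerate case $\sigma = 0$ (and thus $s = 0$): then $X_i = 0$ almost surely for every $i$, so $\Pr[X \geq t] = 0$ for $t > 0$ and the inequality holds trivially; one can also justify it by the convention that $\mathcal{C}(x)/s \to +\infty$ as $s \to 0^+$ for any fixed $x > 0$.
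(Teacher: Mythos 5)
Your proof is correct and is exactly the reduction the paper intends: the paper dismisses the corollary with ``simple calculations,'' and those calculations are precisely your rescaling $X_i' = X_i/M$, an application of \cref{martingalebennettthm} with variance proxy $\sigma^2/M^2$, and the algebraic identity $u - (s+u)\ln((s+u)/s) = -s\,\mathcal{C}(u/s)$. The handling of the degenerate cases $M=0$ and $\sigma=0$, which the paper's theorem statement does not cover since it assumes $\sigma>0$, is a careful and correct addition.
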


Finally, we present three lemmas describing the asymptotic behavior of $\mathcal{C}$. We omit the proofs of the first two since the results are standard and follow by elementary calculus.
\begin{lemma}\label{rateofgrowth}
    For any $x\geq 0$
    \[
        \frac{1}{2}x \ln(x+1)\leq \mathcal{C}(x)\leq x \ln(x+1) \; .
    \]
    For any $x\in [0,1]$
    \[
        \frac{1}{3}x^2 \leq \mathcal{C}(x) \leq \frac{1}{2}x^2 \; ,
    \]
    where the right hand inequality holds for all $x \ge 0$.
\end{lemma}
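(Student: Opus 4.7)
The plan is to verify each of the four inequalities by elementary calculus, reducing to standard facts about the logarithm or to a Taylor expansion of $\mathcal{C}$ with Lagrange remainder. I would first record the derivatives once and for all: $\mathcal{C}(0)=\mathcal{C}'(0)=0$, $\mathcal{C}'(x)=\ln(x+1)$, $\mathcal{C}''(x)=1/(x+1)$, and $\mathcal{C}'''(x)=-1/(x+1)^2$. With these in hand everything follows from a small amount of bookkeeping.

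For the first pair of inequalities, the upper bound $\mathcal{C}(x) \leq x\ln(x+1)$ is just $(x+1)\ln(x+1) - x \leq x\ln(x+1)$, which rearranges to the classical $\ln(x+1) \leq x$. For the lower bound $\tfrac{1}{2}x\ln(x+1) \leq \mathcal{C}(x)$, I would introduce $f(x) = \mathcal{C}(x) - \tfrac{1}{2}x\ln(x+1) = (x/2 + 1)\ln(x+1) - x$, note that $f(0)=0$, and differentiate to obtain $f'(x) = \tfrac{1}{2}\ln(x+1) - \tfrac{x}{2(x+1)}$. The non-negativity of $f'$ on $[0,\infty)$ is then equivalent to the equally standard inequality $\ln(1+x) \geq x/(x+1)$, which can itself be shown by noting that $g(y)=y\ln y-y+1$ satisfies $g(1)=0$ and $g'(y)=\ln y\geq 0$ for $y\geq 1$, after the substitution $y=x+1$.

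For the second pair, Taylor's theorem with Lagrange remainder at $0$ yields, for some $\xi \in (0,x)$,
$$\mathcal{C}(x) = \frac{x^2}{2} - \frac{x^3}{6(1+\xi)^2}.$$
The upper bound $\mathcal{C}(x) \leq x^2/2$ is immediate from the non-positivity of the remainder whenever $x \geq 0$, which covers the claim that this inequality holds on all of $[0,\infty)$ and not only on $[0,1]$. For the lower bound on $[0,1]$, I bound $1/(1+\xi)^2 \leq 1$ to get $\mathcal{C}(x) \geq \tfrac{x^2}{2} - \tfrac{x^3}{6}$, and the desired inequality $\tfrac{x^2}{2} - \tfrac{x^3}{6} \geq \tfrac{x^2}{3}$ rearranges to $x^2/6 \geq x^3/6$, i.e.\ $x \leq 1$, which is precisely our assumption.

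There is no real obstacle here; the only points of care are choosing the right auxiliary function $f$ for the nontrivial lower bound in the first pair, and observing that a single application of Taylor with Lagrange remainder (rather than, say, a convexity or monotonicity argument for $\mathcal{C}(x)/x^2$) delivers both sides of the second pair at once. An alternative proof via the alternating series $(x+1)\ln(x+1) = \sum_{k\geq 1} (-1)^{k+1} x^k/k$ would also work but is messier.
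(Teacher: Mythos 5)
Your proof is correct. The paper explicitly omits the proof of this lemma, remarking only that it is standard and follows by elementary calculus, which is exactly the kind of argument you supply: the first pair reduces to $\ln(1+x)\leq x$ and $\ln(1+x)\geq x/(1+x)$, and the second pair comes from a single Taylor expansion with Lagrange remainder, $\mathcal{C}(x)=\tfrac{x^2}{2}-\tfrac{x^3}{6(1+\xi)^2}$, whose sign immediately gives both the global upper bound and the lower bound on $[0,1]$.
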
 
\begin{lemma}\label{lem:Benn-func-consts}
    For any $a \ge 0$. If $b \ge 1$ then
    \[
        b\C(a) \le \C(ab) \le b^2\C(a) \; .
    \]
    If $0 \le b \le 1$ then
    \[
        b^2\C(a) \le \C(ab) \le b\C(a) \; .
    \]
\end{lemma}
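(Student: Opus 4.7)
My plan is to first reduce the four inequalities to two by a change of variables, and then to reduce those two to a single pointwise inequality about $\ln(1+\cdot)$ by using the integral representation of $\C$.

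First, I would observe that the $b\ge 1$ case and the $0\le b\le 1$ case are equivalent. Given the lemma for all $0\le b'\le 1$ and all $a'\ge 0$, set $b'=1/b$ and $a'=ab$; then the inequalities $b'^2\C(a')\le \C(a'b')\le b'\C(a')$ become exactly $b\C(a)\le \C(ab)\le b^2\C(a)$. So it suffices to prove the lemma only for $b\in[0,1]$.

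Next, since $\C'(x)=\ln(1+x)$ and $\C(0)=0$, we have $\C(x)=\int_0^x\ln(1+t)\,dt$. Substituting $t=bs$ gives the key identity
\[
\C(ab)=\int_0^{ab}\ln(1+t)\,dt = b\int_0^{a}\ln(1+bs)\,ds,
\]
valid for all $a\ge 0$ and $b\ge 0$. The upper bound $\C(ab)\le b\C(a)$ for $b\in[0,1]$ is then immediate from the pointwise inequality $\ln(1+bs)\le \ln(1+s)$ (monotonicity of $\ln$). For the lower bound $\C(ab)\ge b^2\C(a)$, I would establish the pointwise inequality $\ln(1+bs)\ge b\ln(1+s)$ for $b\in[0,1]$ and $s\ge 0$, after which integration over $s\in[0,a]$ and multiplication by $b$ yields the claim.

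The pointwise inequality $\ln(1+bs)\ge b\ln(1+s)$ follows from concavity of $\ln(1+\cdot)$: writing $bs=b\cdot s+(1-b)\cdot 0$, concavity gives $\ln(1+bs)\ge b\ln(1+s)+(1-b)\ln(1+0)=b\ln(1+s)$. Alternatively, one can fix $s$ and set $h(b)=\ln(1+bs)-b\ln(1+s)$, which satisfies $h(0)=h(1)=0$ and $h''(b)=-s^2/(1+bs)^2\le 0$, hence $h\ge 0$ on $[0,1]$.

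I don't foresee a real obstacle; the only conceptual step is spotting the substitution $t=bs$ in the integral that turns a statement about $\C(ab)$ into a statement about $\ln(1+bs)$, which is susceptible to direct pointwise comparison with $\ln(1+s)$. The rest is elementary calculus, and the reduction via $b\leftrightarrow 1/b$ cuts the case analysis in half.
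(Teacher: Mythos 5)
Your proof is correct. The paper explicitly omits the proof of this lemma (``We omit the proofs of the first two since the results are standard and follow by elementary calculus''), so there is no in-paper argument to compare against; your reduction from $b\ge 1$ to $b\in[0,1]$ via $b\mapsto 1/b$, $a\mapsto ab$, followed by the substitution $t=bs$ in $\C(ab)=\int_0^{ab}\ln(1+t)\,dt$ and the pointwise comparisons $\ln(1+bs)\le\ln(1+s)$ and $\ln(1+bs)\ge b\ln(1+s)$ (the latter by concavity of $\ln(1+\cdot)$), is a clean and complete elementary argument that fills that gap.
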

Note that as a corollary, if $b=\Theta(1)$ and $a\geq 0$, then $\mathcal{C}(ba)=\Theta(\C(a))$. The final lemma shows that the bound of~\Cref{martingalebennettcor} only gets worse when $\sigma^2$ or $M$ is replaced by some larger number.
\begin{lemma}\label{lem:Benn-func-var-and-max}
    Let $a \ge 0$ be given. On $\R^+$, the following two functions are decreasing
    \begin{align*}
        x &\mapsto x\C\left(\frac{a}{x}\right) \; , \\
        x &\mapsto \frac{\C(ax)}{x^2} \; .
    \end{align*}
\end{lemma}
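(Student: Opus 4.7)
The plan is to prove each monotonicity claim by computing the derivative directly and showing it is non-positive on $\R^+$. The key ingredient is the identity $\C'(t) = \ln(1+t)$, obtained at once from $\C(t) = (t+1)\ln(1+t) - t$, together with the elementary inequality $\ln(1+t) \le t$ for $t \ge 0$. I will also use the companion inequality $(y+2)\ln(1+y) \ge 2y$ for $y\ge 0$, which is easy but needs a short verification.

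For the first function $f(x) = x\,\C(a/x)$, applying the product and chain rules gives
\[
f'(x) = \C(a/x) - \frac{a}{x}\,\ln\!\left(1+\tfrac{a}{x}\right).
\]
Expanding $\C(a/x) = (a/x+1)\ln(1+a/x) - a/x$ and simplifying, the terms involving the logarithm collapse to a single copy, leaving $f'(x) = \ln(1+a/x) - a/x \le 0$ by the elementary inequality. This is the cleanest case and amounts essentially to a one-line identity.

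For the second function $g(x) = \C(ax)/x^2$, the quotient and chain rules give
\[
g'(x) = \frac{ax\,\ln(1+ax) - 2\,\C(ax)}{x^{3}}.
\]
Setting $y = ax \ge 0$ and substituting the definition of $\C$, the numerator rewrites as $-\bigl((y+2)\ln(1+y) - 2y\bigr)$, so the claim reduces to verifying $(y+2)\ln(1+y) \ge 2y$ for all $y \ge 0$. I will define $\varphi(y) = (y+2)\ln(1+y) - 2y$, note $\varphi(0) = 0$, and compute $\varphi'(y) = \ln(1+y) - y/(y+1)$, which also satisfies $\varphi'(0) = 0$, while $\varphi''(y) = 1/(1+y) - 1/(1+y)^2 = y/(1+y)^2 \ge 0$. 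Hence $\varphi' \ge 0$ on $\R^+$, therefore $\varphi \ge 0$, and consequently $g'(x) \le 0$.

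No conceptual obstacle is anticipated; the whole proof is routine calculus once the identity $\C'(t) = \ln(1+t)$ is in hand. The only small care required is the two-step second-derivative argument for the inequality $(y+2)\ln(1+y)\ge 2y$ in the second part.
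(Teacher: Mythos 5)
Your proof is correct, and it takes a genuinely different route from the paper's. The paper deduces both monotonicity claims as one-line consequences of the preceding \Cref{lem:Benn-func-consts}: for $0<x\le y$ and $b=y/x\ge 1$, the inequality $b\,\C(a)\le\C(ab)$ gives $x\,\C(a/x) = x\,\C\big((a/y)\cdot(y/x)\big) \ge y\,\C(a/y)$, and the inequality $\C(ab)\le b^2\C(a)$ gives $\C(ay)/y^2 \le \C(ax)/x^2$ by the symmetric substitution. Your argument instead differentiates directly, using $\C'(t)=\ln(1+t)$, and reduces the second claim to the elementary inequality $(y+2)\ln(1+y)\ge 2y$, which you verify by a clean two-step second-derivative argument. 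Both are valid; the paper's proof is shorter because it recycles a lemma already on record, while yours is self-contained and makes the underlying calculus explicit (the same calculus that the paper silently invokes in the omitted proof of \Cref{lem:Benn-func-consts}). If you wanted to match the paper's economy you could simply cite \Cref{lem:Benn-func-consts}, but your direct derivation is equally rigorous and arguably more transparent for a reader who has not internalized that lemma.
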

\begin{proof}
Let $0<x\leq y$ be given. We then observe that the first function is indeed decreasing since by the first bound of \cref{lem:Benn-func-consts}, $x\C(a/x)=x\C\left((a/y)\cdot(y/x)\right)\geq y\C\left(a/y\right)$. That the second function is decreasing follows from a similar argument.\qedhere
\end{proof}

%!TEX root = ../Tabulation.tex

\section{Analysis of Simple Tabulation}\label{sec:simple}
In this section, we analyze the simple tabulation hashing scheme. The section is divided in three parts. First, there will be an introductory section regarding simple tabulation hashing and associated notation. Second, we shall prove the sum of squares result (\cref{eq:squares}). The final section presents a proof of \cref{thm:simpleConcentration}. In order to make the exposition slightly simpler and more accessible, we postpone the argument that our concentration bounds are query invariant to~\Cref{sec:queryinvariance}.

    \subsection{Simple Tabulation Basics}
    Simple tabulation hashing as introduced by Zobrist~\cite{zobrist70hashing} is defined as follows. 
    \begin{definition}[Simple Tabulation Hashing]
    	Let $\Sigma$ be an alphabet, $c\geq 1$ an integer, and $m=2^k, k>0$, a power of two. A simple tabulation hash function, $h\colon \Sigma^c\to [m]$, is a random variable taking values in the set of functions from $\Sigma^c$ to $[m]$, chosen with respect to the following distribution. For each $j\in \{1, \dots, c\}$, let $h_j\colon \Sigma\to [m]$ be a fully random hash function, in other words, a uniformly random function from $\Sigma$ to $[m]$. We evaluate $h$ on the key $x=(x_1, \dots, x_c)\in \Sigma^c$ by computing $h(x) = \bigoplus_{j=1}^ch_j(x_j)$, where $\oplus$ denotes bitwise XOR.
    \end{definition}
    Now, towards analyzing simple tabulation hashing, we add the following notation.
    \begin{definition}[Position Character]
    	Let $\Sigma$ be an alphabet and $c\geq 1$ an integer. We call an element $\alpha=(a, y)\in \{1, \dots, c\}\times \Sigma$ a \emph{position character} of $\Sigma^c$.
    \end{definition}
    Let $h\colon \Sigma^c \to [m]$ be a simple tabulation hash function. We may consider a key $x=(x_1, \dots, x_c)\in \Sigma^c$ as a set of $c$ position characters, $\{(1, x_1), \dots, (c, x_c)\}\subseteq \{1, \dots, c\}\times \Sigma$. Recall that $h(x)=\bigoplus_{i=1}^c h_i(x_i)$ for uniformly random functions $h_i\colon \Sigma\to [m]$. For a position character $\alpha=(a, y)\in \{1, \dots, c\}\times \Sigma$, we may overload notation and write $h(\alpha)=h_a(y)$. Extending this, for a set of position characters $A=\{\alpha_1, \dots, \alpha_n\}\subseteq \{1, \dots, c\}\times \Sigma^c$, $h(A)=\bigoplus_{i=1}^nh(\alpha_i)$. Note that this agrees with the correspondence between keys of $\Sigma^c$ and sets of position characters mentioned before, since for $x=(x_1, \dots, x_c)\in \Sigma^c$, $h(x) = h(\{(1, x_1), \dots, (c, x_c)\})$. If finally $A,B\subset \{1,\dots,c\}\times \Sigma$ are sets of position characters we write $A\oplus B$ for the symmetric difference between $A$ and $B$, i.e., $A \oplus B=(A\setminus B) \cup (B \setminus A)$. We note that for a simple tabulation hash function $h$, $h(A \oplus B)=h(A) \oplus h(B)$.

    %For a multiset of position characters, $A\subseteq \{1, \dots, c\}\times \Sigma$, we occasionally view $A$ as a vector of $\{0, 1\}^{\{1, \dots, c\}\times \Sigma}$ with a 1 at entry $(a, y)\in \{1, \dots, c\}\times \Sigma$ if and only if $(a, y)$ occurs an odd number of times in $A$. From the definition of $h$ it follows that, if $A$ and $A'$ are multisets of position characters where each position character $\alpha\in \{1, \dots, c\}\times \Sigma$ occurs with the same parity, then $h(A)=h(A')$, such that the above identification makes sense. In this setting, for two multisets $A=\{\alpha_1,\dots,\alpha_r\}$ and $B=\{\beta_1,\dots,\beta_s\}$, with $A,B\subseteq \{1, \dots, c\}\times \Sigma$, we may define $A\oplus B\in \{0, 1\}^{j}$ to be the vector corresponding to the multiset $\{\alpha_1,\dots,\alpha_r,\beta_1,\dots,\beta_s\}$. We shall sometimes write $A=\emptyset$, if the vector corresponding to $A$ is the zero-vector. \todo{Are we using this identification? Isn't it better to think of $\oplus$ as symmetric difference $\Delta$ when keys are though of as sets of  position characters?}

    \begin{definition}[Projection Onto an Index]
    	Let $c\geq 1$ be an integer and $i\in \{1, \dots, c\}$ be given. We denote by $\pi_i\colon \Sigma^c\to \{1, \dots, c\}\times \Sigma$ the projection onto the $i$th coordinate given by $\pi_i(x_1, \dots, x_c) = (i, x_i)$, i.e., projecting a key $x$ to its $i$th position character. We extend this to sets of keys, such that for $S\subseteq \Sigma^c$, $\pi_i(S) = \setbuilder{\pi_i(x)}{x\in S}$.
    \end{definition}
    
    The following lemma by Thorup and Zhang~\cite{thorup12kwise} describes the independence of sets of position characters of $\Sigma^c$ under a simple tabulation function $h\colon \Sigma^c \to [2^r]$. We provide a proof for completeness.
            \begin{lemma}[Thorup and Zhang~\cite{thorup12kwise}]\label{lem:LinIndep}
        Let $h\colon \Sigma^c\to [2^r]$ be a simple tabulation hash function. For each $i\in \{1, \dots, t\}$, let $s_i\subseteq \{1, \dots, c\}\times \Sigma$ be a set of position characters of $\Sigma^c$. Let $j\in \{1, \dots, t\}$. If every subset of indices $B\subseteq\{1, \dots, t\}$ containing $j$ satisfies $\bigoplus_{i\in B}s_i \neq \emptyset$, then the distribution of $h(s_j)$ is independent of the joint distribution $(h(s_i))_{i\neq j}$. 
            \end{lemma}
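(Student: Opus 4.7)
The plan is to translate the statement into the language of $\mathbb{F}_2$-linear algebra. Identify the set $\{1,\dots,c\}\times \Sigma$ of position characters with a standard basis of $\mathbb{F}_2^N$, where $N=c|\Sigma|$, so that each set of position characters corresponds to its indicator vector and symmetric difference $\oplus$ becomes addition in $\mathbb{F}_2^N$. Under this identification, a simple tabulation hash function $h$ extends to a random $\mathbb{F}_2$-linear map $h\colon \mathbb{F}_2^N \to \mathbb{F}_2^r$ in which the images $h(\alpha)$ of the standard basis vectors are independent uniform elements of $\mathbb{F}_2^r$.

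The first step will be to observe that the hypothesis of the lemma is exactly the statement that $s_j \notin \mathrm{span}_{\mathbb{F}_2}\{s_i : i\neq j\}$: indeed, a nontrivial linear dependence expressing $s_j$ as $\bigoplus_{i\in B'} s_i$ with $B'\subseteq\{1,\dots,t\}\setminus\{j\}$ is equivalent to $\bigoplus_{i\in B'\cup\{j\}} s_i=\emptyset$ with $B'\cup\{j\}\ni j$, which is forbidden. Second, let $b_1,\dots,b_k$ be an $\mathbb{F}_2$-basis of $V:=\mathrm{span}_{\mathbb{F}_2}\{s_i : i\neq j\}$. Since $s_j\notin V$, the vectors $b_1,\dots,b_k,s_j$ are linearly independent in $\mathbb{F}_2^N$.

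The third step is the general fact that a random $\mathbb{F}_2$-linear map (with independent uniform images on a standard basis) sends any linearly independent tuple to an independent uniform tuple in $\mathbb{F}_2^r$. This follows by extending $b_1,\dots,b_k,s_j$ to a basis of $\mathbb{F}_2^N$ and applying an invertible change of coordinates, under which the independent uniform images of the original basis are mapped to independent uniform images of the new basis; restricting to the first $k+1$ coordinates then shows that $h(b_1),\dots,h(b_k),h(s_j)$ are jointly independent and uniform on $\mathbb{F}_2^r$. In particular $h(s_j)$ is independent of $(h(b_1),\dots,h(b_k))$.

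Finally, since every $s_i$ with $i\neq j$ lies in $V$, it is an $\mathbb{F}_2$-combination of $b_1,\dots,b_k$, so by linearity $h(s_i)$ is a deterministic function of $(h(b_1),\dots,h(b_k))$. Independence of $h(s_j)$ from the tuple $(h(b_1),\dots,h(b_k))$ therefore implies independence from the tuple $(h(s_i))_{i\neq j}$, completing the proof. There is no real obstacle here; the only point needing care is the translation between the XOR hypothesis and linear independence in step one, and the observation that $(s_i)_{i\neq j}$ themselves need not be independent, which is why we pass through a basis $b_1,\dots,b_k$ of their span rather than working with them directly.
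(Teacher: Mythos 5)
Your proposal is correct and follows essentially the same route as the paper: both identify position characters with a basis of an $\mathbb{F}_2$-vector space, note that a simple tabulation hash function is equivalent to a random linear map, and translate the XOR hypothesis into the statement that $s_j$ is linearly independent of $\{s_i : i\neq j\}$. The paper compresses the conclusion into ``it follows by elementary linear algebra''; your steps two through four are a careful unpacking of that phrase, passing through a basis of the span and a change of coordinates, which is exactly the right way to fill in the detail.
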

    
    \begin{proof}
    Let $\FF_2$ be the field $\Z/2\Z$ and $V$ the $\FF_2$-vector space $\FF_2^{\{1,\dots,c\}\times \Sigma}$. For a set of position characters $A$, we define $v_A\in V$ as follows: For $(a,y)\in \{1,\dots,c\} \times \Sigma$ we let $v_A(a,y)=1$ if and only if $(a,y)\in A$, and $v_A(a,y)=0$ otherwise. Picking a random simple tabulation hash function $h:\Sigma^c \to [2^r]$ is equivalent to picking a random \emph{linear} function $h': V \to [2^r]$. Here $[2^r]$ is identified with the $\FF_2$-vector space $\FF_2^r$. Indeed, $(v_{\{\alpha\}})_{\alpha \in \{1,\dots,c\}\times \Sigma}$ forms a basis for $V$, and choosing a random linear map $h':V \to[2^r]$ can be done by picking independent and uniformly random values for $h'$ on the basis elements, and extending by linearity. To define $h$ from $h'$, we simply put $h(x)=\bigoplus_{\alpha \in x} h'(v_{\{\alpha\}})$ for a key $x\in \Sigma$ viewed as a set of position characters. Conversely, a simple tabulation hash function $h:\Sigma^c \to [2^r]$ uniquely extends to a linear map $h': V \to [2^r]$. Now under this identification, the condition in the lemma is equivalent to $v_{s_j}$ being linearly independent of the vectors $(v_{s_i})_{i \neq j}$. As $h'$ is a random linear map, it follows by elementary linear algebra that $h'(v_{s_j})=h(s_j)$ is independent of the joint distribution $(h'(v_{s_i}))_{i \neq j}=(h(s_i))_{i \neq j}$, as desired.

        %Let $\FF$ be the field $\Z/2\Z$ and $V$ the vector space $[m]=(\Z/2\Z)^{\log m}$ over $\FF$. As hinted in the preliminaries, we can view each $s_i$ as a vector $v_i\in \FF^{c\cdot \abs \Sigma}$ with a 1 at position $(a, \alpha)$ if and only if $(a, \alpha)$ is contained in $s_i$ an odd number of times. Further, we consider $h$ to be the uniform distribution over $V^{c\cdot \Sigma}$. Now, considering $v_i$ as a linear transformation $V^{c\cdot \Sigma}\to V$ by viewing it as a matrix, $v_i\in \F^{1\times (c\cdot \Sigma)}$, evaluation of $h$ on $s_i$ is performed as $v_i h$. It is easy to see that the distribution of $v_ih$ is indeed the same as the distribution of $h(s_i)$.

        %The lemma is now equivalent to the claim that if for some index $j\in \{1, \dots, t\}$, $v_j$ is independent as a vector in $\FF^{c\cdot \abs\Sigma}$ of the vectors $(v_i)_{i\neq j}$ then the distribution of $v_jh$ is independent of the joint distribution $(v_ih)_{i\neq j}$. However, this is an elementary fact of linear algebra and the conclusion follows.
    \end{proof}
    \subsection{Bounding the Sum of Squared Deviations}
In the following section we shall prove the bound \eqref{eq:squares} of~\Cref{thm:simpleConcentration} from~\cref{sec:combine_simple}, stated independently here as \cref{thm:sum-squares}. It is a technical, albeit crucial, step on the way to proving \cref{thm:simpleConcentration} itself. The foundation of the proof of \cref{thm:sum-squares} is a series of combinatorial observations regarding simple tabulation hashing. 

 Recall from \cref{sec:techsimple} our general proof strategy when proving concentration bounds for simple tabulation hashing. For a set of keys $S\subseteq \Sigma^c$ to be hashed, we fix an ordering of the position characters of $\Sigma^c$. We then fix the hash table entries corresponding to the position characters one at a time according to this ordering. Crucial to the success of this strategy is fixing an ordering where each position character ``decides'' only a small part of the final outcome.
\begin{definition}[Group of Keys]
	Let $S\subseteq \Sigma^c$ be a set of keys and $A=\setbuilder{\alpha\in x}{x\in S}$ be the set of position characters of the keys of $S$. For an enumeration or ordering of the position characters of $A$ as $\{\alpha_1, \dots, \alpha_r\}=A$, we denote by $G_i\subseteq S$ the $i$th \emph{group of keys} with respect to $S$ and the ordering of the position characters. The set is given by $G_i = \setbuilder{x\in S}{\{\alpha_i\}\subseteq x\subseteq \{\alpha_1, \dots, \alpha_i\}}$.
\end{definition}
Put in other words, let $\prec$ denote the ordering on $A$, let $x$ be a key of $S$, and let $\beta_1, \dots, \beta_c$ be the position characters of $x$ such that $\beta_1\prec \beta_2\prec\dots \prec \beta_c$, i.e., $\beta_c$ is last in the ordering of $A$. Then $x\in G_i$ if and only if $\alpha_i=\beta_c$. In relation to simple tabulation, this has the following meaning. In the proof, we shall fix the values $h(\alpha_j)$ one at a time starting at $j=1$ and ending at $j=r$. For every $x\in G_i$, the value of $h(x)$ is then undecided before $h(\alpha_i)$ is known, but is known once $h(\alpha_1), \dots, h(\alpha_i)$ are all fixed.
In analyzing the contribution of each group to the final outcome of the process, we start by proving a generalization of a result from \cite{patrascu12charhash}. It says that if we assign each key a weight, it is always possible to choose the ordering of the position characters such that the total weight of each group is relatively small. The original lemma simply assigned weight 1 to every key.
\begin{lemma}\label{lem:groups}
    Let $S \subseteq \Sigma^c$ be given and let $A = \setbuilder{\alpha \in x}{x \in S}$
    be the position characters of the keys of $S$. Let
    $w \colon \Sigma^c \to \R_{\ge 0}$ be a weight function. Then there exists
    an ordering of the position characters,
    $\{\alpha_1, \ldots, \alpha_r\}= A$ such that for every $i\in \{1, \dots, r\}$,
    the group
    $
        G_i = \setbuilder{x \in S}{\set{\alpha_i} \subseteq x \subseteq \set{\alpha_1, \ldots \alpha_i}}
    $
    satisfies
    \[
        \sum_{x \in G_i} w(x)
            \le \left(\max_{x \in S} w(x) \right)^{1/c} \left(\sum_{x \in S} w(x)\right)^{1 - 1/c} \; .
    \]
\end{lemma}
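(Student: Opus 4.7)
The plan is to construct the ordering greedily from the back to the front: first decide $\alpha_r$, then $\alpha_{r-1}$, and so on. At a generic step we will have already committed a suffix $\alpha_{i+1}\prec\cdots\prec\alpha_r$, and the keys that have already been assigned to their group are exactly those that contain at least one of these suffix characters. What remains is the set $S'\subseteq S$ of keys that use only the still-unplaced position characters $A'\subseteq A$, and our job is to pick one $\alpha_i\in A'$ so that the resulting group $G_i=\{x\in S' : \alpha_i\in x\}$ has small total weight; the lemma will then follow by induction on $|S'|$, since position characters of $A$ that eventually belong to no remaining key may be inserted arbitrarily at the front of the ordering (producing empty groups).

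For the greedy choice, split $A'$ by coordinate as $A'=A'_1\cupdot\cdots\cupdot A'_c$ with $a_j=|A'_j|$, and for each $\alpha=(j,y)\in A'$ write $w_j(y)=\sum_{x\in S',\,\pi_j(x)=\alpha} w(x)$. The two basic inequalities I will use are:
\begin{itemize}
\item For every $j$, $\sum_{y\in A'_j} w_j(y)=W':=\sum_{x\in S'}w(x)$, so there exists $y\in A'_j$ with $w_j(y)\le W'/a_j$.
\item Since $S'\subseteq A'_1\times\cdots\times A'_c$ and every key has weight at most $W_{\max}:=\max_{x\in S}w(x)$, we have $W'\le W_{\max}\cdot|S'|\le W_{\max}\prod_j a_j$, so $\prod_j a_j\ge W'/W_{\max}$.
\end{itemize}
Now pick $j^\ast\in\arg\max_j a_j$, which by the AM--GM-type bound satisfies
\[
a_{j^\ast}\;\ge\;\Bigl(\prod_{j=1}^c a_j\Bigr)^{1/c}\;\ge\;\bigl(W'/W_{\max}\bigr)^{1/c},
\]
and let $\alpha_i=(j^\ast,y^\ast)\in A'_{j^\ast}$ be a minimizer of $w_{j^\ast}(\cdot)$. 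Combining the two inequalities,
\[
\sum_{x\in G_i} w(x)=w_{j^\ast}(y^\ast)\;\le\;\frac{W'}{a_{j^\ast}}\;\le\;(W')^{1-1/c}\,W_{\max}^{1/c}\;\le\;W^{1-1/c}\,W_{\max}^{1/c},
\]
which is exactly the desired bound. I then recurse on $S'':=S'\setminus G_i$, whose position-character set $A''\subseteq A'\setminus\{\alpha_i\}$ is strictly smaller; the recursion is well-defined because we only need $W_{\max}$ and $W$ as global quantities (the recursive call would only improve them), and it terminates when $S''=\emptyset$.

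The only real content is the pair of inequalities above; once one notices that bounding $\prod_j a_j$ from below requires converting the weight budget into a cardinality budget via $W_{\max}$, everything else is bookkeeping. The mild subtlety I would flag is that one must define $A'$ as the position characters actually appearing in the \emph{remaining} keys $S'$, not in $S$, since otherwise the averaging bound $w_j(y)\le W'/a_j$ need not produce a small group. Characters of $A$ that become orphaned along the way contribute empty groups and can be prepended to the ordering in any order, which preserves the required form $G_i=\{x\in S:\alpha_i\in x\subseteq\{\alpha_1,\dots,\alpha_i\}\}$.
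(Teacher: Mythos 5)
Your argument is correct and coincides in all essentials with the paper's proof: both build the ordering backwards, pick a coordinate $j^\ast$ with many remaining position characters (using $\prod_j a_j \geq |S'| \geq W'/W_{\max}$ and a pigeonhole/AM--GM step to get $a_{j^\ast}\geq (W'/W_{\max})^{1/c}$), and then choose the position character in that coordinate carrying the least weight of $S'$. The only remark is that the ``subtlety'' you flag is not one: the paper's $B_k=\pi_k(T_i)$ deliberately includes orphaned characters, and this is harmless --- orphaned characters have zero weight, so they only enlarge the divisor in the averaging bound, and if one is ever selected it just yields an empty group, exactly as in your prepending step.
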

\begin{proof}
    We define the ordering recursively and backwards as $\alpha_r, \dots, \alpha_1$.
    Let $T_i = A \setminus \set{\alpha_{i+1}, \dots, \alpha_r}$ and
    $S_i = \setbuilder{x \in S}{x \subseteq T_i}$.
    We prove that we can find an $\alpha_i \in T_i$ such that
    \[
        G_{i} = \setbuilder{x \in S_i}{\alpha_i \in x} \; ,
    \]
    satisfies
    \[
        \sum_{x \in G_i} w(x)
            \le \left(\max_{x \in S_i} w(x) \right)^{1/c} \left(\sum_{x \in S_i} w(x)\right)^{1 - 1/c} \; ,
    \]
    which will establish the claim. Let $B_k$ be the set of position characters
    at position $k$ contained in $T_i$, i.e., $B_k = \set{(k, y) \in T_i}=\pi_k(T_i)$.
    Then as $\prod_{k=1}^c \abs{B_k} \geq \abs{S_i}$, we have $\abs{B_k} \geq \abs{S_i}^{1/c}$ for some $k$.

    Since each key of $S_i$ contains at most one position character from $B_k$,
    we can choose $\alpha_i$ such that
    \[
        \sum_{x \in G_i} w(x)
            \le \frac{\sum_{x \in S_i} w(x)}{\abs{B_k}}
            \le \frac{\sum_{x \in S_i} w(x)}{\abs{S_i}^{1/c}}
            % \le \left(\frac{\sum_{x \in S_i}}{\abs{S_i}}\right)^{1/c} \left(\sum_{x \in S_i} w(x)\right)^{1 - 1/c}
            \le \left(\max_{x \in S_i} w(x) \right)^{1/c} \left(\sum_{x \in S_i} w(x)\right)^{1 - 1/c} \; .
    \]
\end{proof}

    % \begin{lemma}\label{lem:Groups}
    %     Let $S\subseteq \Sigma^c$ be given and let $A=\setbuilder{\alpha\in x}{x\in S}$ be the position characters of the keys of $S$. Then there exists an ordering of the position characters of $A$, $\alpha_1, \dots, \alpha_{r} \in A, r=\abs{A}$ such that for every $i$, the set
    %     $$G_{i}=\{ x\in S\mid \{\alpha_i\}\subseteq x \subseteq \{\alpha_1, \dots, \alpha_i\} \}$$
    %     satisfies $\norm{G_i}_2\leq \norm{S}_\infty\cdot  \norm{S}_2^{1-1/c}$.
    % \end{lemma}
    % \begin{proof}
    %     We define the ordering recursively and backwards as $\alpha_r, \dots, \alpha_1$. Let
    %     $T_i= A\setminus \{\alpha_{i+1}, \dots, \alpha_r\}$.
    %     We prove that we can find an $\alpha_i\in T_i$ such that
    %     $$G_{i}=\{ x\in S\mid \{\alpha_i\}\subseteq x \subseteq \{\alpha_1, \dots, \alpha_i\} \}$$
    %     satisfies $\norm{G_i}_2\leq \norm{S}_\infty\cdot  \norm{S}_2^{1-1/c}$, which will establish the claim. Let $B_k$ be the set of position characters at position $k$ contained in $T_i$, i.e.\ $B_k = \{(k, y)\in T_i\}$. Further, let $X'\subseteq S$ be the keys of $S$ with position characters only from $T_i$, i.e.\ $X'=\setbuilder{x\in S}{x\subseteq T_i}$. Then as $\prod_{k=1}^c\abs{B_k}\geq \abs X'$, we have $\abs{B_k}\geq \abs{X}^{1/c}$ for some $k$.

    %     Since each key of $X'$ contains at most one position character from $B_k$, we can choose $\alpha_i$ such that $$\norm{G_i}_2^2\leq \norm{X'}_2^2/\abs{B_k}\leq \norm{X'}_2^2/\abs{X}^{1/c}\leq \norm{X'}_2^{2(1-1/c)}\cdot \norm{X'}_\infty^{2/c}\leq \left(\norm{S}_\infty\cdot \norm{S}_2^{1-1/c}\right)^2.$$
    % \end{proof}

 Suppose we have keys $x_1, \dots, x_t\in \Sigma^c$. It follows as a corollary of  \cref{lem:LinIndep} that with a simple tabulation hash function $h\colon \Sigma^c\to [m]$, the values $h(x_1), \dots, h(x_t)$ are completely independent if and only if there does not exist a subset of indices $B\subseteq \{1, \dots, t\}$ with $\bigoplus_{i\in B} x_i=\emptyset$. In this vein, it turns out to be natural, given sets of keys $A_1, \dots, A_{\ell}\subseteq \Sigma^c$, to bound the number of tuples $x_1\in A_1, \dots, x_{\ell}\in A_{\ell}$ with $\bigoplus_{i=1}^{\ell}x_i = \emptyset$. This is the content of Lemma 3 of \cite{DKRT15:k-part}. We prove the following generalization of this result, which deals with weighted keys. Note that the statements would be identical if each key was assigned the weight 1.
    \begin{lemma}\label{lem:CollBound}
            Let $\ell\in \N$ be even, $w_1,\dots,w_\ell\colon \Sigma^c\to \R$ be weight functions, and $A_1, \dots, A_\ell\subseteq \Sigma^c$ be sets of keys. Then
            \begin{align*}
                \sum_{\substack{x_1\in A_1, \dots, x_\ell\in A_\ell \\ \bigoplus_{k=1}^\ell x_k = \emptyset}} \prod_{k=1}^\ell w_k(x_k)\leq ((\ell-1)!!)^c\cdot \prod_{k=1}^\ell \sqrt{\sum_{x\in A_k}w_k(x)^2}.
            \end{align*}
    \end{lemma}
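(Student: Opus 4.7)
\medskip

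\noindent\textbf{Proof proposal.} The plan is to reduce the weighted collision count to a union bound over ``witness matchings'' and then bound, by iterated Cauchy--Schwarz, the contribution of each witness. The key observation is that the condition $\bigoplus_{k=1}^{\ell} x_k = \emptyset$ decouples coordinate-wise: for every position $i \in \{1,\dots,c\}$, every value in $\Sigma$ must appear an even number of times in the multiset $\{\pi_i(x_1),\dots,\pi_i(x_\ell)\}$. Consequently, for each coordinate $i$ there exists a perfect matching $M_i$ on $\{1,\dots,\ell\}$ such that $\pi_i(x_j) = \pi_i(x_k)$ whenever $\{j,k\} \in M_i$. Since the number of perfect matchings on $\ell$ elements is $(\ell-1)!!$, the number of tuples $\vec M = (M_1,\dots,M_c)$ of ``witness matchings'' is exactly $((\ell-1)!!)^c$. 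Writing
\[
T(\vec M) = \{(x_1,\dots,x_\ell) \in A_1 \times \cdots \times A_\ell : x_{j,i} = x_{k,i} \text{ for all } i \text{ and } \{j,k\} \in M_i\},
\]
every tuple with $\bigoplus x_k = \emptyset$ is contained in at least one $T(\vec M)$, and thus
\[
\sum_{\substack{x_k \in A_k \\ \bigoplus x_k = \emptyset}} \prod_{k=1}^{\ell} w_k(x_k) \le \sum_{\vec M} \sum_{(x_1,\dots,x_\ell) \in T(\vec M)} \prod_{k=1}^{\ell} w_k(x_k).
\]

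\smallskip

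The heart of the argument is to show that for any fixed $\vec M = (M_1,\dots,M_c)$,
\begin{equation}\label{eq:keyCS}
\sum_{(x_1,\dots,x_\ell) \in T(\vec M)} \prod_{k=1}^{\ell} w_k(x_k) \;\le\; \prod_{k=1}^{\ell} \sqrt{\sum_{x \in A_k} w_k(x)^2},
\end{equation}
which, combined with the union bound above, yields the lemma. I would prove \eqref{eq:keyCS} by induction on $c$. For $c=1$, the constraint forces $x_j = x_k$ for each $\{j,k\} \in M_1$, so the sum factorises as $\prod_{\{j,k\} \in M_1} \sum_{x} w_j(x) w_k(x)$ (treating $w_k$ as zero outside $A_k$), and a single application of Cauchy--Schwarz per pair gives the bound.

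\smallskip

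For the inductive step, I would peel off the last coordinate. Writing $x_k = (x_k^{<c}, x_{k,c})$ and letting $p(k)$ denote the pair of $M_c$ containing $k$, the constraint at coordinate $c$ forces a common value $y_p \in \Sigma$ on each pair $p \in M_c$. Summing first over $(x_k^{<c})_{k=1}^\ell$ with $(y_p)$ fixed, the inner sum is a $(c{-}1)$-coordinate instance with the same structure; by the inductive hypothesis it is at most $\prod_k \sqrt{\sum_{z} w_k(z, y_{p(k)})^2}$. Setting $\phi_k(y) := \sqrt{\sum_z w_k(z,y)^2}$ and summing over $(y_p)$ gives
\[
\sum_{(y_p)} \prod_{p=\{j,k\} \in M_c} \phi_j(y_p)\phi_k(y_p) \;=\; \prod_{p=\{j,k\}} \sum_{y \in \Sigma} \phi_j(y)\phi_k(y),
\]
and one more Cauchy--Schwarz per pair yields $\prod_{p=\{j,k\}} \sqrt{\sum_y \phi_j(y)^2}\sqrt{\sum_y \phi_k(y)^2} = \prod_k \sqrt{\sum_x w_k(x)^2}$, completing the induction.

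\smallskip

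The main conceptual obstacle is to see that the ``multiple witnesses'' issue only makes the bound weaker (a valid tuple may admit many compatible matchings when some $\Sigma$-value occurs with multiplicity $\ge 4$ in some coordinate), so that overcounting by a factor depending on the configuration is absorbed harmlessly into the clean $((\ell-1)!!)^c$ bound. The technical obstacle is organising the iterated Cauchy--Schwarz cleanly: the induction must pair up the weights in precisely the way dictated by $M_c$, and this is exactly what allows the factors $\sum_x w_k(x)^2$ to appear symmetrically for all $k$ in the final product.
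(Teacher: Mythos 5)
Your proof is correct and takes essentially the same route as the paper: induction on $c$, peeling off the last coordinate, matching indices into pairs at each coordinate, and one application of Cauchy--Schwarz per pair. The only organizational difference is that you isolate the clean per-matching bound (with no $(\ell-1)!!$ factor) as the inductive invariant and perform a single union bound over all $((\ell-1)!!)^c$ matching tuples at the end, whereas the paper carries the accumulating $((\ell-1)!!)^{c-1}$ factor through the recursion by applying the full inductive hypothesis at each level---a cosmetic, arguably tidier, reshaping of the same argument.
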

    \begin{proof}
        For every $(x_1, \dots, x_\ell)\in A_1\times\dots \times A_\ell$ satisfying $\bigoplus_{k=1}^\ell x_k = \emptyset$ we have  $\bigoplus_{k=1}^\ell \{\pi(x_k, c)\}=\emptyset$. This implies that each character in the $c$-th position occurs an even number of times in $(x_1, \dots, x_\ell)$. Thus, for any such tuple we can partition the indices $1, \dots, \ell$ into pairs $(i_1, j_1), \dots, (i_{\ell/2}, j_{\ell/2})$ satisfying $\pi(x_{i_k}, c) = \pi(x_{j_k}, c)$ for every $k\in \{1, \dots, \ell\}$. Fix such a partition and let $X\subseteq A_1\times\dots \times A_\ell$ be the set
        \begin{align*}
            X = \{(x_1, \dots, x_\ell)\in A_1\times\ldots \times A_\ell\mid \forall k\in \{1, \dots, \ell/2\}\colon \pi(x_{i_k}, c) = \pi(x_{j_k}, c)\}.
        \end{align*}
        We proceed by induction on $c$.

        For $c=1$, $\pi(x_{i_k}, c) = \pi(x_{j_k}, c)$ implies $x_{i_k}=x_{j_k}$ such that
        \begin{align*}
            X = \{(x_1, \dots, x_\ell)\in A_1\times\ldots \times A_\ell\mid \forall k\in \{1, \dots, \ell/2\}\colon x_{i_k} = x_{j_k}\}.
        \end{align*}
        Thus, by the Cauchy-Schwartz inequality,
        \begin{align*}
            \sum_{(x_1, \dots, x_\ell)\in X}\prod_{k=1}^\ell w_k(x_k) =& \prod_{k=1}^{\ell/2} \sum_{x\in A_{i_k}\cap A_{j_k}}w_{i_k}(x)w_{j_k}(x) \\
            \leq &\prod_{k=1}^{\ell/2} \left(\sqrt{\sum_{x\in A_{i_k}}w_{i_k}(x)^2}\cdot \sqrt{\sum_{x\in A_{j_k}}w_{j_k}(x)^2}\right) \\
            \leq & \prod_{k=1}^\ell \sqrt{\sum_{x\in A_k}w_k(x)^2}.
        \end{align*}
    Since this is true for any partition into pairs, $(i_1, j_1), \dots, (i_{\ell/2}, j_{\ell/2})$, there are exactly $(\ell-1)!!$ such partitions, and every term in the original sum is counted by some partition, we get the desired bound for $c=1$.

        Let $c>1$ and assume that the statement holds when each key has $<c$ characters. 
        For each $a\in \Sigma$ and $k\in \{1, \dots, \ell\}$ define the set
        \begin{align*}
            A_k[a] = \{ x\in A_k\mid \pi(x, c) = a \}.
        \end{align*} 
        Fixing the last character of each pair in our partition by picking $a_1, \dots, a_{\ell/2}\in \Sigma$ and considering the sets $A_{i_k}[a_k]$ and $A_{j_k}[a_k]$, we can consider the keys of $\prod_{k=1}^{\ell/2} A_{i_k}[a_k]\times A_{j_k}[a_k]$ as only having $c-1$ characters, which allows us to apply the induction hypothesis. This yields
        \begin{align*}
                \sum_{\substack{(x_1, \dots, x_\ell)\in X\\ \bigoplus_{k=1}^\ell x_k = \emptyset}} \prod_{k=1}^\ell w_k(x_k) &=
                \sum_{(a_k)_{k=1}^{\ell/2}\in \Sigma^{\ell/2}}
                    \left(
                        \sum_{\substack{(x_{i_k}, x_{j_k})_{k=1}^{\ell/2}\in \prod_{k=1}^{\ell/2} A_{i_k}[a_k]\times A_{j_k}[a_k]\\ \bigoplus_{k=1}^\ell x_k = \emptyset}}
                            \prod_{k=1}^{\ell/2} w_{i_k}(x_{i_k})w_{j_k}(x_{j_k})
                    \right)\\
                    &\leq \sum_{(a_k)_{k=1}^{\ell/2}\in \Sigma^{\ell/2}}
                        \left( 
                            ((\ell-1)!!)^{c-1}\cdot 
                            \prod_{k=1}^{\ell/2} 
                                \left(\sqrt{\sum_{x\in A_{i_k}[a_k]} w_{i_k}(x)^2}\cdot \sqrt{\sum_{x\in A_{j_k}[a_k]} w_{j_k}(x)^2}\right)
                        \right)\\
                    & = ((\ell-1)!!)^{c-1}\cdot\prod_{k=1}^{\ell/2}\left(\sum_{a\in \Sigma} 
                                \left(\sqrt{\sum_{x\in A_{i_k}[a]} w_{i_k}(x)^2}\cdot \sqrt{\sum_{x\in A_{j_k}[a]} w_{j_k}(x)^2}\right)\right)\\
                    &\leq ((\ell-1)!!)^{c-1}\cdot \prod_{k=1}^{\ell/2}\left( \sqrt{\sum_{a\in \Sigma} \sum_{x\in A_{i_k}[a]} w_{i_k}(x)^2}\cdot \sqrt{\sum_{a\in \Sigma}\sum_{x\in A_{j_k}[a]} w_{j_k}(x)^2} \right)\\
                    &=((\ell-1)!!)^{c-1}\cdot\prod_{k=1}^{\ell/2} 
                                \left(\sqrt{\sum_{x\in A_{i_k}} w_{i_k}(x)^2}\cdot \sqrt{\sum_{x\in A_{j_k}} w_{j_k}(x)^2}\right),
            \end{align*}
            where the last inequality follows from the Cauchy-Schwartz inequality.
            Since the indices can be partitioned into pairs in $(\ell-1)!!$ ways, the same argument as in the induction start yields
            \begin{align*}
                \sum_{\substack{x_1\in A_1, \dots, x_\ell\in A_\ell \\ \bigoplus_{k=1}^\ell x_k = \emptyset}} \prod_{k=1}^\ell w_k(x_k)\leq ((\ell-1)!!)^c\cdot \prod_{k=1}^\ell \sqrt{\sum_{x\in A_k}w_k(x)^2},
            \end{align*}
            which was the desired conclusion.
    \end{proof}
The following rather technical lemma bounds the moments of collisions between sets of keys. However, we shall dwell on it for a moment as it reflects considerations that will come up repeatedly going forward. Consider a simple tabulation function $h\colon \Sigma^c\to [m]$ and a value function $v\colon \Sigma^c\times [m]\to \R$. Hashing the keys of some subsets $A_1, \dots, A_n\subseteq \Sigma^c$ into $[m]$ using $h$, we are interested in the sums $X_i=\sum_{x\in A_i}v(x, h(x))$ for $1\leq i\leq n$ and, in particular, in properties of the joint distribution $(X_1, \dots, X_n)$. Here, the actual values of $X_i$ are not as important as how much $X_i$ deviates form its mean. For $1\leq i\leq n$, We thus consider the variables
$$ Y_i = X_i-\E{X_i} = \sum_{x\in A_i}\sum_{b\in [m]}v(x, b)\left( [h(x) = b]-\frac{1}{m} \right),$$
and for a level of generality required for proving the main theorems of this section, we consider the \emph{shifted} variables 
$$Y_i^{(j)} = \sum_{x\in A_i}\sum_{b\in [m]}v(x, b)\left( [h(x) = j\xor b]-\frac{1}{m} \right),$$
for $j\in [m]$, corresponding to shifting the hash function $h$ by $j\in [m]$.
\begin{lemma}\label{lem:moment}
    Let $h\colon \Sigma^c\to [m]$ be a simple tabulation hash function and
    $v\colon \Sigma^c \times [m] \to \R$ a value function. Let
    $Q = \setbuilder{i \in [m]}{\exists x \in \Sigma^c : v(x, i) \neq 0}$ be the
    support of $v$ and write $\ell = \abs{Q}$. Let $n \in \N$ and $A_1, \ldots, A_n \subseteq \Sigma^c$.
    For every $i \in \{1, \dots, n\}$ and $j \in [m]$ define the random variable
    \[
        Y_i^{(j)} = \sum_{x \in A_i} \sum_{b \in Q} v(x, b) \left( [h(x) = j \oplus b] - \frac{1}{m} \right) \; ,
    \]
    and set
    \[
        T = \sum_{\substack{j_1, \ldots, j_n \in [m]\\ \bigoplus_{j =1}^n j_k = 0}} \prod_{k =1}^n Y_k^{(j_k)} \; .
    \]
    Then for every constant $t \in \N$,
    \[
        \abs{\E{T^t}} = O_{t,n,c}\left( \ell^{tn/2} \prod_{k =1}^n \left(
            \sum_{x \in A_k} \sum_{b \in Q} v(x, b)^2
        \right)^{t/2} \right) \; .
    \]
\end{lemma}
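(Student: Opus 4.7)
My plan is to expand $T^t$ as a multi-index sum over keys, values in $Q$, and shift values, and then control the expectation using two available cancellations: the mean-zero property of the centered indicators $Z_{x,y} := [h(x)=y]-1/m$, and the XOR constraints $\bigoplus_k j_k = 0$ coming from the definition of $T$. Setting $N := tn$ and indexing $\nu = (k,s) \in \{1,\dots,n\}\times \{1,\dots,t\}$, a straightforward expansion gives
\[
	\E{T^t} = \sum_{(x_\nu),(b_\nu),(j_\nu)} \prod_\nu v(x_\nu, b_\nu)\cdot \E{\prod_{\nu} Z_{x_\nu,\, j_\nu \oplus b_\nu}},
\]
where $x_\nu \in A_{k_\nu}$, $b_\nu \in Q$, and $\bigoplus_{\nu: s_\nu=s} j_\nu = 0$ for every $s$.

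Next I would identify the non-vanishing contributions. Because each $\E{Z_{x,y}} = 0$, \cref{lem:LinIndep} forces $\E{\prod_\nu Z_{x_\nu, y_\nu}} = 0$ unless every index $\nu^*$ lies in some subset $B\ni \nu^*$ whose keys satisfy $\bigoplus_{\nu\in B}x_\nu = \emptyset$; otherwise $h(x_{\nu^*})$ is independent of the remaining $h(x_\nu)$'s and factors out a zero expectation. Non-vanishing terms thus require the multiset $\{x_\nu\}$ to admit a ``matching'' pattern: a partition into groups of size at least $2$, each XOR-ing to $\emptyset$. A convenient way to make this explicit is a Fourier/character expansion
\[
	Z_{x,y} = \frac{1}{m}\sum_{w\neq 0}(-1)^{\langle w,\, h(x)\oplus y\rangle},
\]
which, together with independence of the table entries $h(\alpha)$, forces $\sum_{\nu: \alpha \in x_\nu}w_\nu = 0$ in $\FF_2^r$ at every position character $\alpha$. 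Summing the shifts under the constraint $\bigoplus_{k}j_{k,s}=0$ additionally couples the $w_\nu$ within each slice $s$ and yields cancellation that contributes powers of $1/m$ offsetting the $1/m^N$ from the character expansion.

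For each fixed matching pattern, I would then bound the remaining sum by \cref{lem:CollBound} applied with the weight $w_k(x) := \bigl(\sum_{b\in Q}v(x,b)^2\bigr)^{1/2}$. In the dominant case where all groups are pairs, each pair is a ``collision'' contributing a factor $\sqrt{\sum_{x\in A_k}\sum_{b\in Q}v(x,b)^2}$ on each side, yielding the product $\prod_{k=1}^n \bigl(\sum_{x\in A_k}\sum_{b\in Q}v(x,b)^2\bigr)^{t/2}$; the free choices of $b_\nu \in Q$ within each pair contribute the $\ell^{tn/2}$ factor by Cauchy--Schwartz; and the number of admissible matching patterns is bounded by a function of $t$, $n$, and $c$ alone, since each key has at most $c$ position characters. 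Groups of size larger than $2$ can be absorbed into the same constant via \cref{lem:CollBound} applied with $\ell>2$.

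The main obstacle will be the bookkeeping required to coordinate three interlocking constraints---the XOR conditions on the shifts $j_\nu$, the XOR conditions on the keys within each matched group, and the character-sum conditions at every position character---so that they produce exactly the claimed $\ell^{tn/2}$ dependence without hidden factors of $m$. In particular, care is needed to ensure that the $1/m^N$ from the character expansion is precisely cancelled by the number of valid $(w_\nu)$ and $(j_\nu)$ configurations, and that Cauchy--Schwartz is applied optimally within each group to match the $L^2$-norm structure in the statement.
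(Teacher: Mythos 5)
Your plan is a defensible route, and the endgame (reduce to Lemma~\ref{lem:CollBound} with the weight $w_k(x)=(\sum_{b\in Q}v(x,b)^2)^{1/2}$, Cauchy--Schwarz over $b\in Q$ to harvest the $\ell^{tn/2}$) is exactly what the paper does. But you have missed the one algebraic trick that makes the bookkeeping tractable, and it is worth flagging because it replaces your ``main obstacle'' with a two-line computation. The paper first \emph{collapses the $j$-sum}: for fixed $(x_1,\ldots,x_n)$ and $(b_1,\ldots,b_n)$,
\begin{align*}
\sum_{\substack{j_1, \ldots, j_n \in [m]\\ \bigoplus_k j_k = 0}} \prod_{k =1}^n \left([h(x_k) = j_k \oplus b_k] - \frac{1}{m}\right)
= \left[\bigoplus_{k} h(x_k) = \bigoplus_{k} b_k\right] - \frac{1}{m},
\end{align*}
by expanding the product over subsets and counting. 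After this, $T = \sum_{s\in A_1\times\cdots\times A_n} f(s)$ for a single function $f$ of $n$-tuples of keys, all the shift variables are gone, and $T^t$ is a sum over $t$-tuples $(s_i)\in (A_1\times\cdots\times A_n)^t$. Lemma~\ref{lem:LinIndep} is then applied at that level: non-vanishing terms are those where every index $i\in\{1,\ldots,t\}$ lies in some subset $T_j$ with $\bigoplus_{i\in T_j} s_i = \emptyset$. Your plan instead works with all $N=tn$ individual factors $Z_{x_\nu, j_\nu\oplus b_\nu}$ plus the $t$ XOR constraints on the $j$'s, and proposes a Fourier/character expansion to force the constraints; this can be made to work (the $j$-constraints end up forcing the Fourier characters $w_\nu$ to be constant within each slice $s$, after which the $1/m^{tn}$ prefactor is offset by the $m^{(n-1)t}$ count of admissible shift tuples and the remaining $1/m^t$ is absorbed into the character-sum constraints), but it is strictly more work, and it is precisely the coordination you identify as the main difficulty. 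Two smaller imprecisions in your plan: the non-vanishing condition from Lemma~\ref{lem:LinIndep} is a \emph{cover} (every index contained in some XOR-zero subset), not a partition into XOR-zero groups --- the paper handles this via the nested difference sets $B_i = T_i\setminus\bigcup_{j<i}T_j$ and iterates Lemma~\ref{lem:CollBound}; and there is no distinguished ``dominant pair case'' in the final bound, since by Lemma~\ref{lem:CollBound} every cover structure yields the same order $\prod_k(\sum_x\sum_b v(x,b)^2)^{t/2}$, with group-size dependence absorbed into the $O_{t,n,c}$ constant.
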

\begin{proof}
    We rewrite $T$ as follows
    \begin{align*}
        T
            &= \sum_{\substack{j_1, \ldots, j_n \in [m]\\ \bigoplus_{k =1}^n j_k = 0}}
                \prod_{k =1}^n Y_k^{(j_k)}
            \\&= \sum_{\substack{j_1, \ldots, j_n \in [m]\\ \bigoplus_{k =1}^n j_k = 0}}
                \left(\prod_{k =1}^n \sum_{x \in A_k} \sum_{b \in Q} v(x, b) \left(
                        [h(x) = j_k \oplus b] - \frac{1}{m}
                    \right)
                \right)
            \\&= \sum_{(x_1, \ldots, x_n)\in A_1 \times \ldots \times A_n}
                \sum_{b_1, \ldots b_n \in Q} \left(
                    \sum_{\substack{j_1, \ldots, j_n \in [m]\\ \bigoplus_{k =1}^n j_k = 0}} \left(
                        \prod_{k =1}^n \left(
                            v(x_k, b_k) \left([h(x_k) = j_k \oplus b_k] - \frac{1}{m}\right)
                        \right)
                    \right)
                \right)
            \\&= \sum_{(x_1, \ldots, x_n)\in A_1 \times \ldots \times A_n}
                \sum_{b_1, \ldots b_n \in Q} \left(
                    \left(\prod_{k =1}^n v(x_k, b_k) \right)
                    \cdot \left(
                        \sum_{\substack{j_1, \ldots, j_n \in [m]\\ \bigoplus_{k =1}^n j_k = 0}} \left(
                            \prod_{k =1}^n \left(
                                [h(x_k) = j_k \oplus b_k] - \frac{1}{m}
                            \right)
                        \right)
                    \right)
                \right)
            \\&= \sum_{(x_1, \ldots, x_n)\in A_1 \times \ldots \times A_n}
                \sum_{b_1, \ldots b_n \in Q} \left(
                    \left(\prod_{k =1}^n v(x_k, b_k) \right)
                    \cdot \left(
                        \left[\bigoplus_{k =1}^n h(x_k) = \bigoplus_{k =1}^n b_k\right] - \frac{1}{m}
                    \right)
                \right)
        %     \\&= \sum_{(x_1, \dots, x_\ell)\in A_1\times\dots\times A_k} \left( \left(  \prod_{k=1}^\ell w_{x_k} \right) \cdot \sum_{\substack{j_1, \dots, j_\ell\in [m]\\ \bigoplus_{k=1}^\ell j_k = 0}} \left( \prod_{k=1}^\ell \left([h(x_k)=j_k]-\frac1m\right) \right)\right)\\
        % &= \sum_{(x_1, \dots, x_\ell)\in A_1\times\dots\times A_k} \left( \left(  \prod_{k=1}^\ell w_{x_k} \right) \cdot  \left(\left[\bigoplus_{k=1}^\ell h(x_k)=0\right]-\frac1m\right)\right).
    \end{align*}
    Here the last equality is derived by observing that for fixed
    $(x_1, \ldots, x_n) \in A_n \times \ldots \times A_n$ and fixed
    $b_1, \ldots b_n \in Q$,
    \begin{align*}
        \sum_{\substack{j_1, \ldots, j_n \in [m]\\ \bigoplus_{k =1}^n j_k = 0}} \left(
            \prod_{k =1}^n \left(
                [h(x_k) = j_k \oplus b_k] - \frac{1}{m}
            \right)
        \right)
            = \sum_{\substack{j_1, \ldots, j_n \in [m]\\ \bigoplus_{k =1}^n j_k = 0}} \left(
                    \sum_{B \subseteq \{1, \dots, n\}} (-m)^{-(n - \abs{B})} \prod_{k \in B} [h(x_k) = j_k \oplus b_k]
            \right)
        % \sum_{\substack{j_1, \ldots, j_n \in [m]\\ \bigoplus_{k \in [n]} j_k = 0}} \left(
        %         \prod_{k=1}^\ell \left([h(x_k)=j_k]-\frac1m\right)\right) &= \sum_{\substack{j_1, \dots, j_\ell\in [m]\\ \bigoplus_{k=1}^\ell j_k = 0}} \left( \sum_{B\subseteq [m]} (-m)^{-(\ell-\abs B)}
        % \prod_{k\in B} [h(x_k)=j_k]\right)
    \end{align*}
    and since for $\emptyset \subseteq B \subsetneq \{1, \dots, n\}$ there are exactly
    $m^{n - \abs{B}-1}$ tuples $(j_1, \dots, j_n) \in [m]^n$ satisfying $j_k \oplus b_k = h(x_k)$
    for every $k \in B$ and $\bigoplus_{k=1}^n j_k = 0$, we get
    \begin{align*}
        \sum_{\substack{j_1, \ldots, j_n \in [m]\\ \bigoplus_{k =1}^n j_k = 0}} \left(
            \prod_{k =1}^n \left(
                [h(x_k) = j_k \oplus b_k] - \frac{1}{m}
            \right)
        \right)
        &= \sum_{\substack{j_1, \ldots, j_n \in [m]\\ \bigoplus_{k =1}^n j_k = 0}}\left( \prod_{k =1}^n\left[ h(x_k) = j_k\oplus b_k \right] \right) + \frac{1}{m}\sum_{B \subsetneq \{1, \dots, n\}} (-1)^{n - \abs{B}} \\
            &= \left[\bigoplus_{k =1}^n h(x_k) = \bigoplus_{k =1}^n b_k\right]
                + \frac{1}{m}\sum_{B \subsetneq \{1, \dots, n\}} (-1)^{n - \abs{B}}.
    \end{align*}
    By the principle of inclusion-exclusion, the last term is $-\frac{1}{m}$, which concludes the rearrangement.

    Write $S= A_1 \times \ldots \times A_n$ and let $f\colon S \to \R$ be the function
    \[
        f(x_1, \dots, x_n) = \sum_{b_1, \ldots b_n \in Q} \left(
            \left(\prod_{k =1}^n v(x_k, b_k) \right)
            \cdot \left(
                \left[\bigoplus_{k =1}^n h(x_k) = \bigoplus_{k =1}^n b_k\right] - \frac{1}{m}
            \right)
        \right) \; .
    \]
    By the above rearrangement, we have
    $T^t = \sum_{(s_i)_{i \in [t]} \in S^t} \prod_{i=1\in [t]} f(s_i)$, such that,
    \begin{align*}
        \E{T^t} = \sum_{(s_i)_{i =1}^t \in S^t} \E{\prod_{i =1}^t f(s_i)} \; .
    \end{align*}
    Now, for a $t$-tuple $(s_i)_{i =1}^t\in S^t$, we overload notation by for a subset $T\subseteq\{1,\dots,t\}$ defining $\bigoplus_{i\in T}s_i=\bigoplus_{i\in T}\bigoplus_{j=1}^n(s_i)_j$, where we still think of the keys $(s_i)_j$ as sets of input characters, and where $\oplus$ is the symmetric difference.
    Let $(s_i)_{i =1}^t\in S^t$ and let $T_1, \dots, T_r\subseteq \{1, \dots, t\}$
    be all subsets of indices satisfying $\bigoplus_{i \in T_j}s_i = \emptyset, 1\leq j\leq t$.
     If for some $i \in \{1, \dots, t\}$, $i \not\in \bigcup_{j =1}^r T_i$ then by
    Lemma \ref{lem:LinIndep}, $h(s_i)$ is independent of the joint distribution
    $(h(s_j))_{j \neq i}$ and uniformly distributed in $[m]$. It follows that $f(s_i)$ is independent of the joint distribution
    $(f(s_j))_{j \neq i}$. Since it further holds that $\E{f(s_i)}=0$, this implies
    \begin{align*}
        \E{\prod_{j=1}^t f(s_j)} = \E{f(s_i)}\cdot \E{\prod_{j \neq i} f(s_j)} = 0 \; .
    \end{align*}
    Hence, we shall only sum over the $t$-tuples $(s_i)_{i=1}^t\in S^t$ satisfying that there exist
    subsets of indices $T_1, \dots, T_r \subseteq \{1, \dots, t\}$ such that
    $\bigoplus_{i \in T_j}s_i = \emptyset$ for every $j \in \{1, \dots, r\}$ and $\bigcup_{j =1}^r T_j = \{1, \dots, t\}$.

    Fix such subsets $T_1, \dots, T_r\subseteq \{1, \dots, t\}$ and for $i \in \{1, \dots, r\}$ let
    $B_i = T_i \setminus \left(\bigcup_{j < i} T_j\right)$. Then we can write
    \begin{align}
    \nonumber
    &\sum_{
            \substack{(s_i)_{i=1}^t \in S^t
            \\ \forall j \in \{1, \dots, r\}\colon \bigoplus_{i \in T_j} s_i = \emptyset}
        } \prod_{i=1}^t f(s_i)  \\
        \label{eq:firststep}
                =  &\sum_{
            \substack{(s_i)_{i\in \{1,\dots,t\}\setminus B_r} \in S^{t-|B_r|}
            \\ \forall j \in \{1, \dots, r-1\}\colon \bigoplus_{i \in T_j} s_i = \emptyset}
        } \prod_{i\in \{1,\dots,t\}\setminus B_r} f(s_i)
        \sum_{
            \substack{(s_i)_{i\in B_r} \in S^{|B_r|}
            \\ \bigoplus_{i \in B_r} s_i = \bigoplus_{i \in T_r\setminus B_r} s_i}
        } \prod_{i\in B_r} f(s_i)
   \end{align}
   Now fix $(s_i)_{i\in \{1,\dots,t\}\setminus B_r} \in S^{t-|B_r|}$ such that for all $j \in \{1, \dots, r-1\}$ it holds that $\bigoplus_{i \in T_j} s_i = \emptyset$. We wish to upper bound the inner sum in~\eqref{eq:firststep} for this choice of $(s_i)_{i\in \{1,\dots,t\}\setminus B_r}$.  In order to do this, observe that for $s = (x_1, \ldots, x_n)\in S$ we always have
    \begin{align*}
        \abs{f(s)}
            \le \sum_{b_1, \ldots b_n \in Q} \prod_{k \in [n]} \abs{v(x_k, b_k)}
            = \prod_{k =1}^n \abs{\sum_{b \in Q} v(x_k, b)}
            \le \prod_{k=1}^n \sqrt{\ell \sum_{b \in Q} v(x_k, b)^2} \; ,
    \end{align*}
by the QA-inequality.
We now wish to combine this bound with~\Cref{lem:CollBound} to obtain a bound on the inner sum in~\eqref{eq:firststep}. For this, we define let $\ell=|T_r|n$ and define sets of keys $F_1,\dots,F_\ell$ and weight functions $w_1,\dots,w_\ell:\Sigma^c \to \R$ as follows. Enumerate $T_r=\{i_1,\dots,i_{|T_r|}\}$ such that $\{i_1,\dots,i_{|B_r|}\}=B_r$. Now for $0\leq k<|B_r|$ and $1\leq j \leq n$ we define $F_{kn+j}=A_j$. We further define the weight function $w_{kn+j}:\Sigma^c\to \R$ by $w_{kn+j}(x)=\sqrt{\ell \sum_{b \in Q} v(x, b)^2}$ for $x\in \Sigma^c$. Observe that these weight functions are all identical. Secondly, for $|B_r|\leq k<|T_r|$ and $1\leq j \leq n$, we define $F_{kn+j}=\{s_{i_k}(j)\}$, and $w_{kn+j}:\Sigma^c \to \R$ by $w_{kn+j}(x)=1$ for all $x\in \Sigma^c$. Then,
$$
\abs{
\sum_{
            \substack{(s_i)_{i\in B_r} \in S^{|B_r|}
            \\ \bigoplus_{i \in B_r} s_i = \bigoplus_{i \in T_r\setminus B_r} s_i}
        } \prod_{i\in B_r} f(s_i)
        }
        \leq 
        \sum_{\substack{x_1\in B_1, \dots, x_\ell\in B_\ell \\ \bigoplus_{k=1}^\ell x_k = \emptyset}} \prod_{k=1}^\ell w_k(x_k)
        \leq
        (n|T_r|-1)!!)^c
        \prod_{k=1}^n \left(
                \ell\cdot \sum_{x \in A_k} \sum_{b \in Q} v(x, b)^2
            \right)^{\abs{B_r}/2},
$$
where the last inequality follows from~\Cref{lem:CollBound}. Note that this upper bound does not depend on the choice of $(s_i)_{i\in \{1,\dots,t\}\setminus B_r} \in S^{t-|B_r|}$ in the outer sum in~\eqref{eq:firststep}. Repeating this argument another $r-1$ times, and using that $\{1,\dots,t\}$ is the disjoint union of $B_1,\dots,B_r$, we obtain that 
$$
\abs{\sum_{
            \substack{(s_i)_{i=1}^t \in S^t
            \\ \forall j \in \{1, \dots, r\}\colon \bigoplus_{i \in T_j} s_i = \emptyset}
        } \prod_{i=1}^t f(s_i)} 
        \leq 
        ((nt - 1)!!)^{cr} \cdot \prod_{k=1}^n \left(
                \ell \sum_{x \in A_k} \sum_{b \in Q} v(x, b)^2
            \right)^{t/2}.
$$
Since there are at most $2^{2^t}$ ways of choosing $r$ and the subsets $T_1, \dots, T_r$ and since $r\leq 2^t$,
    summing over these choices yields
    \begin{align*}
        \abs{\E{T^t}}
            &\leq 2^{2^t} ((nt - 1)!!)^{cr} \cdot \prod_{k=1}^n \left(
                \ell \sum_{x \in A_k} \sum_{b \in Q} v(x, b)^2
            \right)^{t/2}
            \\&\leq O_{t, n, c}\left( \ell^{nt/2} \prod_{k=1}^n \left(
                \sum_{x \in A_k} \sum_{b \in Q} v(x, b)^2
            \right)^{t/2} \right) \; .
    \end{align*}

\end{proof}

We are now ready to prove the main theorem of the subsection, a bound on the sum of squared deviations of the value function from its deviation when shifting by every $j\in [m]$, the second part of~\Cref{thm:simpleConcentration}. As described in~\Cref{sec:combine_simple}, this bound is an important ingredient in the proof of the first part of~\Cref{thm:simpleConcentration}. Namely, in our inductive proof, it bounds the variance of the value obtained from the keys of one of the groups $G_i$ when the keys from this group are shifted by a uniformly random XOR with $h(\alpha_i)$.
\begin{theorem}\label{thm:sum-squares}
    Let $h\colon \Sigma^c \to [m]$ be a simple tabulation hash function and
    $S \subseteq \Sigma^c$ a set of keys. Let $v\colon \Sigma^c \times [m] \to [-1, 1]$
    be a value function such that the set
    $Q = \setbuilder{i \in [m]}{\exists x \in \Sigma^c : v(x, i) \neq 0}$ satisfies $\abs{Q} \le m^{\eps}$,
    where $\eps < \frac{1}{4}$ is a constant.
    For $j\in [m]$ define the random variable $V_j = \sum_{x\in S}v(x, h(x)\xor j)$ and let $\mu = \E{V_j}$, noting that this is independent of $j$. For any $\gamma\geq 1$,
    \begin{align}
    	\Pr \left[\sum_{j\in [m]}(V_j-\mu)^2>C_{\gamma}^c \sum_{x\in S}\sum_{k\in [m]} v(x, k)^2\right] = O_{\gamma, \eps, c}(n/m^\gamma)
    \end{align}
    where $C_{\gamma} = 3 \cdot 2^6\cdot \gamma^2$ and this bound is query invariant up to constant factors.
\end{theorem}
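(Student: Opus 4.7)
Plan: The plan is to apply the high-moment method. Writing $W = \sum_{j\in [m]}(V_j-\mu)^2$ and $V = \sum_{x\in S}\sum_{k\in [m]}v(x,k)^2$, my goal is to bound $\E{W^t}$ for a suitably large even integer $t$ and then apply Markov's inequality to turn the moment bound into the desired tail estimate.

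The first step is to identify $W$ with a quantity already controlled by Lemma~\ref{lem:moment}. Taking $n=2$, $A_1=A_2=S$, and both value functions equal to $v$, we see that $Y_1^{(j)} = Y_2^{(j)} = V_j-\mu$, while the constraint $\bigoplus_{k=1}^{2}j_k=0$ collapses to $j_1=j_2$. Hence the random variable $T$ defined in Lemma~\ref{lem:moment} coincides term-for-term with $W$, and the lemma immediately yields
\[
\E{W^t} = O_{t,c}\!\left(\ell^{t}V^{t}\right)
\]
for every even constant $t \in \N$.

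Second, I would apply Markov's inequality: for any $\lambda>0$,
\[
\Pr\!\left[W > \lambda V\right] \le \frac{\E{W^t}}{(\lambda V)^t} = O_{t,c}\!\left((\ell/\lambda)^t\right).
\]
Setting $\lambda = C_{\gamma}^{c}$ with $C_{\gamma} = 3\cdot 2^6 \gamma^2$, substituting $\ell \le m^{\eps}$, and choosing $t = t(\gamma,\eps)$ a sufficiently large constant (scaling roughly like $\gamma/(1-4\eps)$), the resulting bound $(\ell/C_{\gamma}^{c})^{t} \le m^{t\eps}/C_{\gamma}^{ct}$ can be absorbed into the target $O_{\gamma,\eps,c}(n/m^\gamma)$. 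The hypothesis $\eps<1/4$ is precisely what supplies the slack in the exponents needed to make this balance go through.

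The main obstacle will be the careful bookkeeping of the combinatorial constants. Tracing the proof of Lemma~\ref{lem:moment}, its hidden prefactor grows roughly like $2^{2^t}((2t-1)!!)^{c\cdot 2^t}$ with $t$, whereas Markov only gives a gain of $C_{\gamma}^{ct}$. Calibrating $t$ against these opposing forces, making use of $\eps<1/4$, is the key technical calculation and is what drives the specific form of $C_\gamma$. Query invariance up to constants then follows from the fact that conditioning on $h(x_0)=y_0$ perturbs the joint distribution of the remaining position-character hash values by only a bounded factor, with the formal verification deferred to Section~\ref{sec:queryinvariance}.
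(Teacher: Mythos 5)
There is a genuine gap: your moment estimate is too weak to give the theorem by a single application of Markov's inequality. Your identification of $T$ (with $n=2$, $A_1=A_2=S$) with $W=\sum_{j\in[m]}(V_j-\mu)^2$ is correct, and Lemma~\ref{lem:moment} then yields $\E{W^t}=O_{t,c}\bigl(\ell^t V^t\bigr)$ where $\ell=\abs{Q}$ and $V=\sum_{x\in S}\sum_{k\in Q}v(x,k)^2$. But notice the factor $\ell^t$: Markov then gives
$\Pr[W>\lambda V]\le O_{t,c}\bigl((\ell/\lambda)^t\bigr)$,
and since $\lambda=C_\gamma^c$ is a fixed constant while $\ell$ may be as large as $m^{1/4}$, this bound is $\Omega\bigl(m^{t\eps}/C_\gamma^{ct}\bigr)$, which \emph{increases} without bound in $m$. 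No choice of constant $t$ can turn a quantity growing polynomially in $m$ into $O(n/m^\gamma)$ unless $n$ is correspondingly huge, which is not given. The hypothesis $\eps<1/4$ does not rescue this: $m^{t\eps}$ grows with $m$ regardless of $\eps>0$, and the constant $C_\gamma^{ct}$ in the denominator cannot cancel a growing power of $m$.

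The reason the paper's proof works while your shortcut does not is that the paper invokes Lemma~\ref{lem:moment} only in a situation where there is extra slack coming from the randomness of the hash function and the structure of the argument. Specifically, in bounding the conditional variance $T_i=\VarC{Z_i}{\F_{i-1}}$, the quantity of interest carries an explicit factor of $1/m$ and is compared against a threshold of the form $m^{-1/2}v'(G_k)v'(G_{<k})$, so Markov produces $(\ell^2/\sqrt{m})^{O(t)}$, which does decay in $m$ precisely because $\eps<1/4$. Your $W$ has no such $m^{-\Omega(1)}$ slack baked into its threshold $C_\gamma^c V$. To control $W$ itself, the paper runs an induction on $c$, decomposes $W$ into a sum of per-group squared terms $V_1$ plus cross terms $V_2$, builds a martingale out of the latter via the group ordering from Lemma~\ref{lem:groups}, truncates it on good events, and applies the martingale Bennett inequality (Corollary~\ref{martingalebennettcor}). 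Lemma~\ref{lem:moment} enters only as the workhorse controlling one of the good events. In short, the moment method alone is strictly weaker than what is claimed: it would prove $W\le\tilde O(\ell V)$ with high probability, whereas the theorem asserts the much stronger $W\le O(V)$, and closing that $\ell$-sized gap is precisely the work done by the inductive martingale argument.
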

\begin{proof}
First, note that we may write 
\begin{align}\label{eq:Vj-rewrite}
	V_j-\mu = \sum_{x\in S} \sum_{k\in Q} v(x, k)[h(x)=j\xor k] - \frac{1}{m}\sum_{x\in S}\sum_{k\in Q}v(x, k) = \sum_{x\in S}\sum_{k\in Q}v(x, k)\left( [h(x)=j\xor k]-\frac{1}{m} \right) 
\end{align}

    Now, define $v'(x) = \sum_{k \in Q} v(x, k)^2$ and for $X \subseteq \Sigma^c$ we let
    $v'(X) = \sum_{x \in X} v'(x)$ and define $v'_\infty(X) = \max_{x \in X} v'(x)$.
    Now applying \Cref{lem:groups} with respect to $v'$ we get position characters
    $\alpha_1, \ldots, \alpha_r$ with corresponding groups $G_1, \ldots, G_r$,
    such that, $\cupdot_{i=1}^r G_i = S$ and for every $i \in \{1, \dots, r\}$,
    $v'(G_i) \le v'(S)^{1 - 1/c} v'_{\infty}(S)^{1/c}$. For $i \in \{1, \dots, r\}, j \in [m]$
    we define the random variables
    \begin{align*}
        X_i^{(j)}
            &=\sum_{x \in G_i}\sum_{k \in Q} v(x, k) \left([h(x \setminus \alpha_i) = j \oplus k] - \frac{1}{m}\right) ,
        & Y_i^{(j)} &= X_i^{\left(j\oplus h(\alpha_i)\right)},
    \end{align*}
where we recall that $x\setminus \alpha_i$ denotes the set containing the position characters of $x$ except $\alpha_i$. Notice that by \eqref{eq:Vj-rewrite}, $V_j-\mu = \sum_{i\in [r]}Y_i^{(j)}$. Writing $V=\sum_{j\in [m]}(V_j-\mu)^2 = \sum_{j\in [m]}\left( \sum_{i\in [r]}Y_i^{(j)} \right)^2$, the statement we wish to prove is
    \[
        \PR{V > C_{\gamma}^c v'(S)}
            \le O_{\gamma, c}\left(\abs{S}m^{-\gamma}\right) \; .
    \]
    We proceed by induction on $c$. The induction start, $c=1$, and the induction step are almost identical, so we carry them out in parallel. Note that when $c=1$ each group has size at most 1, i.e.\ $\abs {G_i} \leq 1$ for every $i\in \{1, \dots, r\}$.

    Let $\gamma \ge 1$ be fixed. We write
    \begin{align}\label{sqsum1}
        V = \underbrace{\sum_{j \in [m]} \sum_{i=1}^r\left(Y_i^{(j)}\right)^2}_{V_1} + \underbrace{\sum_{j \in [m]}\sum_{i=1}^rY_i^{(j)}Y_{<i}^{(j)}}_{V_2}
    \end{align}
    and bound $V_1$ and $V_2$ separately starting with $V_1$.

    Interchanging summations, $V_1 = \sum_{i=1}^r \sum_{j \in [m]}\left(Y_i^{(j)}\right)^2$. In the case $c=1$, let $i\in \{1, \dots, r\}$ be given. If $\abs{G_i} = 0$, $\sum_{j \in [m]}\left(Y_i^{(j)}\right)^2=0$. If on the other hand $G_i=\{x_i\}$ for some $x_i\in \Sigma^c$,
    \begin{align*}
        \sum_{j\in [m]}\left(Y_i^{(j)}\right)^2 
        &= \sum_{j\in [m]}\left( \sum_{k\in Q} v(x_i, k) \left(\left[ h(x_i)=j\oplus k \right]-\frac1m\right) \right)^2\\
        &= \sum_{j\in [m]}\left( \sum_{k\in [m]} v(x_i, k) \left(\left[ h(x_i)\oplus j= k \right]-\frac1m\right) \right)^2\\
        &= \sum_{j\in [m]}\left( \sum_{k\in [m]} v(x_i, k) \left(\left[ j= k \right]-\frac1m\right) \right)^2\\
        &= \sum_{j\in [m]}\left( v(x_i, j)-\frac1m\sum_{k\in [m]} v(x_i, k) \right)^2\\
        &\leq \sum_{j\in [m]}v(x_i, j)^2
    \end{align*}
    where the last inequality follows from the inequality $\E{(X-\E X)^2}\leq \E{X^2}$. Thus, we always have $V_1\leq v'(S)\leq \frac{C_{\gamma}^c}2v'(S)$. In the case
    $c > 1$ we observe that the keys of $G_i$ have a common position character.
    Hence, we can apply the induction hypothesis on the keys of $G_i$ with the remaining $c-1$ position characters to conclude that
    \[
        \PR{\sum_{j \in [m]}\left(Y_i^{(j)}\right)^2 > C_{\gamma}^{c-1}v'(G_i)} \le O_{\gamma,c}(|G_i|m^{-\gamma}) \; .
    \]
    By a union bound,
    \begin{align}\label{sqsum2}
        \PR{V_1 > \frac{C_{\gamma}^{c}}2 v'(S)} \le \PR{V_1 > C_{\gamma}^{c-1}v'(S)} \le \sum_{i\in [r]}O_{\gamma,c}(|G_i|m^{-\gamma}) = O_{\gamma,c}(|S|m^{-\gamma}) \; .
    \end{align}

    Next we proceed to bound $V_2$. For $0\leq i\leq r$ define
    $Z_i = \sum_{j \in [m]}Y_i^{(j)}Y_{<i}^{(j)}$ with $Z_0=0$ and
    $\F_i = \sigma((h(\alpha_j))_{j=1}^i)$ with $\F_0=\{\emptyset, \Omega\}$. As $Y_{<i}^{(j)}$ is
    $\F_{i-1}$ measurable for $j \in [m]$ it holds that
    \[
        \EC{Z_i}{\F_{i-1}} = \sum_{j \in [m]} \EC{Y_i^{(j)}}{\F_{i-1}}Y_{<i}^{(j)} = 0 \; ,
    \]
    and so $(Z_i, \F_i)_{i=0}^r$ is a martingale difference. We will define
    a modified martingale difference $(Z'_i, \F_i)_{i=0}^r$ recursively as
    follows: We define the events $A_i, B_i$ and
    $C_i$ for $i \in \{1, \dots, r\}$ as
    \begin{align*}
        A_i
            &= \bigcap_{k=1}^i\left(
                    \sum_{j \in [m]}\left(Y_k^{(j)}\right)^2 \leq C_{\gamma}^{c-1}v'(G_k)
                \right),
        \\
        B_i
            &= \bigcap_{k=1}^i \left(
                \VarC{Z_k}{\F_{k-1}} \leq m^{-1/2}v'(G_k)v'(G_{<k})
            \right),
        \\
        C_i
            &= \left(
                \max_{1\leq k\leq i}\set{Z'_{<k}} \leq \frac{C_{\gamma}^{c}}2 v'(S)
            \right). \;
    \end{align*}
    Finally, we let $Z'_i = [A_i \cap B_i \cap C_i]\cdot Z_i$. Clearly $B_i, C_i \in \F_{i-1}$.
    To see that this is also the case for $A_i$ we note that for $k \leq i$,
    \[
        \sum_{j \in [m]}(Y_k^{(j)})^2
            = \sum_{j \in [m]}(X_k^{(j\oplus h(\alpha_k))})^2
            = \sum_{j \in [m]}(X_k^{(j)})^2 \; ,
    \]
    and as each $X_k^{(j)}$ is $\F_{k-1}$-measurable it follows that $A_i \in \F_{i-1}$.
    Now, as $[A_i \cap B_i \cap C_i]$ is $\F_{i-1}$-measurable,
        $$\EC{Z'_i}{\F_{i-1}} = [A_i \cap B_i \cap C_i]\EC{Z_i}{\F_{i-1}} = 0,$$
     which implies that
    $(Z'_i, \F_i)_{i=0}^r$ is a martingale difference. 
    
    If $A_{r}, B_{r}$,
    and $C_{r}$ all occur then $\sum_{i=1}^r Z_i = \sum_{i=1}^rZ'_i$. In particular
    \[
        \PR{V_2> \frac{C_{\gamma}^{ c}}2v'(S)}=\PR{\sum_{i \in [r]}Z_i > \frac{C_{\gamma}^{ c}}{2} v'(S)}
            \le \PR{\sum_{i=1}^rZ'_i > \frac{C_{\gamma}^{ c}}{2} v'(S)}
                + \PR{A_{r}^c \cup B_{r}^c \cup C^c_{r}} \; .
    \]
    If $C_r$ does not occur then $\sum_{i \in [r]}Z'_i > \frac{C_{\gamma}^c}2 v'(S)$
    so a union bound yields
    \begin{align}\label{sqsum3}
        \PR{V_2\geq \frac{C_\gamma^c}2v'(S)}
        \le 2\PR{\sum_{i=1}^rZ'_i > \frac{C_{\gamma}^{ c}}{2} v'(S)}
                + \PR{A_{r}^c} + \PR{B_{r}^c}\; .
    \end{align}
    We now wish to apply~\Cref{martingalebennettcor} to the martingale difference
    $(Z'_i,\F_i)_{i=0}^r$. Thus, we have to bound $|Z'_i|$ as well as the conditional
    variances $\VarC{Z'_i}{\F_{i-1}}$. For the bound on $Z_i'$, observe that by the Cauchy-Schwarz inequality,
    \[
        \abs{Z_i'}
            = [A_i \cap B_i \cap C_i] \abs{\sum_{j \in [m]}Y_i^{(j)}Y_{<i}^{(j)}}
            \leq [A_i\cap B_i \cap C_i]\sqrt{\sum_{j\in [m]}\left(Y_i^{(j)}\right)^2}\sqrt{\sum_{j\in [m]}\left(Y_{<i}^{(j)}\right)^2} \; .
    \]
    If $A_i$ occurs we obtain
    \[
        \sum_{j \in [m]}\left(Y_i^{(j)}\right)^2
            \le C_\gamma^{c-1} v'(G_i)
            \le C_\gamma^{c-1} v'(S)^{1 - 1/c} v'_{\infty}(S)^{1/c} \; ,
    \]
    by~\Cref{lem:groups} and if $A_i$, $B_i$, and $C_i$ all occur then
    \[
        \sum_{j \in [m]}\left(Y_{<i}^{(j)}\right)^2
            = \sum_{j \in [m]}\sum_{k<i}\left(Y_{k}^{(j)}\right)^2 + 2Z'_{<i}
            \le C_{\gamma}^{c-1}v'(G_{<i}) + 2C_{\gamma}^{c}v'(G_{<i})
            \le 3C_{\gamma}^{c}v'(S) \; .
    \]
    In conclusion
    \[
        |Z_i'| \le C_\gamma^{c - 1/2}\sqrt{3}v'(S)^{1 - 1/(2c)}v'_{\infty}(S)^{1/(2c)} \; .
    \]
    For the bound on the conditional variance note that if $B_i$ occurs then
    $\VarC{Z_i}{\F_{i-1}} \le m^{-1/2}v'(G_k)v'(G_{<k})$ and thus,
    \[
        \VarC{Z'_i}{\F_{i-1}}
            = [A_i][B_i][C_i]\VarC{Z_i}{\F_{i-1}}
            \le m^{-1/2}v'(G_k)v'(G_{<k}) \; .
    \]
    It follows that $\sum_{i=1}^r \VarC{Z'_i}{\F_{i-1}} \le m^{-1/2}v'(S)^2$.
    Letting
    \[
        \sigma^2 = m^{-1/2} v'(S)^2
    \]
    and
    \[
        M = C_\gamma^{c - 1/2}\sqrt{3}v'(S)^{1 - 1/(2c)}v'_{\infty}(S)^{1/(2c)}
    \]
    in~\Cref{martingalebennettcor} we thus obtain
    \[
        \PR{\sum_{i=1}^rZ_i' > \frac{C_{\gamma}^{ c}}{2} v'(S)}
            \le \exp\left(-\frac{v'(S)^{1/c}}{3 C_\gamma^{2c - 1} \cdot \sqrt{m} \cdot v'_{\infty}(S)^{1/c}}
                \C\left(\frac{(C\gamma)^{2c - 1/2}\cdot \sqrt{3}\cdot \sqrt{m}\cdot v'_{\infty}(S)^{1/2c}}{2v'(S)^{1/2c}} \right)
            \right) \; .
    \]
    Applying \Cref{lem:Benn-func-consts} first with $b=\left(\frac{v'_\infty(S)}{v'(S)}\right)^{1/(2c)}\leq 1$ and then with $b=\sqrt m>1$ yields
    \begin{align*}
        \frac{v'(S)^{1/c}}{3 C_\gamma^{2c - 1} \sqrt{m} v'_{\infty}(S)^{1/c}}
                \C\left(\frac{C_\gamma^{2c - 1/2}\sqrt{3}\sqrt{m}v'_{\infty}(S)^{1/2c}}{2v'(S)^{1/2c}}\right)
            &\ge \frac{1}{3C_\gamma^{2c - 1}}\C\left(\frac{C_\gamma^{2c - 1/2}\sqrt{3}\sqrt{m}}{2}\right) \; .
    \end{align*}
    We then use \Cref{rateofgrowth} to get
    \begin{align*}
        \frac{1}{3C_\gamma^{2c - 1}}\C\left(\frac{C_\gamma^{2c - 1/2}\sqrt{3}\sqrt{m}}{2}\right)
            \ge \frac{\sqrt{C_\gamma}}{\sqrt{3} \cdot 4}\log\left(1 + \frac{(C\gamma)^{2c - 1/2}\sqrt{3}\sqrt{m}}{2}\right)
            \ge \frac{\sqrt{C_\gamma}}{\sqrt{3} \cdot 8}\log(m)
            = \gamma\log(m) \; ,
    \end{align*}
    where we have used that $C_\gamma = 3 \cdot 8\cdot \gamma^2$ and $\gamma \ge 1$. Combining this we get that
    \begin{align}\label{sqsum4}
        \PR{\sum_{i=1}^rZ_i' > \frac{C_\gamma^c}2v'(S)} \le m^{-\gamma} \; .
    \end{align}

    It thus suffices to bound the probabilities $\Pr[A_{r-1}^c]$ and $\Pr[B_{r-1}^c]$. For $A_{r-1}^c$, if $c=1$ the discussion from the bound on $V_1$ proves that $A_{r-1}^c$ never occurs. If $c>1$, 
    the inductive hypothesis on the groups $G_i$ and a union bound yields
    \begin{align}\label{sqsum5}
        \PR{A_{r-1}^c} = O_{\gamma, \epsilon, c}\left(\sum_{i=1}^r |G_i|m^{-\gamma}\right) = O(|S|m^{-\gamma}) \; .
    \end{align}
    For $B_{r-1}^c$, we can for each $i\in \{1, \dots, r\}$ write
    \begin{align*}
        \VarC{Z_i}{\F_{i-1}}
            &= \EC{\left(\sum_{j \in [m]}X_i^{(j\oplus h(\alpha_i))}Y_{<i}^{(j)} \right)^2}{\F_{i-1}}
            \\&= \frac{1}{m}\sum_{k \in [m]}\left( \sum_{j \in [m]}X_i^{(j\oplus k)}Y_{<i}^{(j)} \right)^2
            \\&= \frac{1}{m}\sum_{k \in [m]}\sum_{(j_1, j_2) \in [m]^2}Y_i^{(j_1\oplus k)}Y_i^{(j_2\oplus k)}Y_{<i}^{(j_1)} Y_{<i}^{(j_2)}
            \\&= \frac{1}{m}\sum_{\substack{(j_1, j_2, j_3, j_4) \in [m]^4 \\ j_1 \oplus j_2 \oplus j_3 \oplus j_4 = 0}}
                Y_i^{(j_1)}Y_i^{(j_2)}Y_{<i}^{(j_3)} Y_{<i}^{(j_4)}
    \end{align*}
    Call this quantity $T_i$. It follows from~\Cref{lem:moment} and Markov's inequality that
    \[
    \PR{T_i \geq m^{-1/2}v'(G_k)v'(G_{<k})}
        \le \frac{\E{T_i^{2\gamma/(1 - 4\eps)}}}{m^{\gamma/(1 - 4\eps)}\left(v'(G_k)v'(G_{<k}) \right)^{2\gamma/(1 - 4\eps)}}
        \le O_{\gamma, \eps, c}(m^{-\gamma}).
    \]
    Thus, $\Pr[B_{r-1}^c]=O(|S|m^{-\gamma})$ by a union bound.
    
    Combining equations~\eqref{sqsum1}-\eqref{sqsum5} we conclude that indeed
    $\PR{V \ge C_\gamma^c v'(S)} =O_{\gamma,\eps, c}(|S|m^{-\gamma})$.
\end{proof}

\subsection{Establishing the Concentration Bound}
With the results of the previous subsection at hand, we proceed to prove the first part of~\Cref{thm:simpleConcentration}. We show that for a value function of support bounded in size by $m^\eps$ for some $\eps<1/4$, simple tabulation supports Chernoff-style bounds with \errorterm inversely polynomial in $m$. For convenience, we restate the first part of~\Cref{thm:simpleConcentration} as~\Cref{thm:simple-concentration-only}. The statement is equivalent to \cref{thm:simpleConcentration} but for precision, we have chosen to write out the statement more explicitly.
\begin{theorem}\label{thm:simple-concentration-only}
    Let $h\colon \Sigma^c \to [m]$ be a simple tabulation hash function and
    $S \subseteq \Sigma^c$ be a set of keys. Let $v\colon \Sigma^c \times [m] \to [-1, 1]$
    be a value such that the set
    $Q = \setbuilder{i \in [m]}{\exists x \in \Sigma^c : v(x, i) \neq 0}$ satisfies $\abs{Q} \le m^{\eps}$,
    where $\eps < \frac{1}{4}$ is a constant.
    Define the random variable $W=\sum_{x\in S}v(x, h(x))$ and write $\mu = \E{W}$ and $\sigma^2=\Var{W}$. Then for any constant $\gamma \geq 1$,
    \[
        \PR{\abs{W-\mu} \ge C_{\gamma, c} t}
            \le  2\exp\left(-\sigma^2 \C\left(\frac{t}{\sigma^2}\right)\right)
                + O_{\gamma, \eps, c}\left(\abs{S} m^{-\gamma}\right) \; ,
    \]
    where $C_{\gamma, c} = \left(1 + \frac{1}{\gamma}\right)^{3\frac{c(c - 1)}{2}}\left(C c \gamma \right)^{3c}$
    for some large enough universal constant $C$.
\end{theorem}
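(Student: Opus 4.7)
The plan is to prove the theorem by induction on $c$, using Theorem~\ref{thm:sum-squares} to control the conditional variances of a suitable martingale, and the inductive hypothesis itself to uniformly bound its jumps. For the base case $c=1$, distinct keys in $S$ are disjoint singleton sets of position characters, so by Lemma~\ref{lem:LinIndep} the family $(v(x,h(x)))_{x\in S}$ is fully independent. Each summand lies in $[-1,1]$, so the classical variance-form bound~\eqref{eq:var-chernoff} yields the conclusion with zero added \errorterm.

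For the inductive step, I would apply Lemma~\ref{lem:groups} with the weight function $v'(x)=\sum_k v(x,k)^2$ to obtain an ordering $\alpha_1,\dots,\alpha_r$ of the position characters whose groups $G_i$ satisfy $v'(G_i)\le v'_\infty(S)^{1/c}\,v'(S)^{1-1/c}$. Setting $\F_i=\sigma(h(\alpha_1),\dots,h(\alpha_i))$ and applying Lemma~\ref{lem:LinIndep} to any key whose last position character lies in $\{\alpha_{i+1},\dots,\alpha_r\}$, the Doob martingale $W_i=\EC{W}{\F_i}$ has differences
\[
Z_i \;=\; W_i-W_{i-1} \;=\; \sum_{x\in G_i}\bigl(v(x,h(x))-\E{v(x,h(x))}\bigr).
\]
Writing $V_y^{(i)}=\sum_{x\in G_i}v(x,h(x\setminus\alpha_i)\oplus y)$ and $\mu_i=\frac{1}{m}\sum_y V_y^{(i)}$, conditional on $\F_{i-1}$ the only remaining randomness in $Z_i$ is the uniform $h(\alpha_i)$, so $Z_i=V_{h(\alpha_i)}^{(i)}-\mu_i$.

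I would then define two $\F_{i-1}$-measurable good events supporting a truncation argument. The event $B_i$ says that Theorem~\ref{thm:sum-squares} succeeds on $G_i$, viewed through the $(c-1)$-character restriction of $h$ to the positions other than that of $\alpha_i$, directly giving $\EC{Z_i^2}{\F_{i-1}}=\tfrac{1}{m}\sum_y(V_y^{(i)}-\mu_i)^2\le \tfrac{D_{\gamma,c-1}}{m}v'(G_i)$. The event $A_i$ says that for every possible shift $y\in[m]$, applying the inductive hypothesis to the same reduced tabulation with value function $v^{(y)}(x',z)=v(x'\cup\alpha_i,z\oplus y)$ yields a uniform bound $|V_y^{(i)}-\mu_i|\le M$, for a threshold $M$ chosen so that the Chernoff tail from the hypothesis sits below $m^{-\gamma-2}$; a union bound over the $m$ shifts keeps the failure probability of $A_i\cap B_i$ at $O(|G_i|m^{-\gamma})$. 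The truncated differences $Z_i'=\mathbf{1}_{A_i\cap B_i}Z_i$ then form a martingale difference with $|Z_i'|\le M$ and $\sum_i \EC{(Z_i')^2}{\F_{i-1}}\le D_{\gamma,c-1}v'(S)/m$, which is of the same order as $\sigma^2$. Applying Corollary~\ref{martingalebennettcor} and using Lemmas~\ref{lem:Benn-func-consts} and~\ref{lem:Benn-func-var-and-max} to absorb both the variance blow-up $D_{\gamma,c-1}$ and the uniform-bound $M$ into the multiplicative delay $C_{\gamma,c}$, together with a union bound over $\bigcup_i(A_i^c\cup B_i^c)$, delivers the stated tail bound with \errorterm $O_{\gamma,\eps,c}(|S|m^{-\gamma})$.

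The main obstacle will be the accounting of constants through the induction. Each level contributes a multiplicative factor of order $(Cc\gamma)^{3}$ coming from absorbing $M\sim C_{\gamma,c-1}\cdot(\text{polylogarithmic factor})$ into the Bennett-type decay, together with a small additive slack $(1+1/\gamma)$ in the target failure exponent, since the inner application of the inductive hypothesis inside $A_i$ must run at $\gamma'=\gamma+O(1)$ to pay for the union bound over the $m$ shifts. Iterating these produces precisely the shape $C_{\gamma,c}=(1+1/\gamma)^{3c(c-1)/2}(Cc\gamma)^{3c}$ appearing in the statement; the delicate balance is choosing $t_i$ in the definition of $M$ so that the inductive Chernoff tail matches the $m^{-\gamma}$ target after the union bound over $y$, while keeping $M$ small enough that martingale Bennett at level $c$ reproduces the claimed $\exp(-\sigma^2\C(t/\sigma^2))$ decay.
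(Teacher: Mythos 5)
Your proposal follows the same route as the paper: induction on $c$ with base case $c=1$ handled by full randomness, then the Doob martingale along the group ordering from Lemma~\ref{lem:groups}, truncated by two $\F_{i-1}$-measurable events—one controlling $\sum_j(X_i^{(j)})^2$ via Theorem~\ref{thm:sum-squares} (your $B_i$, the paper's $A_i$) and one controlling $\max_j|X_i^{(j)}|$ via the inductive hypothesis at $\gamma+1$ with a union bound over shifts (your $A_i$, the paper's $B_i$)—followed by Corollary~\ref{martingalebennettcor} and a case analysis to show the resulting exponent dominates $\min\{\sigma^2\C(t/\sigma^2),\gamma\log m\}$. Apart from swapping the labels $A_i$/$B_i$ and writing $X_i^{(j)}$ as $V_j^{(i)}-\mu_i$, this is the paper's proof, including the bookkeeping of the constant $C_{\gamma,c}$.
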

%\begin{remark}
%Note that by 2-independence $\Var{W}=\sigma^2$ is the same as in the fully random setting. Thus the first term in the bound above is asymptotically as good as we could hope for.     
%\end{remark}

\begin{proof}
First, akin to the proof of \cref{thm:sum-squares}, we may write 
$$V = W-\mu = \sum_{x\in S}\sum_{k\in Q}v(x, k)\left([h(x)=j\xor k]-\frac{1}{m}\right),$$ and note that $$\Var V=\Var W = \sum_{x \in S}\left(\sum_{k \in Q} \frac{1}{m} v(x, k)^2 - \left(\sum_{k \in Q} \frac{1}{m} v(x, k) \right)^2\right).$$
   
    We proceed by induction on $c$. For $c = 1$ we have full randomness and it
    follows immediately from~\Cref{martingalebennettcor} that 
    \[
        \PR{\abs{V} \ge t} \le 2\exp\left(-\sigma^2\C\left(\frac{t}{\sigma^2}\right)\right) \; .
    \]
    Now assume that $c > 1$ and inductively that the result holds for smaller
    values of $c$. We define $v'(x) = \sum_{k \in Q} v(x, k)^2$ and for $X \subseteq \Sigma^c$ we let
    $v'(X) = \sum_{x \in X} v'(x)$ and define $v'_\infty(X) = \max_{x \in X} v'(x)$.
    Now, applying \Cref{lem:groups} with respect to $w=v'$ we get position characters
    $\alpha_1, \ldots, \alpha_r$ with corresponding groups $G_1, \ldots, G_r$,
    such that $\cupdot_{i=1}^r G_i = S$ and for every $i \in \{1, \dots, r\}$,
    $v'(G_i) \le v'(S)^{1 - 1/c} v'_{\infty}(S)^{1/c}$. For a bin $j \in [m]$
    and an $i\in \{1, \dots, r\}$ we again define
    \begin{align*}
        X_i^{(j)} &= \sum_{x \in G_i}\sum_{k \in Q} 
            v(x, k) \cdot \left(\left[h(x \setminus \alpha_i) = j \oplus k\right] - \frac{1}{m}\right) \; ,
        & Y_i &= X_i^{\left(h(\alpha_i)\right)} \; .
    \end{align*} 
    Note that $\sum_{i=1}^r Y_i = V$. For $i \in \{1, \dots, r\}$ we define the $\sigma$-algebra
    $\F_i= \sigma((h(\alpha_j))_{j=1}^i)$. We furthermore define $Y_0=0$ and $\F_0=\{\emptyset, \Omega\}$. As $Y_i$ is $\F_i$-measurable for $i\in [r+1]$ and 
    $\EC{Y_i}{\F_{i - 1}} = 0$ for $i\in \{1, \dots, r\}$, $(Y_i,\F_i)_{i=0}^r$ is a martingale
    difference. Furthermore, for $i\in \{1, \dots, r\}$,
    \[
        \VarC{Y_i}{F_{i - 1}} = \frac{1}{m} \sum_{j \in [m]}\left(X_i^{(j)}\right)^2.
    \]
    According to~\Cref{thm:sum-squares} there exists a constant $K = 3 \cdot 2^6\cdot \gamma^2$ such that 
    \begin{align}\label{sqsumsbound}
        \PR{\sum_{j\in [m]}\left(X_i^{(j)}\right)^2 > K^{c - 1} v'(G_i)}
            \le O_{\gamma, \eps, c}\left(\abs{G_i}m^{\gamma}\right) \; .
    \end{align}
    For $i \in \{1, \dots, r\}$ we define the events
    \begin{align*}
        A_i &= \bigcap_{k \leq i} \left(
            \sum_{j \in [m]}(X_k^{(j)})^2 \leq K^{c - 1} v'(G_k)
        \right) \; , \\
        B_i &= \left(\max_{\substack{k \leq i \\ j \in [m]}} |X_k^{(j)}| \leq C_{\gamma + 1, c - 1} M \right) \; ,
    \end{align*}
    for some $M$ to be specified later. We define $Z_i = [A_i \cap B_i]Y_i$
    for $i \in \{1, \dots, r\}$ and $Z_0=0$. As both $A_i, B_i \in \F_{i-1}$ we have that
    $\EC{Z_i}{\F_{i-1}} = [A_i \cap B_i]\EC{Y_i}{\F_{i - 1}} = 0$ for $\{1, \dots, r\}$ so
    $(Z_i, \F_i)_{i=0}^r$ is a martingale difference. By definition of
    $A_i$ and $B_i$ it moreover holds for $i \in \{1, \dots, r\}$ that 
    \[
        |Z_i| \le C_{\gamma + 1, c - 1}M
            \quad \text{and} \quad
        \VarC{Z_i}{\F_{i - 1}} \le \frac{K^{c - 1} v'(G_i)}{m} \; .
    \]
    Setting $\sigma_0^2= \frac{K^{c - 1} v'(S)}{m}$ and
    applying~\Cref{martingalebennettcor} we obtain 
    \begin{align}\label{applybennett}
        \PR{\abs{\sum_{i=1}^r Z_i} \ge t}
            \le 2\exp\left(-\frac{\sigma_0^2}{C_{\gamma + 1, c - 1}^2 M^2} \C\left(\frac{t C_{\gamma + 1, c - 1} M}{\sigma_0^2}\right)\right) \; .
    \end{align}
    If $A_{r - 1}$ and $B_{r - 1}$ both occur then $\sum_{i=1}^r Z_i = \sum_{i=1}^r Y_i$ so it must hold that
    \begin{align*}
        \PR{\abs{V} \ge t}
            \le \PR{\abs{\sum_{i=1}^r Z_i} \ge t} + \PR{A_{r-1}^c} + \PR{B_{r-1}^c} \; .
    \end{align*}
    We may assume that $m>1$, i.e., the number of bins exceeds one, and then by the Cauchy-Schwarz inequality,
    \begin{align*}
        \sigma^2 = \sum_{x \in S}\left(\sum_{k \in Q} \frac{1}{m} v(x, k)^2 - \left(\sum_{k \in Q} \frac{1}{m} v(x, k) \right)^2\right)
            &\ge \sum_{x \in S}\left(\sum_{k \in Q} \frac{1}{m} v(x, k)^2 - \frac{1}{m^{2(1 - \eps)}}\sum_{k \in Q} \frac{1}{m^{\eps}} v(x, k)^2\right)
            \\&= \frac{v'(S)}{m}\left(1 - \frac1{m^{1 - \eps}}\right)
            \\&\ge \frac{v'(S)}{3m}
            \\&\ge \frac{\sigma_0^2}{3 K^{c - 1}}
    \end{align*}
    so using~\eqref{applybennett} we obtain
    \begin{align}\label{intermediatebound}
        \PR{\abs{V} \ge C_{\gamma, c} t}
            \le 2\exp\left(-\frac{3K^{c - 1} \sigma^2}{C_{\gamma + 1, c - 1}^2  M^2}
                \C\left(\frac{C_{\gamma, c}\cdot t\cdot C_{\gamma + 1, c - 1} M}{3K^{c - 1} \sigma^2} \right)
            \right) + \PR{A_{r-1}^c} + \PR{B_{r-1}^c} \; .
    \end{align}
    By~\eqref{sqsumsbound} and a union bound $\PR{A_{r-1}^c} \le  O(|S|m^{-\gamma})$.
    For bounding $\PR{B_{r-1}^c}$ we use the induction hypothesis on the groups,
    concluding that for $i \in \{1, \dots, r\}$ and $j \in [m]$,
    \[
        \PR{\abs{X_i^{(j)}} > C_{\gamma + 1, c - 1} M}
            \le 2\exp\left(-\sigma_i^2 \C\left(\frac{M}{\sigma_i^2}\right) \right) + O(|G_i|m^{-\gamma - 1}) \; ,
    \]
    where $\sigma_i^2=\Var{Y_i^{(j)}} \le v'(G_i)/m$. By the initial assumption on the groups, this implies $\sigma_i^2 \leq v'(S)^{1-1/c}v'_{\infty}(S)/m$ and we denote the latter
    quantity $\tau^2$. Combining with \Cref{lem:Benn-func-var-and-max} we obtain
    by a union bound that
    \[
        \PR{B_{r-1}^c}
            \le 2\abs{S}m \exp\left(-\tau^2 \C\left(\frac{M}{\tau^2}\right) \right) + O(|S|m^{-\gamma}) \; .
    \]
    We fix $M$ to be the unique real number with
    $\tau^2 \C\left(\frac{M}{\tau^2}\right) = (\gamma + 1)\log(m)$.
    With this choice of $M$, $\PR{B_{r-1}^c} \le O(\abs{S}m^{-\gamma})$,
    so by~\eqref{intermediatebound} it suffices to show that 
    \begin{align}\label{finalstep}
        \frac{3K^{c - 1} \sigma^2}{C_{\gamma + 1, c - 1}^2  M^2} \C\left(
            \frac{C_{\gamma, c}\cdot t\cdot  C_{\gamma + 1, c - 1} M}{3K^{c - 1} \sigma^2}
        \right)
            \ge \min\set{\sigma^2 \C\left(\frac{t}{\sigma^2}\right), \gamma \log(m)} \; .
    \end{align}
    First since $\frac{C_{\gamma, c} C_{\gamma + 1, c - 1}}{3K^{c - 1}} \ge 1$
    \Cref{lem:Benn-func-consts} give us that
    \[
        \frac{3\left(K\gamma\right)^{c - 1} \sigma^2}{C_{\gamma + 1, c - 1}^2  M^2} \C\left(
            \frac{C_{\gamma, c}\cdot t\cdot  C_{\gamma + 1, c - 1} M}{3\left(K\gamma\right)^{c - 1} \sigma^2}
        \right)
        \ge \frac{C_{\gamma, c}}{C_{\gamma + 1, c - 1}}\frac{\sigma^2}{M^2}\C\left(\frac{t M}{\sigma^2}\right) \; .
    \]
    Now by definition of $C_{\gamma, c}$ and $C_{\gamma + 1, c - 1}$ we get that
    \[
        \frac{C_{\gamma, c}}{C_{\gamma + 1, c - 1}}
            = \frac{\left(1 + \frac{1}{\gamma}\right)^{3\frac{c(c - 1)}{2}}\left(C c \gamma \right)^{3c}}
                {\left(1 + \frac{1}{\gamma + 1}\right)^{3\frac{(c - 1)(c - 2)}{2}}\left(C c (\gamma + 1) \right)^{3(c - 1)}}
            \ge (C c \gamma)^3 \; .
    \]
    So we have reduced the problem to showing that
    \[
        (C c \gamma)^3 \frac{\sigma^2}{M^2}\C\left(\frac{t M}{\sigma^2}\right)
            \ge \min\set{\sigma^2 \C\left(\frac{t}{\sigma^2}\right), \gamma \log(m)} \; . 
    \]
    For that we have to check a couple of cases.

    \paragraph{Case 1. $\frac{t M}{\sigma^2} \le 1$:} Using \Cref{rateofgrowth}
    twice and the fact that $(C c \gamma)^3 \ge \frac{3}{2}$, we get that
    \[
        (C c \gamma)^3 \sigma^2\C\left(\frac{t M}{\sigma^2}\right)
            \ge \frac{(C c \gamma)^3}{3} \frac{t^2}{\sigma^2}
            \ge \sigma^2 \C\left(\frac{t}{\sigma^2}\right) \; .
    \]

    \paragraph{Case 2. $v'(S) \le m^{(1 - \eps/c)\left(1 + \frac{1}{2c - 1}\right)}$:}
    We then get that
    \[
        \tau^2
            \le \frac{v'(S)^{1-1/c}v'_{\infty}(S)}{m}
            \le \frac{m^{(1 - \eps/c)\left(1 - \frac{1}{2c - 1}\right)}}{m^{1 - \eps/c}}
            = m^{-\frac{1 - \eps/c}{2c - 1}} \; .
    \]
    Now we note that $M \le 12 \gamma c$ since
    \begin{align*}
        \tau^2 \C\left(\frac{12 \gamma c}{\tau^2}\right)
            \ge 12 \gamma c \log\left(1 + \frac{12 \gamma c}{\tau^2}\right) / 2
            \ge 6 \gamma c \log(1 / \tau^2)
            \ge 6 \gamma c \frac{1 - \eps/c}{2c - 1} \log(m)
            \ge (\gamma + 1) \log(m) \; ,
    \end{align*}
    where we have used that $\eps \le \frac{1}{4}$ and $\gamma \ge 1$.

    We then get that
    \begin{align*}
        (C c \gamma)^3 \frac{\sigma^2}{M^2}\C\left(\frac{t M}{\sigma^2}\right)
            \ge \frac{(C c \gamma)^3}{M} \sigma^2 \C\left(\frac{t}{\sigma^2}\right)
            \ge \frac{(C c \gamma)^3}{12 c \gamma} \sigma^2 \C\left(\frac{t}{\sigma^2}\right)
            \ge \sigma^2 \C\left(\frac{t}{\sigma^2}\right) \; .
    \end{align*}
    % \paragraph{Case 4. $\frac{t M}{\sigma^2} > 1$ and $\tau^2 \ge 2 \gamma \log(m)$:}
    % We see that
    % \begin{align*}
    %     2 \gamma \log(m)
    %         \le \tau^2
    %         \le \frac{3(\sigma^2)^{1 - 1/c}}{m^{1/c - \eps/c}}
    %         \le \frac{3(\sigma^2)^{1 - 1/c}}{m^{\frac{1}{2c}}} \; .
    % \end{align*}
    % Rewriting this we get that
    % \[
    %     \sigma^2 \ge m^{\frac{1}{2(c - 1)}} (\tau^2)^{1 + \frac{1}{c - 1}}/9 \; .
    % \]
    % We see that $M \le \sqrt{2\gamma \log(m)}\tau$ since
    % \[
    %     \tau^2\C\left(\frac{\sqrt{2\gamma \log(m)}\tau}{\tau^2}\right)
    %         \ge \sqrt{2\gamma \log(m)}^2/2
    %         = \gamma \log(m) \; .
    % \]
    % This implies that
    % \begin{align*}
    %     \frac{\sigma^2}{M^2}
    %         \ge \frac{m^{\frac{1}{2(c - 1)}} (\tau^2)^{1 + \frac{1}{c - 1}}}{9 \cdot 2\gamma \log(m) \tau^2}
    %         \ge \frac{m^{\frac{1}{2(c - 1)}}}{18 \gamma \log(m)}
    %         \ge \log(m) \frac{e}{18 \cdot \gamma \cdot 4(c - 1)^2}
    % \end{align*}

    \paragraph{Case 3. $\frac{t M}{\sigma^2} > 1$ and
    $v'(S) > m^{(1 - \eps/c)\left(1 + \frac{1}{2c - 1}\right)}$:}
    We see that $M \le \max\set{6\gamma \log(m), \sqrt{6\gamma \log(m)} \tau}$
    since
    \begin{align*}
        \tau^2\C\left(\frac{\max\set{6\gamma \log(m), \sqrt{6\gamma \log(m)} \tau}}{\tau^2}\right)
            \ge \max\set{ \frac{6\gamma \log(m)}{3}, \frac{6\gamma\log(m)}{3} }
            \ge (\gamma + 1) \log(m) \; ,
    \end{align*}
    where we have used \Cref{rateofgrowth} and that $\gamma \ge 1$.    
    Now we have that $\sigma^2 \ge \frac{v'(S)}{3m} > m^{\frac{1 - 2\eps}{2c - 1}}/3$
    and $\tau^2 \le \frac{v'(S)^{1 - 1/c} v'_{\infty}(S)^{1/c}}{m}$. Combining this
    we get that
    \begin{align*}
        \frac{\sigma^2}{M^2}
            &\ge \min\set{\frac{\sigma^2}{36 \gamma^2 \log(m)^2}, \frac{\sigma^2}{6\gamma \log(m) \tau^2}}
            \\&\ge \min\set{
                \frac{m^{\frac{1 - 2\eps}{2c - 1}}}{108\gamma^2\log(m)^2},
                \left(\frac{v'(S)}{v'_{\infty}(S)} \right)^{1/c} \cdot \frac{1}{3 \cdot 6 \gamma \log(m)}
            }
            \\&\ge \min\set{
                \frac{m^{\frac{1 - 2\eps}{2c - 1}}}{108\gamma^2\log(m)^2},
                m^{\frac{1 - 2\eps}{2c}} \cdot \frac{1}{18\gamma \log(m)}
            }
            \\&\ge \frac{m^{\frac{1}{4c}}}{108\gamma^2\log(m)^2}
            \\&\ge \frac{\log(m)}{108 \cdot 4^3 c^3 \gamma^2} \; ,
    \end{align*}
    where have used that $\eps < \frac{1}{4}$ and that
    $\frac{m^{\frac{1}{4c}}}{\log(m)^2} \ge \frac{\log(m)}{4^3 c^3}$. Now we
    get that
    \begin{align*}
        (C c \gamma)^3 \frac{\sigma^2}{M^2}\C\left(\frac{t M}{\sigma^2}\right)
            \ge (C c \gamma)^3 \frac{\log(m)}{108 \cdot 4^3 c^3 \gamma^2} / 3
            \ge \gamma \log(m) \; .
    \end{align*}
\end{proof}

\section{General Value Functions -- Arbitrary Bins}\label{sec:valuefunctions}
The goal of this section is to prove~\Cref{thm:valuefunctions}, the second step towards~\Cref{thm:intro-tab-perm}. Again, we postpone the argument that our concentration bounds are query invariant to~\Cref{sec:queryinvariance}.
Recall that~\Cref{thm:valuefunctions} is concerned with a hash function of the form $h=\tau \circ g$, where $g: \Sigma^c \to [m]$ is a simple tabulation hash function and $\tau$ is a uniformly random permutation. Our goal is to prove that for any value function $v\colon \Sigma^c\times [m]\to [-1, 1]$, the sum $\sum_{x\in \Sigma^c}v(x, h(x))$ is strongly concentrated with high probability in $m$.
This result follows by combining the distributional properties of $g$ with the randomness of $\tau$. 

We start out by proving a lemma. The lemma describes properties we need $g$ to possess for the final composition with $\tau$ to yield Chernoff-style concentration.
\begin{lemma}\label{lemma:helpful}
	Let $m\geq 2$ be an integer and $C, T\in \R^+$ positive reals. Furthermore, let $\mathcal V\colon [m]\times[m] \to \R$ be a value function satisfying $\sum_{i \in [m]} \mathcal{V}(i, j) = 0$ for every $j\in [m]$ and such that
	\[
        \max_{i, j \in [m]} \abs{\mathcal{V}(i, j)} \le M := \max\set{C, \frac{\sigma^2}{T}} \; ,
    \]
    where $\sigma^2 = \frac{1}{m}\sum_{i \in [m]} \sum_{j \in [m]}   \mathcal{V}(i, j)^2$. If $\tau\colon [m]\to[m]$ is a uniformly random permutation, then the random variable $Z=\sum_{i\in [m]}\mathcal V(\tau(i), i)$ satisfies
    \begin{align*}
        \PR{|Z| \ge D t}
            &\le 4 \left( \exp\left(-\sigma^2 \C\left( \frac{t}{\sigma^2} \right)\right)
                + \exp\left(-\frac{T^2}{2\sigma^2} \right)
            \right) \; ,
           \end{align*}
    where $D = \max\set{8C, 12}$ is a universal constant depending on $C$.
\end{lemma}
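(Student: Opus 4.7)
The plan is to apply the martingale Bennett inequality (\Cref{martingalebennettcor}) to the Doob martingale obtained by revealing $\tau$ one coordinate at a time. Let $\mathcal F_i = \sigma(\tau(1),\ldots,\tau(i))$ with $\mathcal F_0 = \{\emptyset, \Omega\}$, and set $X_i = \EC{Z}{\mathcal F_i}$. The column-sum hypothesis $\sum_{i\in[m]}\mathcal{V}(i,j)=0$ gives $X_0 = \E{Z} = 0$, so $D_i := X_i - X_{i-1}$ is a martingale difference with $\sum_{i=1}^m D_i = Z$.

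Given $\mathcal F_{i-1}$ the value $\tau(i)$ is uniform on $R_{i-1} := [m]\setminus\{\tau(1),\ldots,\tau(i-1)\}$, while $\tau(i{+}1),\ldots,\tau(m)$ is a uniform permutation of $R_i := R_{i-1}\setminus\{\tau(i)\}$. A short calculation then yields the closed form
\begin{equation*}
D_i = \bigl(\mathcal{V}(\tau(i),i) - \bar{\mathcal V}_i\bigr) - \frac{1}{|R_i|}\sum_{k>i}\bigl(\mathcal{V}(\tau(i),k) - \bar{\mathcal V}_k\bigr),
\end{equation*}
where $\bar{\mathcal V}_k = \tfrac{1}{|R_{i-1}|}\sum_{r\in R_{i-1}}\mathcal{V}(r,k)$. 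Since $\bigl|\mathcal{V}(\tau(i),k) - \bar{\mathcal V}_k\bigr|\le 2M$ and the sum contains $|R_i|$ terms, the triangle inequality gives $|D_i|\le 4M$. As $D_i$ is a function of $\tau(i)$ only, its conditional variance is the $R_{i-1}$-uniform variance of the right-hand side; using $(a+b)^2\le 2a^2+2b^2$ and Cauchy--Schwarz on the sum over $k>i$,
\begin{equation*}
\VarC{D_i}{\mathcal F_{i-1}} \;\le\; \frac{2 s_i^2}{|R_{i-1}|} + \frac{2}{|R_{i-1}|\,|R_i|}\sum_{k>i} s_k^2,\qquad s_k^2 := \sum_r \mathcal{V}(r,k)^2.
\end{equation*}
Telescoping $\sum_{i<k}\frac{1}{|R_{i-1}|\,|R_i|} = \frac{1}{m-k+1} - \frac{1}{m}$ across $i$, the plan is to establish, on a high-probability event, $\sum_i \VarC{D_i}{\mathcal F_{i-1}} \le C_0 \sigma^2$. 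A direct pairwise moment computation exploiting the vanishing column sums already delivers the in-expectation match $\E{\sum_i \VarC{D_i}{\mathcal F_{i-1}}} = \Var{Z} = (1+o(1))\sigma^2$ via orthogonality of martingale differences.

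With $|D_i|\le 4M$ and $\sum_i \VarC{D_i}{\mathcal F_{i-1}}\le C_0\sigma^2$, \Cref{martingalebennettcor} delivers
\begin{equation*}
\Pr[|Z|\ge t] \;\le\; 2\exp\!\Bigl(-\Omega\bigl(\tfrac{\sigma^2}{M^2}\,\C(tM/\sigma^2)\bigr)\Bigr).
\end{equation*}
Unpacking $M=\max\{C,\sigma^2/T\}$ then yields the two exponentials. When $M = C$, \Cref{lem:Benn-func-consts} (with $b=C\ge 1$) converts this into $\exp(-\Omega(\sigma^2\C(t/\sigma^2)))$, the first term. When $M = \sigma^2/T > C$, the Bennett exponent simplifies to $\tfrac{T^2}{\sigma^2}\C(t/T)$; casework on $t\le T$ versus $t>T$, aided by \Cref{rateofgrowth,lem:Benn-func-consts,lem:Benn-func-var-and-max}, shows this dominates either $\sigma^2\C(t/\sigma^2)$ (small deviations) or $T^2/(2\sigma^2)$ (large deviations), producing the second term. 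Chasing the constants through gives the stated factor $D=\max\{8C,12\}$.

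The main obstacle is the high-probability control of $\sum_i \VarC{D_i}{\mathcal F_{i-1}}$. The pointwise expansion above leaves a worst-case term of order $\sum_i s_i^2/(m-i+1)$, which can blow up to $m\sigma^2$ if $\mathcal V^2$ concentrates adversarially in the last few columns, so a purely deterministic bound will not suffice. The delicate step is to leverage the centerings $\bar{\mathcal V}_k$ together with the telescoping identity above to absorb this into $O(\sigma^2)$ on a high-probability event, folding the residual failure probability into the $\exp(-T^2/(2\sigma^2))$ term of the final bound.
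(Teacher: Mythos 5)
Your overall plan (Doob martingale in $\tau$, Bennett's inequality, then casework on $M$) matches the paper's, and your bounds $|D_i|\le 4M$ and the pointwise variance expansion are on track. But you have correctly diagnosed the obstacle and not resolved it, and the resolution is precisely the missing idea: \emph{split $Z$ in half}. The paper writes $Z = Y_1 + Y_2$ with $Y_1 = \sum_{i<\lceil m/2\rceil}\mathcal V(\tau(i),i)$ and $Y_2 = \sum_{i\ge\lceil m/2\rceil}\mathcal V(\tau(i),i)$, then runs the Doob martingale for each half separately. For $Y_1$ the inner sums run only over $k < \lceil m/2\rceil$, so the denominators $|R_{i-1}| = m-i+1$ stay $\ge m/2$ throughout, and the quantity you flag, $\sum_i s_i^2/(m-i+1)$, is replaced by $\sum_i s_i^2 \cdot O(1/m) = O(\sigma^2)$ \emph{deterministically} --- no high-probability event needed. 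A union bound over the two halves (plus $\pm\mathcal V$ for the two tails) gives the outer factor $4$.

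Your proposed workaround --- ``absorb the adversarial last-few-columns mass into the $\exp(-T^2/(2\sigma^2))$ term via a high-probability event'' --- does not match how that term actually arises. In the paper, $\exp(-T^2/(2\sigma^2))$ comes out of the final casework when $M = \sigma^2/T$, i.e., from unpacking the Bennett exponent with a deterministic variance cap $O(\sigma^2)$ already in hand; it is not slack available to absorb a variance-control failure. As written, your proof would need a separate concentration argument for the random variable $\sum_i\VarC{D_i}{\mathcal F_{i-1}}$, and there is no obvious clean way to prove one with the tools at hand. So this is a genuine gap: without the halving trick (or some equivalent device that restores a deterministic $O(\sigma^2)$ cap on the predictable quadratic variation), the argument does not close.
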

\begin{proof}
    We define $Y_1 = \sum_{i=0}^{\ceil{m/2}-1} \mathcal{V}(\tau(i), i)$ and
    $Y_2 = \sum_{i = \ceil{m/2}}^{m-1} \mathcal{V}(\tau(i), i)$. Since $Z = Y_1 + Y_2$ it follows that if $Z > D t$ then there exists $i \in \set{1, 2}$
    such that $Y_i \ge \frac{D}{2} t$. It suffices to show that
    \begin{align}\label{eq:Yi_bound}
    	\PR{Y_i \ge \frac{D}{2} t} \le
             \exp\left(-\sigma^2 \C\left( \frac{t}{\sigma^2} \right)\right)
                + \exp\left(-\sigma^2 \C\left( \frac{T}{\sigma^2} \right)\right)
             \; ,
    \end{align}
    for $i \in \set{1, 2}$. A union bound over $i$ then yields a bound on $\Pr[Z\geq Dt]$. Since we may instead consider the value function $-\mathcal V$, the same argument yields a bound on $\Pr[Z\leq -Dt]$, which concludes the proof.

    Thus, we shall prove \eqref{eq:Yi_bound} for $Y_1$ -- the proof is completely analogous for $Y_2$.
    Define the filtration $(\F_i)_{i=0}^{\ceil{m/2}}$ by
    $\F_i = \sigma\left( (\tau(j))_{j \in [i]} \right)$ and let
    $X_i = \EC{Y_1}{\F_i}$ such that $(X_i, \F_i)_{i=0}^{\ceil{m/2}}$
    is a martingale, $X_0=\E{Y_1}$, and $X_{\ceil{m/2}}=Y_1$. Towards applying \Cref{martingalebennettcor}, we
    bound $\abs{X_i - X_{i - 1}}$ and $\sum_{i=1}^{\ceil{m/2}}\VarC{X_i - X_{i - 1}}{\F_{i - 1}}$.

    First, we bound $\abs{X_i - X_{i - 1}}$. We start by writing
    \begin{align*}
        X_i - X_{i - 1}
            &= \EC{Y_1}{\F_i} - \EC{Y_1}{\F_{i - 1}}
            \\&= \mathcal{V}(\tau(i-1), i-1) - \EC{\mathcal{V}(\tau(i-1), i-1)}{\F_{i - 1}}
                + \sum_{k=i}^{\ceil{m/2}-1} \left( 
                    \EC{\mathcal{V}(\tau(k), k)}{\F_i} - \EC{\mathcal{V}(\tau(k), k)}{\F_{i - 1}}
                \right) \; .
            % \\&= \mathcal{V}(i, \tau(i)) - \EC{\mathcal{V}(i, \tau(i))}{\F_{i - 1}}
            % + \sum_{k \in [m] \setminus [i]} \left( 
            %     - \frac{1}{m - i}\sum_{j \in [i]}\mathcal{V}(k, \tau(j)) + \frac{1}{m - i}\left( \EC{\mathcal{V}(k, \tau(i)}{\F_{i - 1}} \sum_{j \in [i - 1]}\mathcal{V}(k, \tau(j))
            % \right)
    \end{align*}
    Now, note that for $k\geq i$,
    \begin{align*}
        \EC{\mathcal{V}(\tau(k), k)}{\F_i}
            = - \frac{1}{m - i}\sum_{j=0}^{i-1}\mathcal{V}(\tau(j), k) \; ,
    \end{align*}
    since $\sum_{\ell\in [m]}\mathcal V(\ell, k)=0$, and furthermore,
    \begin{align*}
        \EC{\mathcal{V}(\tau(k), k)}{\F_{i - 1}}
            = - \frac{1}{m - i}\left(
                    \EC{\mathcal{V}(\tau(i-1), k)}{\F_{i - 1}}
                  + \sum_{j=0}^{i-2}\mathcal{V}(\tau(j), k)
                \right) \; .        
    \end{align*}
    Hence, it follows that
    \begin{align*}
        X_i - X_{i - 1}
            &=  \mathcal{V}(\tau(i-1), i-1) - \EC{\mathcal{V}(\tau(i-1), i-1)}{\F_{i - 1}}
                \\ &- \frac{1}{m - i} \sum_{k=i}^{\ceil{m/2}-1} \left(
                    \mathcal{V}(\tau(i-1), k) - \EC{\mathcal{V}(\tau(i-1), k)}{\F_{i - 1}}
                \right) \; .
    \end{align*}
    Since $\abs{\mathcal{V}(i, j)} \le M$ for all $i,j\in [m]$, it follows that
        $\abs{X_i - X_{i - 1}} \le 4 M$.

    Second, we bound $\VarC{X_i - X_{i - 1}}{\F_{i - 1}}$. To this end, observe that
    \begin{align*}
        \VarC{X_i - X_{i - 1}}{\F_{i - 1}}
            % &= \VarC{V(i, \tau(i)) - \EC{\mathcal{V}(i, \tau(i))}{\F_{i - 1}}
            %         - \frac{1}{m - i} \sum_{k \in [m] \setminus [i]} \left(
            %         \mathcal{V}(k, \tau(i)) - \EC{\mathcal{V}(k, \tau(i))}{\F_{i - 1}}
            % \right)}{\F_{i - 1}}
            &= \VarC{\mathcal{V}(\tau(i-1), i-1)
                - \frac{1}{m - i} \sum_{k=i }^{\ceil{m/2}-1} \mathcal{V}(\tau(i-1), k)}{\F_{i - 1}}
            % \\&\le \left( \sqrt{\VarC{\mathcal{V}(\tau(i), i)}{\F_{i - 1}}}
            %     + \frac{1}{m - i}\sum_{k \in [m] \setminus [i]}
            %         \sqrt{\VarC{\mathcal{V}(\tau(i), k)}{\F_{i - 1}}}
            % \right)^2
            \\&\le 2\left( \VarC{\mathcal{V}(\tau(i-1), i-1)}{\F_{i - 1}}
                + \frac{1}{m - i}\sum_{k=i}^{\ceil{m/2}-1}
                    \VarC{\mathcal{V}(\tau(i-1), k)}{\F_{i - 1}}
            \right) \; ,
            % \\&\le 2\left( \EC{\mathcal{V}(i, \tau(i))^2}{\F_{i - 1}}
            %     + \frac{1}{m - i}\sum_{k \in [m] \setminus [i]}
            %         \EC{\mathcal{V}(k, \tau(i))^2}{\F_{i - 1}}
            % \right)
            % \\&= 
    \end{align*}
    where the inequality follows from the fact that
    $2\CovarC{A}{B}{\mathcal{H}} \le \VarC{A}{\mathcal{H}} + \VarC{B}{\mathcal{H}}$,
    for any random variables $A$ and $B$ and any sigma algebra $\mathcal{H}$.
    For any $k \in [m]$,
    \begin{align*}
        \VarC{\mathcal{V}(\tau(i-1), k)}{\F_{i - 1}}
            &\le \EC{\mathcal{V}(\tau(i-1), k)^2}{\F_{i - 1}}
            \\&= \frac{1}{m - i+1}\sum_{j \in [m] \setminus \tau([i-1])} \mathcal{V}(j, k)^2
            \\&\le \frac{1}{m - i+1}\sum_{j \in [m]} \mathcal{V}(j, k)^2
            \\&\le \frac{2}{m}\sum_{j \in [m]} \mathcal{V}(j, k)^2 \; ,
    \end{align*}
    where the last inequality follows from the fact that $i \le \ceil{m/2} $.
    Hence, 
    \begin{align*}
        \VarC{X_i - X_{i - 1}}{\F_{i - 1}}
            &\le 2\left( \VarC{\mathcal{V}(\tau(i-1), i-1)}{\F_{i - 1}}
                + \frac{1}{m - i}\sum_{k=i}^{\ceil{m/2}-1}
                    \VarC{\mathcal{V}(\tau(i-1), k)}{\F_{i - 1}}
            \right)
            \\&\le \frac{4}{m} \sum_{j \in [m]} \mathcal{V}(j, i)^2
                + \frac{2}{m - i} \cdot \frac{2}{m}\sum_{k=i}^{\ceil{m/2}-1}\sum_{j \in [m]} \mathcal{V}(j, k)^2
            \\&\le \frac{4}{m} \sum_{j \in [m]} \mathcal{V}(j, i)^2
                + \frac{16}{m^2}\sum_{k \in [m]}\sum_{j \in [m]} \mathcal{V}(j, k)^2 \; ,
    \end{align*}
    again using that $i \le \ceil{m/2}$. We now see that
    \begin{align*}
        \sum_{i=1}^{\ceil{m/2}} \VarC{X_i - X_{i - 1}}{\F_{i - 1}}
            &\le \sum_{i=1}^{\ceil{m/2}} \left(
                \frac{4}{m} \sum_{j \in [m]} \mathcal{V}(j, i)^2
                + \frac{16}{m^2}\sum_{k \in [m]}\sum_{j \in [m]} \mathcal{V}(j, k)^2
            \right)
            \\&\le \sum_{i\in [m]} \left(
                \frac{4}{m} \sum_{j \in [m]} \mathcal{V}(j, i)^2
                + \frac{16}{m^2}\sum_{k \in [m]}\sum_{j \in [m]} \mathcal{V}(j, k)^2
            \right)
            \\&\le 20\sigma^2 \; .
    \end{align*}

    The assumption on $\mathcal{V}$ implies that $\E{\mathcal{V}(\tau(i),i)}=0$ for each $i \in [m]$, so also $\E{Y_1}=0$. Applying~\Cref{martingalebennettcor} then yields,
    \begin{align*}
        \PR{Y_1 \ge \frac{D}{2}t}
            \le \exp\left(-\frac{20\sigma^2}{(4M)^2} \C\left( \frac{(D/2)t \cdot 4 M}{20 \sigma^2} \right) \right) = \exp\left(-\frac{5\sigma^2}{4M^2} \C\left( \frac{DMt}{10 \sigma^2} \right) \right).
    \end{align*}
    The goal is now to show that
	\begin{align}\label{eq:tech_to_check}
		        \frac{5\sigma^2}{4M^2} \C\left( \frac{DMt}{10 \sigma^2} \right)
            \ge \min\set{\sigma^2\C\left(\frac{t}{\sigma^2}\right), \frac{T^2}{2\sigma^2}} .
	\end{align}
    Because if this is the case, then as desired
    \[
        \PR{Y_1 \ge \frac{D}{2}t}
             \le \exp\left(-\min\set{\sigma^2\C\left(\frac{t}{\sigma^2}\right), \frac{T^2}{2\sigma^2}} \right)
             \le \exp\left(-\sigma^2\C\left(\frac{t}{\sigma^2}\right)\right) + \exp\left(-\frac{T^2}{2\sigma^2}\right).
    \]
    We check \eqref{eq:tech_to_check} by cases. This completes the proof.
    
    \paragraph{Case 1. $M \le \frac{10}{D}$:} In this case, $\frac{DM}{10} \le 1$. Thus, by
    \Cref{lem:Benn-func-consts},
    \begin{align*}
        \frac{5\sigma^2}{4M^2} \C\left( \frac{D M t}{10 \sigma^2} \right)
            \ge \frac{D^2}{80}\sigma^2\C\left(\frac{t}{\sigma^2}\right)
            \ge \sigma^2\C\left(\frac{t}{\sigma^2}\right) \; ,
    \end{align*}
    using that $D \ge 12 \ge \sqrt{80}$.

    \paragraph{Case 2. $\frac{10}{D} \le M \le C$:} In this case, $\frac{DM}{10} \ge 1$. Thus,
    by \Cref{lem:Benn-func-consts},
    \begin{align*}
        \frac{5\sigma^2}{4M^2} \C\left( \frac{D M t}{10 \sigma^2} \right)
            \ge \frac{D}{8M}\sigma^2\C\left(\frac{t}{\sigma^2}\right)
            \ge \frac{D}{8C}\sigma^2\C\left(\frac{t}{\sigma^2}\right)
            \ge \sigma^2\C\left(\frac{t}{\sigma^2}\right) \; ,
    \end{align*}
    using that $D \ge 8C$.

    \paragraph{Case 3. $M \le \frac{\sigma^2}{T}$:} In this case, recall that $D \ge 12$ such that
    $\frac{D}{10} \ge 1$ and we may apply \Cref{lem:Benn-func-consts}, yielding
    \begin{align*}
        \frac{5\sigma^2}{4M^2} \C\left( \frac{D M t}{10 \sigma^2} \right)
            \ge \frac{5}{4} \frac{T^2}{\sigma^2} \C\left( \frac{D}{10} \frac{t}{T} \right)
            \ge \frac{D}{8} \frac{T^2}{\sigma^2} \C\left( \frac{t}{T} \right) \; .
    \end{align*}
    By \Cref{rateofgrowth},
    \[
        \C\left( \frac{t}{T} \right)
            \ge \C\left( \min\set{\frac{t}{T}, 1} \right)
            \ge \min\set{\frac{t^2}{3T^2}, \frac{1}{3}} \; .
    \]
    So finally,
    \begin{align*}
        \frac{5\sigma^2}{4M^2} \C\left( \frac{D M t}{10 \sigma^2} \right)
            \ge \min\set{ \frac{D}{24} \frac{t^2}{\sigma^2}, \frac{D}{24}\frac{T^2}{\sigma^2}}
            \ge \min\set{ \frac{D}{12} \sigma^2 \C\left(\frac{t}{\sigma^2}\right), \frac{D}{24}\frac{T^2}{\sigma^2} }
            \ge \min\set{ \sigma^2 \C\left(\frac{t}{\sigma^2}\right), \frac{T^2}{2\sigma^2}} \; ,
    \end{align*}
    where we have used \Cref{rateofgrowth} and the fact that $D \ge 12$.

   \end{proof}
	With this result in hand we are ready to prove \cref{thm:valuefunctions}. We restate it here in a more technically explicit version. For a more intuitive understanding, please refer back to the original statement. Note that we only require the hash function $h$ of the theorem to be 2-independent, whereas \cref{thm:valuefunctions} requires the hash function to be 3-independent. The difference lies in that the statement of \cref{thm:valuefunctions} is slightly stronger, guaranteeing query invariance. Having deferred the treatment of query invariance until later, we only need 2-independence for now.
	\begin{theorem}\label{thm:tab-perm-full}

		Let $\eps\in (0, 1]$ and $m\geq 2$ be given. Let $h\colon A\to [m]$ be a 2-independent hash function satisfying the following. For every $\gamma>0$ and every value function $\tilde v\colon A\times [m]\to [-1, 1]$ such that $Q=\setbuilder{i\in [m]}{\exists x\in A\colon \tilde v(x, i)\neq 0}$ has size $\abs Q\leq m^\eps$, the random variables $W=\sum_{x\in A}\tilde v(x, h(x))$ and $W_j=\sum_{x\in A}\tilde v(x, h(x)\xor j), j\in [m]$ with mean $\mu_W = \E W = \E {W_j}$ and variance $\sigma_W^2 = \Var{W}$ satisfy the inequalities
		\begin{align}
				\Pr \left[ \abs {W-\mu_W} \geq C\cdot t \right]\leq 2\exp\left(-\sigma_W^2 \C\left( \frac{t}{\sigma_W^2} \right) \right) + O(\abs Am^{-\gamma}), \label{eq:many_bins}\\
				\Pr\left[ \sum_{j\in [m]}\left( W_j-\mu_W \right)^2 \geq D\cdot \sum_{x\in A}\sum_{k\in Q}\tilde v(x, k)^2\right] = O(\abs Am^{-\gamma}), \label{eq:variance}
			\end{align}
		for every $t>0$, where $C$ and $D$ are universal constants depending on $\gamma$ and $\eps$.
		
		Let $v\colon A\times [m]\to [-1, 1]$ be any value function, $\tau\colon [m]\to [m]$ a uniformly random permutation independent of $h$, and $\gamma>0$. The random variable $U = \sum_{x\in A}v(x, \tau(h(x)))$ with expectation $\mu=\E U$ and variance $\sigma^2=\Var U$ satisfies
		\begin{align}
			\Pr\left[ \abs{U-\mu}\geq E\cdot t \right]\leq 6\exp\left( -\sigma^2\C\left(\frac{t}{\sigma^2}\right) \right)+O(\abs Am^{-\gamma})
		\end{align}
		for every $t>0$, where $E$ is a universal constant depending on $\gamma$ and $\eps$.
	\end{theorem}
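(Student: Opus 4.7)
The plan is to condition on $h$ and apply Lemma~\ref{lemma:helpful} using only the randomness of $\tau$. Set $V_j(i) = \sum_{x : h(x) = j}v(x, i)$, $\bar V_j = \tfrac{1}{m}\sum_i V_j(i)$, $b_i = \sum_x v(x, i)$, and let $\hat v(x, i) = v(x, i) - \tfrac{1}{m}\sum_{i'}v(x, i')$ so that $\sigma^2 = \tfrac{1}{m}\sum_{x, i}\hat v(x, i)^2$. Since $\tau(h(x))$ is uniform on $[m]$ given $h$, $\mathbb{E}_\tau[U \mid h] = \mu$ is independent of $h$. I would define the \emph{doubly centered} array
\[
  \mathcal{V}(i, j) \;=\; V_j(i) - \bar V_j - \tfrac{b_i}{m} + \tfrac{\mu}{m},
\]
which satisfies both $\sum_i \mathcal{V}(i, j) = 0$ (needed for Lemma~\ref{lemma:helpful}) and $\sum_j \mathcal{V}(i, j) = 0$ (crucial for the variance bound below), and for which a direct calculation gives $\sum_j \mathcal{V}(\tau(j), j) = U - \mu$.

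Next I would verify the lemma's quantitative conditions hold with failure probability $O(\abs{A}m^{-\gamma})$ over $h$. Note that $\mathcal{V}(i, j) = \hat V_j(i) - \tfrac{1}{m}\sum_x \hat v(x, i)$, where $\hat V_j(i) = \sum_{x:h(x) = j}\hat v(x, i)$. For each $i$, hypothesis~\eqref{eq:variance} applied to the single-support value function $\tilde v_i(x, k) = \hat v(x, i)\,[k = 0]$ (support size $1 \le m^{\eps}$, after scaling into $[-1, 1]$) exactly bounds $\sum_j \mathcal{V}(i, j)^2 \le D \sum_x \hat v(x, i)^2$ with high probability. A union bound over $i$ then gives
\[
  \sigma_{\mathcal{V}}^2 \;:=\; \tfrac{1}{m}\sum_{i, j}\mathcal{V}(i, j)^2 \;\le\; K\sigma^2
\]
for a universal constant $K$. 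For the uniform bound, hypothesis~\eqref{eq:many_bins} applied to $\tilde v(x, k) = \hat v(x, i)\,[k = j]$ at each pair $(i, j)$, together with a union bound over the $m^2$ pairs, yields $\max_{i, j}\abs{\mathcal{V}(i, j)} \le M$ for a suitable $M$.

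Finally, set $T = \sigma_{\mathcal{V}}\sqrt{2\gamma \ln m}$ so that $\exp(-T^2/(2\sigma_{\mathcal{V}}^2)) \le m^{-\gamma}$, and choose $C$ large enough that $M \le \max(C, \sigma_{\mathcal{V}}^2/T)$. Lemma~\ref{lemma:helpful}, conditional on the good event for $h$, gives
\[
  \PRC{\abs{U - \mu} \ge Dt}{h} \;\le\; 4\exp\!\bigl(-\sigma_{\mathcal{V}}^2\,\mathcal{C}(t/\sigma_{\mathcal{V}}^2)\bigr) + 4m^{-\gamma}.
\]
Using $\sigma_{\mathcal{V}}^2 \le K\sigma^2$ together with Lemmas~\ref{lem:Benn-func-consts} and~\ref{lem:Benn-func-var-and-max}, the first term becomes $\exp(-\Omega(\sigma^2\,\mathcal{C}(t/\sigma^2)))$ after rescaling $t$ by a universal constant (absorbed into $E$), and the bad $h$-events contribute the remaining $O(\abs{A}m^{-\gamma})$. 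The main obstacle is choosing the right centering for $\mathcal{V}$: the naive $V_j(i) - \bar V_j$ has row sums $b_i - \mu$ that inflate $\sigma_{\mathcal{V}}^2$ to $\Omega((\abs{A}/m)\sigma^2)$ when $\abs{A} \gg m$, giving a factor-$\abs{A}/m$ loss in the exponent of Lemma~\ref{lemma:helpful}. The doubly centered correction aligns $\mathcal{V}$ with the output of~\eqref{eq:variance}, making the clean bound $\sigma_{\mathcal{V}}^2 \le K\sigma^2$ accessible; this is where the generality of~\eqref{eq:variance} --- holding for any small-support value function, not just a single bin --- is essential.
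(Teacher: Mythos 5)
Your doubly centered array $\mathcal{V}(i,j)=V_j(i)-\bar V_j-b_i/m+\mu/m$ is correct: it satisfies $\sum_i\mathcal{V}(i,j)=0$, $\sum_j\mathcal{V}(i,j)=0$, and $\sum_j\mathcal{V}(\tau(j),j)=U-\mu$, and the step that feeds \eqref{eq:variance} column by column does give $\sigma_\mathcal{V}^2\le K\sigma^2$ with failure probability $O(\abs{A}m^{-\gamma})$ (use $\gamma+1$ before the union bound over $m$ columns). However, your verification of the \emph{uniform} bound $\max_{i,j}\abs{\mathcal{V}(i,j)}\le M$ with $M\le\max\{C,\sigma_\mathcal{V}^2/T\}$ has a genuine gap, and it cannot be repaired without introducing the paper's split into high- and low-variance columns.

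The problem is that when you apply \eqref{eq:many_bins} to $\tilde v(x,k)=\hat v(x,i)\,[k=j]$ for a fixed $(i,j)$, the variance in the Chernoff exponent is $\sigma_W^2\approx\frac{1}{4m}\sum_x\hat v(x,i)^2$, and this column-$i$ quantity is \emph{not} controlled by $\sigma^2$. In the worst case -- e.g.\ $v(x,i)=w(x)[i=0]$, so all mass sits in a single column -- one has $\sum_x\hat v(x,0)^2\approx m\sigma^2$ and hence $\sigma_W^2\approx\sigma^2/4$. To push the failure probability of \eqref{eq:many_bins} below $m^{-\gamma-2}$ (needed for the union bound over $m^2$ pairs) you must take $M=\Omega\bigl(\max\{\gamma\log m,\;\sigma\sqrt{\gamma\log m}\}\bigr)$. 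On the other hand, choosing $T=\sigma_\mathcal{V}\sqrt{2\gamma\ln m}$ (which is forced, to make the second term of Lemma~\ref{lemma:helpful} small) gives $\sigma_\mathcal{V}^2/T=O(\sigma/\sqrt{\log m})$. So when $\sigma$ is large you would need $M$ both $\gtrsim\sigma\sqrt{\log m}$ and $\lesssim\sigma/\sqrt{\log m}$, a factor-$\log m$ contradiction. The phrase ``choose $C$ large enough'' does not fix this: $C$ must be a universal constant (it enters the final bound through $D=\max\{8C,12\}$), and no constant dominates $\sigma\sqrt{\log m}$ for unbounded $\sigma$. This is precisely why the paper does not feed the whole array into Lemma~\ref{lemma:helpful}. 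It first sorts columns by variance, bundles the top $m^\eps$ heavy columns into a \emph{single} small-support value function $V_1$, and bounds $V_1$ directly by \eqref{eq:many_bins} conditioned on $\tau$ -- this is where the hypothesis' generality over value functions with $\abs Q\le m^\eps$ is essential, not just the single-bin case you invoke. Only the remaining columns, each with variance $\le\sigma'^2/m^\eps$, go into Lemma~\ref{lemma:helpful}, and there the uniform bound $M=\max\{C,\sigma'/\sqrt{2D\gamma\log m}\}$ is achievable precisely because the per-column variance has been capped.
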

\begin{proof}
Define $v':A \times [m] \to [-1,1]$ by letting $v'(x,i)=\frac{1}{2}\left(v(x,i)-\frac{1}{m}\sum_{j\in [m]}v(x,j)\right)$ and write $V=U-\mu$. Since $\sum_{i\in [m]} \left( [\tau(h(x)) = i] - \frac{1}{m} \right)=0$, we may write
\begin{align*}
V=& \sum_{x\in A}  \sum_{i\in [m]} v(x, i)[\tau(h(x)) = i] - \frac1m \sum_{x\in A}\sum_{i\in [m]}v(x, i) + \sum_{x\in A}\left(\left(\sum_{i\in [m]}\left( [\tau(h(x)) = i] - \frac{1}{m} \right)\right)\cdot \left( \frac1m \sum_{j\in [m]}v(x, j) \right)\right)\\ 
=&\sum_{x\in A} \sum_{i\in [m]} \left(v(x,i)-\frac{1}{m}\sum_{j\in [m]}v(x,j)\right)\left( [\tau(h(x)) = i] - \frac{1}{m} \right) \\
=&2\sum_{x\in A} \sum_{i\in [m]} v'(x,i)\left( [\tau(h(x)) = i] - \frac{1}{m} \right).
\end{align*}
We write $V' = \sum_{x\in A} \sum_{i\in [m]} v'(x,i)\left( [\tau(h(x)) = i] - \frac{1}{m} \right)$ such that $V=2V'$.
We note that by $2$-independence 
$$
\sigma^2=\sum_{x\in A} \Var{v(x,\tau(h(x)))}=\sum_{x\in A}\E{\left(v(x,\tau(h(x)))-\frac{1}{m}\sum_{j\in [m]}v(x,j)\right)^2}=\frac{4}{m}\sum_{x\in A} \sum_{i \in [m]}v'(x,i)^2.
$$
Thus, we may write $\sigma'^2=\Var{V'}=\frac{1}{m}\sum_{x\in A} \sum_{i \in [m]}v'(x,i)^2$. We proceed to show that for some constant $E'$ depending on $\gamma$ and $\eps$,
\begin{align*}
        \PR{\abs{ V' } \ge E' \cdot t}
            \le 6 \exp\left(-\sigma'^2 h\left(\frac{t}{\sigma'^2}\right)\right) + O(|A|m^{-\gamma}) \; ,
    \end{align*}
As $\sigma'\leq \sigma$ and $V=2V'$ the theorem then follows with $E=2E'$ by applying~\Cref{lem:Benn-func-var-and-max}.

 For $i\in [m]$ we define $\sigma_i^2=\frac{1}{m}\sum_{x\in A}v'(x,i)^2$, so that $\sum_{i\in [m]}\sigma_i^2=\sigma'^2$. Assume without loss of generality that $\sigma_0^2\geq \cdots \geq \sigma_{m-1}^2$. Now define $\mathcal{V}:[m] \times [m] \to \R$ by
$$
\mathcal{V}(i,j)=\sum_{x\in A}v'(x,j)\left([h(x)=i]-\frac{1}{m} \right).
$$
Note that for any $j\in [m]$, $\sum_{i\in [m]}\mathcal{V}(i,j)=0$, regardless of the (random) choice of $h$. With this definition, $V'=\sum_{i\in [m]} \mathcal{V}(i,\tau (i))=\sum_{j\in [m]} \mathcal{V}(\tau^{-1}(j),j)$. Now let 
$$
V_1=\sum_{j\in [m^\eps]}\mathcal{V}(\tau^{-1}(j),j) \quad \text{and} \quad V_2 =\sum_{j\in [m]\setminus[m^\eps]}\mathcal{V}(\tau^{-1}(j),j),
$$
and note that $V_1+V_2=V'$. Defining value functions $v_1', v_2'\colon A\times[m]\to [-1, 1]$ by
$$
v_1'(x,i)=\begin{cases}
v'(x,i), & \text{if } i\in [m^\eps] \\
0, & \text{otherwise} 
\end{cases} 
\quad
\text{and}
\quad
v_2'(x,i)=\begin{cases}
v'(x,i), & \text{if } i\in[m] \setminus [m^\eps] \\
0, & \text{otherwise} 
\end{cases},
$$
we observe that 
$$V_1 = \sum_{x\in A} v_1'(x, \tau(h(x))) - \E{\sum_{x\in A} v_1'(x, \tau(h(x)))}
\quad\text{and}\quad
V_2 = \sum_{x\in A} v_2'(x, \tau(h(x))) - \E{\sum_{x\in A} v_2'(x, \tau(h(x)))}$$

Let $D\geq 1$ be such that~\cref{eq:variance} holds with \errorterm $O(|A|m^{-\gamma-1})$ and let $M=\max\left\{C, \frac{\sigma'}{\sqrt{2D\gamma\log m}}\right\}$ for some large constant $C$ to be fixed later. Define the two events
$$
\mathcal{A}=\bigcap_{j\in[m]\setminus [m^\eps]} \left( \max_{i \in m} |\mathcal{V}(i,j)| \leq M\right) \quad \text{and} \quad \mathcal{B}=\bigcap_{j\in [m]}\left(\sum_{i \in [m]} \mathcal{V}(i,j)^2<D\sigma_j^2m\right).
$$
By a union bound,
$$
\Pr[|V'|\geq E't]\leq \Pr[|V_1|\geq E't/2]+\Pr[|[\mathcal{A}]\cdot[\mathcal{B}] \cdot V_2|\geq E't/2]+\Pr[\mathcal{A}^c]+\Pr[\mathcal{B}^c],
$$
and we proceed to bound each of these terms individually. 

First, we bound $\Pr[|V_1|\geq E't/2]$. To do so, suppose we fix the permutation $\tau=\tau_0$. With this conditioning and by $2$-independence,
\begin{align*}
\VarC{V_1}{\tau=\tau_0}=&\Var{\sum_{x \in A} [\tau_0(h(x))\in [m^\eps]]\cdot v'(x,\tau_0(h(x)))}\leq \sum_{x \in A} \E{[\tau_0(h(x))\in [m^\eps]]\cdot v'(x,\tau_0(h(x)))^2} \\
=&\frac{1}{m}\sum_{x \in A} \sum_{j\in [m^\eps]}v'(x,j)^2\leq \sigma'^2.
\end{align*}
Defining 
$\overline{v}: A \times[m] \to[-1,1]$ by $\overline{v}(x,i)=v_1'(x,\tau_0(i))$
it holds that 
$$
V_1=\sum_{x \in A} \sum_{i \in \tau_0^{-1}([m^\eps])} \overline{v}(x, i) \left( [h(x) = i] - \frac{1}{m} \right).
$$
As $\overline{v}$ has support of size at most $m^\eps$ we can apply~\cref{eq:many_bins} to conclude that 
$$
\Pr[|V_1|\geq E't/2\,\vert\, \tau=\tau_0]\leq 2 \exp\left(-\sigma'^2 \C\left(\frac{t}{\sigma'^2}\right)\right) + O(|A|m^{-\gamma}),
$$
if the constant $E'$ is large enough. Since this holds for any fixed $\tau_0$, it also holds for the unconditioned probability.

We now bound $\Pr[|[\mathcal{A}]\cdot[\mathcal{B}] \cdot V_2|\geq E't/2]$. It suffices to condition on $h=h_0$ for some $h_0$ satisfying that $[\mathcal{A}]=[\mathcal{B}]=1$ and make the bound over the randomness of $\tau$.  For this we may use~\Cref{lemma:helpful}. Indeed if $h=h_0$ for some $h_0$ such that $[\mathcal{A}]=[\mathcal{B}]=1$, then $\sum_{i \in [m]} \sum_{j \in [m]} \frac{1}{m}  \mathcal{V}(i, j)^2\leq D\sigma'^2$. Here we used the conditioning on $\mathcal{A}$. Define the function $\mathcal{V}_0:[m]\times [m]\to \R$ by $\mathcal{V}_0(i,j)=\mathcal{V}(i,j)$ when $j\in [m]\setminus [m^\eps]$ and $\mathcal{V}_0(i,j)=0$ otherwise. Then also $\sum_{i \in [m]} \sum_{j \in [m]} \frac{1}{m}  \mathcal{V}_0(i, j)^2\leq D\sigma'^2$ and further, for each $j\in[m]$, $\sum_{i\in [m]}\mathcal{V}_0(i,j)=0$. Finally, the conditioning on $\mathcal{B}$ gives that $\max_{i,j\in [m]}\mathcal{V}_0(i,j)\leq M$. Note that $V_2=\sum_{j\in [m]}\mathcal{V}_0(\tau^{-1}(j),j)$. Applying~\cref{lemma:helpful} to $\mathcal{V}_0$, noting that the bound obtained in that lemma is increasing in $\sigma$, we obtain that 
$$
        \PR{|V_2| \ge E' t/2}
            \le 4 \left( \exp\left(-D\sigma'^2 \C\left( \frac{t}{D\sigma'^2} \right)\right)
                + \exp\left(-\gamma \log m \right)
            \right)=
            4\exp \left(- \Omega \left(\sigma'^2 \mathcal{C} \left(\frac{t}{\sigma'^2} \right)\right) \right)+O(m^{-\gamma}),
$$
if $E'$ is sufficiently large. From this it follows that,
$$
\Pr[[\mathcal{A}]\cdot [\mathcal{B}]\cdot |V_2|\geq E't/2]\leq 4 \exp\left(-\sigma'^2 \C\left(\frac{t}{\sigma'^2}\right)\right) + O(m^{-\gamma}).
$$

We finally need to bound $\Pr[\mathcal{A}^c]$ and $\Pr[\mathcal{B}^c]$. By the choice of $D$ and a union bound we obtain that $\Pr[\mathcal{B}^c]=O(|A|m^{-\gamma})$, so for completing the proof it suffices to bound $\Pr[\mathcal{A}^c]$ which we proceed to do now. More specifically we bound $\Pr[|\mathcal{V}(i,j)|\geq M]$ for each $(i,j)\in [m]\times ([m]\setminus [m^\eps])$, finishing with a union bound over the $m^2$ choices. So let $(i,j)\in [m]\times ([m]\setminus [m^\eps])$ be fixed and define $\tilde{v}:A \times [m]\to [-1,1]$ by $\tilde v (x,i)=v_2'(x,j)$ and $\tilde v (x,k)=0$ for $k\neq i$.  Then $\tilde v$ has support $A \times \{i\}$,
$$
\mathcal{V}(i,j)=\sum_{x \in A} \sum_{k \in [m]} \tilde v(x, k) \left( [\tau(h(x)) = i] - \frac{1}{m} \right),
$$
and $\Var{\mathcal{V}(i,j)}\leq \frac{1}{m}\sum_{x \in A} v_2'(x,j)^2=\sigma_j^2 \leq \sigma'^2/m^\eps$. The last inequality follows from our assumption that $\sigma_0^2\geq  \cdots \geq \sigma_{m-1}^2$ and $j\geq m^\eps$.

By the assumption of~\cref{eq:many_bins} with $\gamma$ replaced by $\gamma+2$ it follows that 
$$
\Pr[|\mathcal{V}(i,j)|\geq M] \leq 2 \exp\left(-\Omega\left(\sigma_j^2 \C\left(\frac{M}{\sigma_j^2}\right)\right)\right) + O(|A|m^{-\gamma-2})\le 2 \exp\left(-D'\frac{\sigma'^2}{m^\eps} \C\left(\frac{Mm^\eps}{\sigma'^2}\right)\right) + O(|A|m^{-\gamma-2}),
$$ 
for some constant $D'$. We finish the proof by showing that if the constant $C$ from the definition of $M$ is large enough, then 
$$
2 \exp\left(-D'\frac{\sigma'^2}{m^\eps} \C\left(\frac{Mm^\eps}{\sigma'^2}\right)\right)=O(m^{-\gamma-2}).
$$
For this it suffices to show that if $C$ is large enough and $m$ is greater than some constant, then 
$$
\frac{\sigma'^2}{m^\eps} \C\left(\frac{Mm^\eps}{\sigma'^2}\right) \geq \frac{(\gamma+2) \log m}{D'}.
$$
Suppose first that $\sigma'^2\leq m^{\eps/2}$. In that case we use~\Cref{rateofgrowth} to conclude that 
$$
\frac{\sigma'^2}{m^\eps} \C\left(\frac{Mm^\eps}{\sigma'^2}\right) \geq \frac{M}{2} \log \left(\frac{Mm^\eps}{\sigma'^2}+1 \right)\geq \frac{C}{2} \log \left( Cm^{\eps /2}+1 \right) \geq \frac{C\eps}{4} \log m,
$$
so if $C\geq 4\frac{\gamma+2}{D'\eps}$ this is at least $\frac{(\gamma+2) \log m}{D'}$. Now suppose $m^{\eps/2} < \sigma'^2 \leq m^{2\eps}/(2D  \gamma \log m)$. In that case we recall that $M=\max\left\{C, \frac{\sigma'}{\sqrt{2D\gamma\log m}}\right\}$ and use the bound
$$
\frac{\sigma'^2}{m^\eps} \C\left(\frac{Mm^\eps}{\sigma'^2}\right) \geq \frac{M}{2} \log \left(\frac{Mm^\eps}{\sigma'^2}+1 \right)\geq \frac{\sigma'}{\sqrt{8D\gamma \log m} } \log \left(\frac{m^\eps}{\sigma'\sqrt{2D  \gamma \log m}}+1 \right) = \Omega \left( \frac{m^{\eps/4}}{\sqrt{\log m}} \right).
$$
If $m$ is larger than some constant, this is certainly at least $\frac{(\gamma+2) \log m}{D'}$. Finally suppose that $\sigma'^2 >m^{2\eps}/(2D  \gamma \log m)$. Using the inequality $\log(1+x)\geq \frac{x}{2}$ holding for $0\leq x \leq 1$ we find that 
$$
\frac{\sigma'^2}{m^\eps} \C\left(\frac{Mm^\eps}{\sigma'^2}\right) \geq \frac{\sigma'}{\sqrt{8D\gamma \log m} } \log \left(\frac{m^\eps}{\sigma'\sqrt{2D  \gamma \log m}}+1 \right) \geq \frac{m^\eps}{8D \gamma \log m}.
$$
Again it holds that if $m$ is greater than some constant, this is at least $\frac{(\gamma+2) \log m}{D'}$. It follows that if $C$ is large enough, then $\Pr[|\mathcal{V}(i,j)|\geq M]=O(|A|m^{-\gamma-2})$. Union bounding over $(i,j)\in [m] \times ([m]\setminus [m^\eps])$ we find that $\Pr[\mathcal{A}^c]=O(|A|m^{-\gamma})$. Combining the bounds we find that 
$$
\Pr[|V'|\geq E't] \leq 6 \exp\left(-\sigma'^2 h\left(\frac{t}{\sigma'^2}\right)\right) + O(|A|m^{-\gamma}),
$$
which completes the proof.

\end{proof}

\section{Extending the Hash Range}\label{sec:codomainext}
This section is dedicated to proving~\Cref{thm:extendingCodomain}, which we will restate shortly. Again, we will postpone the argument that our concentration bounds are query invariant to~\Cref{sec:queryinvariance}.
First, we prove the following technical lemma.
\begin{lemma}\label{lem:increase-var}
    Let $\sigma^2 > 0$ and $t > 0$. Writing $s=\max\set{\sigma^2, \sqrt{t \sigma^2}}$, 
    \[
        s \cdot \chernoff{\frac{t}{s}}
            \ge \sigma^2 \chernoff{\frac{t}{\sigma^2}}/4
        \; .
    \]
\end{lemma}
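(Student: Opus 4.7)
The plan is to split into two cases based on which term achieves the maximum defining $s$, noting that $s \ge \sigma^2$ always holds.

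\textbf{Case 1 ($\sigma^2 \ge t$).} Here $\sqrt{t\sigma^2} \le \sigma^2$, so $s = \sigma^2$ and the inequality holds with equality (and even without the factor $1/4$).

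\textbf{Case 2 ($\sigma^2 < t$).} Here $s = \sqrt{t\sigma^2}$. Setting $u = t/\sigma^2 > 1$, we have $s = \sqrt{u}\,\sigma^2$ and $t/s = \sqrt{u}$, so the claim reduces to
\[
    \sqrt{u}\,\chernoff{\sqrt{u}} \;\ge\; \chernoff{u}/4 \quad \text{for all } u > 1.
\]
I would prove this by invoking the two-sided bound $\tfrac{1}{2}x\ln(x+1) \le \chernoff{x} \le x\ln(x+1)$ from \Cref{rateofgrowth}. Applied with $x = \sqrt{u}$ on the left and $x = u$ on the right, it gives
\[
    \sqrt{u}\,\chernoff{\sqrt{u}} \;\ge\; \tfrac{u}{2}\ln(\sqrt{u}+1)
    \quad\text{and}\quad
    \chernoff{u}/4 \;\le\; \tfrac{u}{4}\ln(u+1),
\]
so it suffices to check $2\ln(\sqrt{u}+1) \ge \ln(u+1)$, equivalently $(\sqrt{u}+1)^2 \ge u+1$. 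This last inequality is immediate since $(\sqrt{u}+1)^2 = u + 2\sqrt{u} + 1 \ge u+1$.

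The argument is mostly routine; the only subtlety is that a direct application of \Cref{lem:Benn-func-consts} with $b=\sqrt{u}$ would only yield $\sqrt{u}\,\chernoff{\sqrt{u}} \ge \chernoff{u}/\sqrt{u}$, which degrades for large $u$. Passing through the explicit $x\ln(x+1)$ asymptotics of $\chernoff{\cdot}$ is what lets us recover the uniform constant $1/4$.
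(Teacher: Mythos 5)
Your proof is correct and takes essentially the same approach as the paper: both reduce to the case $t > \sigma^2$, apply the two-sided bound $\tfrac{1}{2}x\ln(x+1) \le \chernoff{x} \le x\ln(x+1)$ from \Cref{rateofgrowth}, and bridge the two with the inequality $1 + \sqrt{x} \ge \sqrt{1+x}$ (your $(\sqrt{u}+1)^2 \ge u+1$). Your remark about why \Cref{lem:Benn-func-consts} alone is insufficient is also accurate.
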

\begin{proof}
    For $t \le \sigma^2$ the inequality is trivial, so suppose $t>\sigma^2$. We note that for $x \ge 0$,
    $1 + \sqrt{x} \ge \sqrt{1 + x}$, such that $\lg(1 + \sqrt{x}) \ge \lg(1 + x)/2$
    for every $x \ge 0$. Using this fact in between two applications of \Cref{rateofgrowth}, we find that
    \[
        \sqrt{t \sigma^2} \chernoff{\frac{t}{\sqrt{t \sigma^2}}}
            \ge t \lg\left(1 + \sqrt{\frac{t}{\sigma^2}} \right)/2
            \ge t \lg\left(1 + \frac{t}{\sigma^2} \right)/4
            \ge \sigma^2 \chernoff{\frac{t}{\sigma^2}}/4
        \; .
    \]
\end{proof}
Next, we recall the law of total variance.
\begin{lemma}[Law of Total Variance]
    For every pair of random variables $X, Y$, 
    $$ \Var{Y} = \E{\VarC{Y}{X}} + \Var{\EC{Y}{X}}.$$
    In particular, $\Var{Y} \geq  \Var{\EC{Y}{X}}$.
\end{lemma}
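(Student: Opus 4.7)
The plan is to derive the decomposition directly from the definition of variance together with the tower property of conditional expectation; there is no real obstacle, since this is a purely algebraic identity once conditional expectations are manipulated correctly.

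First, I would expand $\Var{Y} = \E{Y^2} - (\E{Y})^2$ and use the tower property to rewrite the first term as $\E{Y^2} = \E{\EC{Y^2}{X}}$. Next, I would apply the conditional analogue of the variance identity, namely $\EC{Y^2}{X} = \VarC{Y}{X} + (\EC{Y}{X})^2$, which is immediate from the definition $\VarC{Y}{X} = \EC{(Y - \EC{Y}{X})^2}{X}$. Taking expectations gives
\[
\E{Y^2} = \E{\VarC{Y}{X}} + \E{(\EC{Y}{X})^2}.
\]

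Third, I would observe that $\E{\EC{Y}{X}} = \E{Y}$, again by the tower property, so that
\[
\Var{\EC{Y}{X}} = \E{(\EC{Y}{X})^2} - (\E{\EC{Y}{X}})^2 = \E{(\EC{Y}{X})^2} - (\E{Y})^2.
\]
Substituting $\E{(\EC{Y}{X})^2} = \Var{\EC{Y}{X}} + (\E{Y})^2$ back into the previous display and subtracting $(\E{Y})^2$ yields
\[
\Var{Y} = \E{\VarC{Y}{X}} + \Var{\EC{Y}{X}},
\]
as required. Finally, the stated inequality $\Var{Y} \geq \Var{\EC{Y}{X}}$ is immediate, since $\VarC{Y}{X} \geq 0$ almost surely and hence $\E{\VarC{Y}{X}} \geq 0$.
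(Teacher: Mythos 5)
Your proof is correct and is the standard textbook derivation of the law of total variance: expand $\Var{Y}$, apply the tower property, use the conditional variance identity $\EC{Y^2}{X} = \VarC{Y}{X} + (\EC{Y}{X})^2$, and regroup. The paper states this lemma without proof (it is a classical fact), so there is no alternative argument to compare against; your write-up would serve as a perfectly adequate proof were one to be included.
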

We are now ready to prove the main theorem of the section, which informally states that concatenating the output values of hash functions preserves the property of having Chernoff-style bounds. 
	Note that the following is a much more explicit and elaborate statement of \cref{thm:extendingCodomain}. The purpose of this restatement is to make a formal proof more readable. The reader is encouraged to refer back to \cref{thm:extendingCodomain} for intuition regarding the theorem statement. Again, we highlight that we have left out the part of~\Cref{thm:extendingCodomain} concerning query independence. How query independence is obtained will be discussed in~\Cref{sec:queryinvariance}
\begin{theorem}
    Let $A$ be a finite set. Let $(X_a)_{a \in A}$ and $(Y_a)_{a \in A}$ be pairwise independent families
    of random variables taking values in $B_X$ and $B_Y$, respectively, and satisfying that the
    distributions of $(X_a)_{a \in A}$ and $(Y_a)_{a\in A}$ are independent. Suppose that there exist
    universal constants $D_X, D_Y \geq 1$, $K_X, K_Y>0$, and $R_X, R_Y\geq 0$ such that for every choice
    of value functions $v_X \colon A \times B_X \to [0, 1]$ and $v_Y \colon A \times B_Y \to [0, 1]$
    and for every $t > 0$,
    \begin{align}\label{eq:X-chernoff}
        \PR{ \abs{\sum_{a \in A} v_X(a, X_a) - \mu_X} > t}
            &< K_X\exp\left( - \sigma_X^2 \chernoff{\frac{t}{\sigma_X^2}} / D_X \right) + R_X
        \; , \\ \label{eq:Y-chernoff}
        \PR{ \abs{\sum_{a \in A} v_Y(a, Y_a) - \mu_Y} > t}
            &< K_Y\exp\left( - \sigma_Y^2 \chernoff{\frac{t}{\sigma_Y^2}} / D_Y \right) + R_Y
        \; .    
    \end{align}
    where $\mu_X = \E{\sum_{a \in A} v_X(a, X_a)}$, $\mu_Y = \E{\sum_{a \in A} v_Y(a, Y_a)}$,
    $\sigma_X^2 = \Var{\sum_{a \in A} v_X(a, X_a)}$, and $\sigma_Y^2 = \Var{\sum_{a \in A} v_Y(a, Y_a)}$.
    Then for every value function $\overline{v} \colon A \times B_X \times B_Y \to [0, 1]$ and
    every $t > 0$,
    \[
        \PR{\abs{\sum_{a \in A} \overline{v}(a, X_a, Y_a) - \mu_{XY}} > t}
            < K_{KY}\exp\left(-\sigma_{XY}^2 \chernoff{\frac{t}{\sigma_{XY}^2}} / D_{XY} \right) + R_{XY}
        \; ,
    \]
    where $\mu_{XY}=\E{\sum_{a \in A} \overline{v}(a, X_a, Y_a)}$,
    $\sigma_{XY}^2 = \Var{\sum_{a \in A} \overline{v}(a, X_a, Y_a)}$,
    $D_{XY} = \max\set{144D_X, 72D_Y }$, $K_{XY} = 3K_X + K_Y$,
    and $R_{XY} = 3R_X + R_Y$.    
\end{theorem}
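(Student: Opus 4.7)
The plan is to decompose $Z := \sum_a \bar v(a, X_a, Y_a) - \mu_{XY}$ along the two layers of randomness. Set $v_X(a, x) := \E_{Y_a}[\bar v(a, x, Y_a)] \in [0, 1]$ and write $Z = Z_1 + Z_2$ with
\[
    Z_2 := \sum_{a \in A} v_X(a, X_a) - \mu_{XY}, \qquad Z_1 := Z - Z_2,
\]
so that $\E[Z_1 \mid X] = 0$. Pairwise independence of $(Y_a)_a$ gives $\Var[Z_1 \mid X] = \sigma_Y^2(X) := \sum_a w(a, X_a)$ with $w(a, x) := \Var_{Y_a}[\bar v(a, x, Y_a)] \in [0, 1/4]$, and the law of total variance combined with pairwise independence of $((X_a, Y_a))_a$ yields both $\Var[Z_2] \leq \sigma_{XY}^2$ and $\E[\sigma_Y^2(X)] \leq \sigma_{XY}^2$. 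A union bound reduces the task to bounding $\Pr[|Z_2| > t/2]$ and $\Pr[|Z_1| > t/2]$ separately.

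The bound on $\Pr[|Z_2| > t/2]$ is a direct application of the X-Chernoff hypothesis~\eqref{eq:X-chernoff} to the value function $v_X$. Since $\Var[Z_2] \leq \sigma_{XY}^2$, Lemma~\ref{lem:Benn-func-var-and-max} replaces $\Var[Z_2]$ by $\sigma_{XY}^2$ in the exponent (the map $x \mapsto x\C(a/x)$ is decreasing), and Lemma~\ref{lem:Benn-func-consts} absorbs the factor-of-two reduction in the deviation, giving a contribution of $K_X\exp(-\sigma_{XY}^2\C(t/\sigma_{XY}^2)/(4D_X)) + R_X$. This is absorbed into the target as soon as $D_{XY} \geq 4D_X$, which is comfortably satisfied by the stated $144D_X$.

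The main technical obstacle is bounding $\Pr[|Z_1| > t/2]$, because the conditional variance $\sigma_Y^2(X)$ is itself random. I handle this by truncating at a deterministic threshold $s$ to be chosen, writing $\Pr[|Z_1| > t/2] \leq \Pr[|Z_1| > t/2,\ \sigma_Y^2(X) \leq s] + \Pr[\sigma_Y^2(X) > s]$. On the truncation event I condition on $X$ and apply the Y-Chernoff hypothesis~\eqref{eq:Y-chernoff} to $(a,y) \mapsto \bar v(a, X_a, y)$; together with Lemma~\ref{lem:Benn-func-var-and-max} this gives the conditional bound $K_Y\exp(-s\C((t/2)/s)/D_Y) + R_Y$. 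Taking $s := \max\{\sigma_{XY}^2, \sqrt{(t/2)\sigma_{XY}^2}\}$ and invoking Lemma~\ref{lem:increase-var} with Lemma~\ref{lem:Benn-func-consts} yields $s\C((t/2)/s) \geq \sigma_{XY}^2\C(t/\sigma_{XY}^2)/16$, which is absorbed into the target provided $D_{XY} \geq 16D_Y$ (covered by $72D_Y$).

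The residual $\Pr[\sigma_Y^2(X) > s]$ is controlled by further applications of the X-Chernoff hypothesis to the value function $w \in [0, 1/4] \subseteq [0, 1]$; pairwise independence and $w \leq 1/4$ give $\Var[\sigma_Y^2(X)] \leq \sum_a \E[w(a, X_a)^2] \leq \sigma_{XY}^2/4$. I split into the regimes $t \leq 2\sigma_{XY}^2$ (where $s = \sigma_{XY}^2$ and the deviation is absorbed because the target exponent is itself $O(\sigma_{XY}^2)$) and $t > 2\sigma_{XY}^2$ (where $s = \sqrt{(t/2)\sigma_{XY}^2}$ and $s - \E[\sigma_Y^2(X)] \gtrsim \sqrt{t\sigma_{XY}^2}$), and combine Lemmas~\ref{rateofgrowth},~\ref{lem:Benn-func-consts}, and~\ref{lem:Benn-func-var-and-max} to show that the resulting exponent dominates $\sigma_{XY}^2\C(t/\sigma_{XY}^2)/O(D_X)$ in each regime, where the accumulated constants determine the $144D_X$ in $D_{XY}$. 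The hard part is this large-$t$ regime: the natural deviation scale $\sqrt{t\sigma_{XY}^2}$ of $\sigma_Y^2(X)$ is much smaller than $t$ while the target $\sigma_{XY}^2\C(t/\sigma_{XY}^2)$ grows like $t\log(t/\sigma_{XY}^2)$, so carefully tracking the logarithmic asymptotics of $\C$ via Lemma~\ref{rateofgrowth} is what drives the concrete $144$ factor. Together, the one use of X-Chernoff for $Z_2$ and the two regime-specific uses for the variance truncation account for the $3K_X, 3R_X$ summands in $K_{XY}, R_{XY}$, while the single use of Y-Chernoff for $Z_1$ on the truncation event supplies the $K_Y, R_Y$ summands.
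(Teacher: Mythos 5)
Your decomposition $Z = Z_1 + Z_2$ with $Z_2 = \sum_a \E[\bar v(a,X_a,Y_a)\mid X_a] - \mu_{XY}$ correctly recovers the paper's term $T_1$, and your treatment of $Z_2$ via the X-Chernoff hypothesis and Lemmas~\ref{lem:Benn-func-var-and-max}, \ref{lem:Benn-func-consts} is sound. Likewise your conditional use of Y-Chernoff for $Z_1$ on the event $\{\sigma_Y^2(X)\le s\}$ is fine. The proposal diverges from the paper by lumping the paper's $T_2$ and $T_3$ into a single term $Z_1$ with a \emph{global} truncation of the conditional variance, rather than the paper's per-index truncation of $\Var[V_a\mid X_a]$ at the $t$-dependent threshold $\sqrt{6\sigma_{XY}^2/t}$. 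That simplification is where the argument breaks.

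The step that fails is your bound on $\Pr[\sigma_Y^2(X) > s]$. Applying X-Chernoff directly to the value function $w(a,x)=\Var[\bar v(a,x,Y_a)\mid X_a=x]\in[0,1/4]$ gives an exponent of order $\tilde\sigma^2\,\C(t'/\tilde\sigma^2)$ with $\tilde\sigma^2\le\sigma_{XY}^2/4$ (fixed, independent of $t$) and $t'\approx\sqrt{t\sigma_{XY}^2}$. For $t\gg\sigma_{XY}^2$ this exponent is $\Theta\bigl(\sqrt{t\sigma_{XY}^2}\log(t/\sigma_{XY}^2)\bigr)$, whereas the target exponent $\sigma_{XY}^2\,\C(t/\sigma_{XY}^2)$ is $\Theta\bigl(t\log(t/\sigma_{XY}^2)\bigr)$. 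The ratio is $\Theta(\sqrt{t/\sigma_{XY}^2})$, which is unbounded, so no constant $D_{XY}$ can absorb the discrepancy — Lemmas~\ref{rateofgrowth} and \ref{lem:Benn-func-consts} only give you constant-factor slack, not $\sqrt{t/\sigma_{XY}^2}$ slack. The crude cap $w\le 1/4$ is $t$-independent and therefore too weak. (There is also a minor issue: your choice $s=\max\{\sigma_{XY}^2,\sqrt{(t/2)\sigma_{XY}^2}\}$ leaves $s-\E[\sigma_Y^2(X)]$ possibly zero or negative, since $\E[\sigma_Y^2(X)]$ can be as large as $\sigma_{XY}^2$; the paper uses $2\max\{\sigma_{XY}^2,\sqrt{t\sigma_{XY}^2}\}$ for precisely this reason.)

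The paper avoids this by splitting $Z_1$ further: define indicators $L_a=[\Var(V_a\mid X_a)>\sqrt{6\sigma_{XY}^2/t}]$ and $S_a=1-L_a$, giving $Z_1=T_2+T_3$. On the small-variance part $T_3$, the \emph{truncated} conditional variances $S_a\Var(V_a\mid X_a)$ are bounded by $\sqrt{6\sigma_{XY}^2/t}$, so after rescaling by $\sqrt{t/(6\sigma_{XY}^2)}$ one gets a value function in $[0,1]$ whose variance is bounded by its mean $\le\sqrt{t\sigma_{XY}^2/6}$, while the corresponding deviation becomes $\Theta(t)$ rather than $\Theta(\sqrt{t\sigma_{XY}^2})$; this is what makes the X-Chernoff bound match the target up to a constant. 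The large-variance part $T_2$ is then handled entirely differently: $|T_2|\le\sum_a L_a$ since $V_a\in[0,1]$, and one applies X-Chernoff to the indicators $L_a$ together with Markov to bound $\E[\sum_a L_a]\le\sqrt{t\sigma_{XY}^2/6}$. Both of these ideas — the $t$-dependent per-$a$ truncation with rescaling, and the separate count argument for the truncated-away part — are missing from the proposal, and without them the large-$t$ regime cannot be closed.
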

\begin{proof}
    Let a value function, $\overline{v} \colon A \times B_X \times B_Y \to [0, 1]$, and a positive real, $t>0$, be given.
    Define $V_a = \overline{v}(a, X_a, Y_a)$, $\mu_a = \E{V_a}$,
    and $\sigma_a^2 = \Var{V_a}$. We shall be concerned with the variance of $V_a$ when conditioned on $X_a$. Hence, we define 
    $$L_a =  \indicator{\VarC{V_a}{X_a} > \sqrt{\frac{6 \sigma_{XY}^2}{t}}}\,\,
    \text{ and }\,\,
    S_a = \indicator{\VarC{V_a}{X_a} \leq \sqrt{\frac{6 \sigma_{XY}^2}{t}}}$$
    to be the indicators on the conditional variance of $V_a$ given $X_a$ being larger or smaller, respectively, than the threshold $\sqrt{\frac{6 \sigma_{XY}^2}{t}}$.    
    Noting that $L_a+S_a=1$, we split the sum $\sum_{a \in A} (V_a - \mu_a)$ into three parts.
    \[\begin{split}
        \sum_{a \in A} (V_a - \mu_a)
            &= \underbrace{\sum_{a \in A} (\EC{V_a}{X_a} - \mu_a)}_{T_1}
            \\&+ \underbrace{\sum_{a \in A} L_a(V_a - \EC{V_a}{X_a})}_{T_2}
            \\&+ \underbrace{\sum_{a \in A} S_a(V_a - \EC{V_a}{X_a})}_{T_3}
    \end{split}\]
    Now, the triangle inequality and a union bound yields
    \begin{align*}
        \PR{\abs{\sum_{a \in A} \overline{v}(a, X_a, Y_a) - \mu_{XY}} > t} = \PR{\abs{\sum_{a \in A} (V_a-\mu_a)} > t}
        \leq \sum_{i=1}^3\PR{\abs{T_i}>t/3}.
    \end{align*}
    We shall bound each of the three terms $T_1, T_2$, and $T_3$ individually.

    For bounding $\PR{\abs{T_1}>t/3}$, define the value function $v_X^{(1)}\colon A\times B_X\to [0, 1]$ by $v_X^{(1)}(a, x) = \EC{V_a}{X_a=x}$. Note that $\E{\EC{V_a}{X_a}} = \mu_a$ and $\Var{\EC{V_a}{X_a}} \le \sigma_a^2$,
    by the law of total variance, such that $\Var{\sum_{a\in A}v_X^{(1)}(a, X_a)}\leq \sigma_{XY}^2$.  
    Thus, by \Cref{eq:X-chernoff} and \Cref{lem:Benn-func-consts},
    \begin{align*}
        \PR{\abs{T_1} > t/3} & = \PR{\sum_{a\in A} \left(v_X^{(1)}(a, X_a)- \mu_a\right)> t/3}\\
            &< K_X\exp\left( -\sigma_{XY}^2 \chernoff{\frac{t/3}{\sigma_{XY}^2}} / D_X \right) + R_X\\
            &\le K_X\exp\left( - \sigma_{XY}^2 \chernoff{\frac{t}{\sigma_{XY}^2}} / (9 D_X) \right) + R_X
        \; .
    \end{align*}
        
    For bounding $\PR{\abs{T_2}>t/3}$, we may assume that $t> 6\sigma_{XY}^2$ since otherwise $T_2=0$ almost surely. Now,
    recall that $L_a =  \indicator{\VarC{V_a}{X_a} > \sqrt{6 \sigma_{XY}^2/t}}$ and write $Z=\sum_{a \in A} L_a$. We observe that since $V_a \in [0, 1]$ almost surely, $Z\geq \abs{T_2}$
    almost surely. By the law of total variance, $\E{\VarC{V_a}{X_a}} \le \sigma_a^2$, so
    by Markov's inequality,
    \[
        \E{L_a}
            = \PR{\VarC{V_a}{X_a} > \sqrt{\frac{6\sigma_{XY}^2}{t}}}
            \le \sigma_a^2 \sqrt{\frac{t}{6\sigma_{XY}^2}}.
    \]
    Now, $\Var{L_a}\leq \E{L_a}\leq \sigma_a^2 \sqrt{t/(6\sigma_{XY}^2)}$ as $L_a\in [0, 1]$ almost surely. Thus, $\E{Z} \le \sqrt{t \sigma_{XY}^2 / 6}$ and $\Var{Z} \le \sqrt{t \sigma_{XY}^2 / 6}$. Combining this with $t> 6\sigma_{XY}^2$, we may write
    \begin{align*}
        \PR{\abs{T_2} > t/3}
            \le \PR{Z - \E{Z} > t/3 - \sqrt{t \sigma_{XY}^2 / 6}}
            \le \PR{\abs{Z - \E{Z}} > t/6}.
    \end{align*}
    Applying \Cref{eq:X-chernoff} with the value function $v_X^{(2)}\colon A\times B_X\to [0,1]$ given by $v_X^{(2)}(a, X_a) = L_a$ to $\PR{\abs{Z - \E{Z}} > t/6}$ yields
    \[\begin{split}
        \PR{\abs{T_2} > t/3}
            &< K_X\exp\left(- \sqrt{t \sigma_{XY}^2 / 6}\cdot  \chernoff{\frac{t/6}{\sqrt{t \sigma_{XY}^2 / 6}}} / D_X \right) + R_X
            \\&\le K_X\exp\left(- \sigma_{XY}^2 \chernoff{\frac{t/6}{\sigma_{XY}^2}} / (4 D_X) \right) + R_X
            \\&\le K_X\exp\left(- \sigma_{XY}^2 \chernoff{\frac{t}{\sigma_{XY}^2}} / (144 \cdot D_X) \right) + R_X,
    \end{split}\]
    where the second follows from \Cref{lem:increase-var} and the third inequality follows from \Cref{lem:Benn-func-consts}.
    
    Lastly, we shall bound $\PR{\abs{T_3} > t/3}$. By a union bound,
    \[\begin{split}
        \PR{\abs{T_3} > t/3}
            &\le \underbrace{\PR{\left(\abs{T_3} > t/3\right) \, \wedge\, \left(\VarC{T_3}{(X_a)_{a \in A}} \le 2\max\set{\sigma_{XY}^2, \sqrt{t \sigma_{XY}^2}}\right)}}_{R_1}
            \\&+ \underbrace{\PR{\VarC{T_3}{(X_a)_{a \in A}} > 2\max\set{\sigma_{XY}^2, \sqrt{t \sigma_{XY}^2}}}}_{R_2}
        \; . 
    \end{split}\]
    First, we bound $R_1$. For each $a\in A$, let $x_a\in B_X$ be given such that $P(\forall a\in A\colon X_a=x_a)>0$. We bound the probability of $R_1$ conditioned on $(X_a=x_a)_{a\in A}$ and since our bound will be the same for every choice of $(x_a)_{a\in A}$, the bound will hold unconditionally. 
    Now, if $\VarC{T_3}{(X_a=x_a)_{a \in A}} > 2\max\set{\sigma_{XY}^2, \sqrt{t \sigma_{XY}^2}}$, then $R_1=0$. So assume otherwise and
define the value function $v_Y^{(1)}\colon A\times B_Y \to [0, 1]$ by 
    $v_Y^{(1)}(a, y) = S_a\cdot \overline v(a,x_a, y)$,
    where $S_a=\left[\VarC{V_a}{X_a=x_a}\leq\sqrt{6\sigma_{XY}^2/t}\right]$. Then $T_3 = \sum_{a\in A}\left(v_Y^{(1)}(Y_a) - \E{v_Y^{(1)}(Y_a)} \right)$ and by assumption, $\Var{\sum_{a\in A}v_Y^{(1)}(a, Y_a)}\leq 2\max\set{\sigma_{XY}^2, \sqrt{t \sigma_{XY}^2}}$. Thus, we may apply \Cref{eq:Y-chernoff} with $v_Y^{(1)}$ to obtain
    \[\begin{split}
        &\PRC{\left(\abs{T_3} > t/3\right) \, \wedge\, \left(\VarC{T_3}{(X_a)_{a \in A}} \le 2\max\set{\sigma_{XY}^2, \sqrt{t \sigma_{XY}^2}}\right)}{(X_a=x_a)_{a\in A}}
        \\&\phantom{==} \le K_Y\exp\left(- 2\max\set{\sigma_{XY}^2, \sqrt{t \sigma_{XY}^2}} \chernoff{\frac{t/3}{2\max\set{\sigma_{XY}^2, \sqrt{t \sigma_{XY}^2}}}} / D_Y \right) + R_Y
            \\&\phantom{==} \le K_Y\exp\left(- \max\set{\sigma_{XY}^2, \sqrt{t \sigma_{XY}^2}} \chernoff{\frac{t}{\max\set{\sigma_{XY}^2, \sqrt{t \sigma_{XY}^2}}}} / (18 D_Y) \right) + R_Y
            \\&\phantom{==} \le K_Y\exp\left(- \sigma_{XY}^2 \chernoff{\frac{t}{\sigma_{XY}^2}} / (72 D_Y) \right) + R_Y,
    \end{split}\]
    where the second follows from \Cref{lem:Benn-func-consts}
    and the third inequality follows from \Cref{lem:increase-var}. In conclusion,
    \begin{align*}
        R_1\leq K_Y\exp\left(- \sigma_{XY}^2 \chernoff{\frac{t}{\sigma_{XY}^2}} / (72 D_Y) \right) + R_Y.
    \end{align*}

    Second, we bound $R_2$. Define the value function $v_X^{(3)}\colon A\times B_X\to [0, 1]$ by 
    $$v_X^{(3)}(a, x_a) = \left[ \VarC{V_a}{X_a=x_a}\leq \sqrt{\frac{6\sigma_{XY}^2}t} \right]\cdot \VarC{V_a}{X_a=x_a}.$$
    Then $\VarC{T_3}{(X_a)_{a\in A}} = \sum_{a\in A} v_X^{(3)}(a, X_a)$. Now, 
    by the law of total variance, 
    $$\E{\VarC{T_3}{(X_a)_{a \in A}}} \le \Var{T_3} \leq \sigma_{XY}^2,$$
    and since $\sqrt{\frac{t}{6 \sigma_{XY}^2}} v_X^{(3)}(X_a) \in [0, 1]$ almost surely for every $a\in A$, pairwise independence yields
    \[
        \Var{\sqrt{\frac{t}{6 \sigma_{XY}^2}} \VarC{T_3}{(X_a)_{a \in A}}}
            \le \E{\sqrt{\frac{t}{6 \sigma_{XY}^2}} \VarC{T_3}{(X_a)_{a \in A}}}
            \le \sqrt{t \sigma_{XY}^2 / 6}
        \; .
    \]
    Applying \Cref{eq:X-chernoff} with $v_X^{(3)}$, \Cref{lem:increase-var}, and \Cref{lem:Benn-func-consts}, we obtain
    \[\begin{split}
        R_2& \le \PR{\abs{\VarC{T_3}{(X_a)_{a \in A}} - \E{\VarC{T_3}{(X_a)_{a \in A}} } }
                > \max\set{\sigma_{XY}^2, \sqrt{t \sigma_{XY}^2}}
            }
            \\& = \PR{\sqrt{\frac{t}{6 \sigma_{XY}^2}} \abs{\VarC{T_3}{(X_a)_{a \in A}} - \E{\VarC{T_3}{(X_a)_{a \in A}} } }
                > \max\set{\sqrt{t \sigma_{XY}^2 / 6}, t/6}
            }
            \\& \le \PR{\sqrt{\frac{t}{6 \sigma_{XY}^2}} \abs{\VarC{T_3}{(X_a)_{a \in A}} - \E{\VarC{T_3}{(X_a)_{a \in A}} } }
                > t/6
            }
            \\& < K_X\exp\left( - \sqrt{t \sigma_{XY}^2 / 6} \chernoff{\frac{t/6}{\sqrt{t \sigma_{XY}^2 / 6}}} / D_X \right) + R_X
            \\& \le K_X\exp\left( - \sigma_{XY}^2 \chernoff{\frac{t/6}{\sigma_{XY}^2}} / (4 D_X) \right) + R_X
            \\& \le K_X\exp\left( - \sigma_{XY}^2 \chernoff{\frac{t}{\sigma_{XY}^2}} / (144  D_X) \right) + R_X.
    \end{split}\]
Combining the bounds on $\PR{\abs {T_i}>t/3}$ for $i \in \{1, 2, 3\}$ completes the proof.
\end{proof}

\section{Query invariance}\label{sec:queryinvariance}

In the following, we will briefly explain for each of the main sections of the paper, why all theorems still hold when adding the condition of query invariance of \cref{def:query-invariance}. Recall that query invariance comes into play when we have a hash function and a concentration bound in the following manner. The concentration bound is query invariant if for any hash key $q$, a \emph{query key}, the concentration bound still holds whenever we condition the hash function on the hash value of $q$. 

\paragraph{Simple Tabulation Hashing.} In \cite{patrascu12charhash} it is observed that ordering the position characters $\alpha_1\prec \dots\prec \alpha_r$ such that $\alpha_1, \dots, \alpha_c$ are the position characters of the query key $q$ only worsens the bound on the groups, $G_i$, by a factor of 2. We consider a slightly more general case, but exactly the same argument still applies. Always imposing this ordering in our proofs lets us condition on the hash value of $q$ and only causes some of the constants to increase by a small factor. 

\paragraph{Tabulation-Permutation} In the proof of \cref{thm:valuefunctions} we consider some specific value function $w$. We proceed by considering separately the $m^\eps$ bins $S\subset [m]$ of largest contribution to the variance, $\sigma^2$, and then the remaining bins, $[m]\setminus S$. The contribution of each subset of bins is then individually bounded. In the first case, we simply use the assumption on the hash function $h$ that we received in a black box manner and use no properties of the permutation. Now, say towards query invariance that we require that $\tau\circ h(q) = i$. To support this, we instead chose $S$ to have have size $\abs S=m^\eps/2$. This does not change the proof by more than constant factors and simply adding $i$ to $S$ yields a set $S'=S\cup \{i\}$ of size $S'<m^\eps$, such that the assumption on $h$ directly yields the result. In conclusion, the proof goes through exactly as before.

\paragraph{Extending the Codomain} In this section nothing in the proof requires us to take into special consideration the conditioning on a query key. We simply consider families of hash functions in a black box manner and thus, we may as well consider families that have already been condition on the hash value of the query key $q$.
\section{Tightness of Concentration: Simple Tabulation into Few Bins}\label{sec:few-bin-problem}
	Recall the result of \cref{thm:intro-simple-tab}. If $h\colon [u]\to [m]$ is a simple tabulation hash function with $[u]=\Sigma^c$ and $c=O(1)$, and $S\subseteq [u]$ is a set of hash keys of size $n=\abs S$ where each key $x\in S$ is given a weight $w_x\in [0, 1]$. Then for arbitrary $y\in [m]$ and a constant $\gamma>0$ the total weight of the balls landing in bin $y$, given by the random variable $X=\sum_{x\in S}w_x[h(x)=y]$, satisfies the concentration bound
	\begin{align}\label{eq:thm1bound}
		\Pr\left[ \abs {X-\mu} \geq t \right]\leq 2\exp(-\Omega( \sigma^2\C(t/\sigma^2) )) + n/m^\gamma,
	\end{align} 
	where $\mu=\E X$ and $\sigma^2=\Var X$ are the expectation and variance, respectively, of $X$, and the constant in the $\Omega$-notation depends on $\gamma$. As mentioned in the introduction, the \errorterm $n/m^\gamma$ renders the theorem nearly useless for small $m$, the prime example being the tossing of an unbiased coin corresponding to $m=2$. The purpose of this section is to show that the bound of \eqref{eq:thm1bound} is optimal in the sense that an \errorterm of at least $m^{-\gamma}$ for some constant $\gamma$ is inevitable so long as we insist on strong concentration according to \cref{def:strongly-concentrated}. In other words, we must accept an \errorterm of $m^{-\gamma}$ to have Chernoff-style bounds on the sum $X$. In fact, it will turn out that unless allowing an error term of the form $m^{-\gamma}$, the deviation from the case of a fully random hash function can be quite significant.
	
The example where simple tabulation does not concentrate well, which we shall use in the formal proof below, is the following. For some $k	<\abs \Sigma$, we consider the key set $S=[k]^{c-1}\times \Sigma\subset \Sigma^c$ with weights $w_x=1$ for every $x\in S$. We shall think of $k$ as slightly superconstant and mutually dependent on $\gamma$. Recall that $h$ is defined by $c$ fully random functions $h_0, \dots, h_{c-1}\colon \Sigma\to [m]$ and that $h(x) = \bigoplus_{i=0}^c h_i(x_i)$. With probability $m^{-(k-1)(c-1)}$, $h_i$ is constant on $[k]$ for each $0\leq i\leq c-2$. Under such a \emph{collapse} it holds for every $\alpha \in \Sigma$ that every key from the set $[k]^{c-1}\times \{\alpha\}$ hashes to the same value in $[m]$ under $h$. Hence, each entry of $h_{c-1}$ decides where $k^{c-1}$ keys hash to. Thus, during such a collapse, we may view the hashing of $S$ into $[m]$ as throwing $\abs\Sigma$ balls each of weight $k^{c-1}$ into $m$ bins. This increases the variance by a factor of $k^{c-1}$ affecting the Chernoff bounds accordingly. 
	
Without further ado, let us present the formal statement of the above. Essentially, it states that there is a \emph{delay} of the exponential decrease which depends on $\gamma$. If $\gamma$ is superconstant, so is the delay, and hence, we do not have strong concentration according to \cref{def:strongly-concentrated}.
\begin{theorem}
	Let $m\leq\abs \Sigma^{1-\eps}$ for some constant $\eps>0$ and $h\colon [u]\to [m]$ be a simple tabulation hash function. Let $0<\eps'<\eps$ be a constant and suppose that $C\colon \R^+\to \R^+$ is a function such that for all $0\leq \gamma \leq \abs \Sigma^{\eps'/c}$, all sets $S\subseteq [u]$, all choices of weights $w_x\in [0, 1], x\in S$, and every $y\in [m]$, the random variable $X=\sum_{x\in S} w_x[h(x)=y]$ satisfies
	\begin{align}\label{eq:hypotheticalbound}
		\Pr[|X-\mu|\geq t] \le 2\exp\left(\frac{-\sigma^2\cC(t/\sigma^2)}{C(\gamma)}\right)+m^{-\gamma}
	\end{align}
	for all $t>0$. Then $C(\gamma) = \Omega(\gamma^{c-2})$.
\end{theorem}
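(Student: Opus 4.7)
The plan is to apply the hypothesized bound \eqref{eq:hypotheticalbound} to the ``bad instance'' for simple tabulation sketched in the paragraph preceding the theorem and match it against a direct tail lower bound obtained from the collapse phenomenon. Fix $k\geq 2$ as a function of $\gamma$ to be determined, take $S=[k]^{c-1}\times\Sigma$ with all weights $w_x=1$ and target bin $y=0$, and set $X=\sum_{x\in S}[h(x)=0]$. Pairwise independence of simple tabulation gives $\mu = k^{c-1}|\Sigma|/m$ and $\sigma^2=\Theta(k^{c-1}|\Sigma|/m)$. Let $E$ be the event that $h_i$ is constant on $[k]$ for every $0\leq i\leq c-2$; since the tables are independent and uniform, $\Pr[E]=m^{-(k-1)(c-1)}$. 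Conditioned on $E$, all $k^{c-1}$ keys of $[k]^{c-1}\times\{\alpha\}$ hash to a common value determined by $h_{c-1}(\alpha)$ XORed with the $c-1$ constants, so $X\mid E$ has the distribution of $k^{c-1}Y$ with $Y\sim\mathrm{Bin}(|\Sigma|,1/m)$; in particular the conditional variance is $k^{c-1}\sigma^2$, a factor of $k^{c-1}$ larger than the unconditional one, which is what drives the lower bound.

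A standard moderate-deviation lower bound for the Binomial, obtained from Stirling-accurate estimates of the individual point probabilities, yields constants $c_1,c_2>0$ with $\Pr[\,\bigl|Y-|\Sigma|/m\bigr|\geq a\sqrt{|\Sigma|/m}\,]\geq c_1 e^{-c_2 a^2}$ for $0\leq a\leq (|\Sigma|/m)^{1/6}$. Set $a=\sqrt{\beta\ln m / c_2}$ for an absolute constant $\beta$ to be fixed, let $t=ak^{c-1}\sqrt{|\Sigma|/m}$, and choose $k$ to be the largest integer with $(k-1)(c-1)+\beta\leq \gamma-C_0$, where $C_0$ is a sufficiently large absolute constant. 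Then $\Pr[\,|X-\mu|\geq t\,]\geq c_1 m^{-(k-1)(c-1)-\beta}\geq 2m^{-\gamma}$ for every $m\geq 2$, so the additive $m^{-\gamma}$ in \eqref{eq:hypotheticalbound} cannot account for the tail and we must have
\[
2\exp\!\bigl(-\sigma^2\mathcal C(t/\sigma^2)/C(\gamma)\bigr)\;\geq\; m^{-\gamma}.
\]
For $\gamma$ large enough this forces $k=\Theta(\gamma)$; the hypotheses $m\leq|\Sigma|^{1-\eps}$ and $\gamma\leq|\Sigma|^{\eps'/c}$ guarantee $k\leq|\Sigma|$, $a\ll(|\Sigma|/m)^{1/6}$, and $t/\sigma^2=a\sqrt{m/|\Sigma|}\ll 1$, so all the Binomial and Taylor approximations are valid. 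For small $\gamma$ the claim $C(\gamma)=\Omega(\gamma^{c-2})$ is vacuous up to constants.

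The computation is then immediate. Since $t/\sigma^2\leq 1$, \Cref{rateofgrowth} gives $\mathcal C(t/\sigma^2)=\Theta((t/\sigma^2)^2)=\Theta(a^2 m/|\Sigma|)$, whence $\sigma^2\mathcal C(t/\sigma^2)=\Theta(a^2 k^{c-1})=\Theta(\beta\ln m\cdot k^{c-1})$. Taking logarithms in the displayed inequality and rearranging,
\[
C(\gamma)\;\geq\;\Omega\!\left(\frac{\beta\ln m\cdot k^{c-1}}{\gamma\ln m}\right)\;=\;\Omega\!\bigl(k^{c-1}/\gamma\bigr)\;=\;\Omega(\gamma^{c-2}),
\]
which is the claim. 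Observe that choosing $a$ to grow like $\sqrt{\ln m}$ is essential: the $\ln m$ in the numerator cancels the $\ln m$ in $\log(m^\gamma)=\gamma\ln m$ appearing in the denominator, whereas a bounded $a$ would only yield $C(\gamma)=\Omega(\gamma^{c-2}/\ln m)$, which is strictly weaker when $m$ is allowed to grow with $|\Sigma|$. The main technical obstacle is therefore producing a two-sided moderate-deviation lower bound for the Binomial that is sharp at scale $a=\Theta(\sqrt{\ln m})$, rather than merely at $a=O(1)$; this is classical but must be stated and its range of validity checked against the constraint $a\leq(|\Sigma|/m)^{1/6}$, which we have already verified.
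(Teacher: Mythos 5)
Your proposal is correct and takes essentially the same approach as the paper: the same bad instance $S=[k]^{c-1}\times\Sigma$, the same collapse event $E$ with probability $m^{-(k-1)(c-1)}$, a lower tail bound on the conditional (Binomial-type) distribution at deviation scale $t=\Theta(\sigma'\sqrt{\log m})$ where $\sigma'^2=k^{c-1}\sigma^2$, and a comparison against the hypothesized bound with $k=\Theta(\gamma)$ to extract $C(\gamma)=\Omega(\gamma^{c-2})$. The paper invokes ``tightness of Bennett's inequality'' where you spell out a Binomial moderate-deviation lower bound with explicit validity range, but the computation is the same.
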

\begin{proof}
	Assume the existence of the function $C$. As suggested above, consider the set of keys $S=[k]^{c-1}\times \Sigma$ for some $k$ to be determined. Denote by $\mathcal E$ the event that the first $c-1$ position characters of $S$ collapse, i.e., that each $h_i, 0\leq i\leq c-2$ is constant on $[k]$. 
	We easily calculate $\Pr[\mathcal E] = m^{-(k-1)(c-1)}$. Now, conditioning on $\mathcal E$, we may consider the situation as follows. Let $y'$ be the random variable satisfying $\bigoplus_{i=0}^{c-2}h_i(x_i)=y'$ for all $x_0, \dots, x_{c-2}\in [k]$. The last positional hash function $h_{c-1}$ is a fully random hash function $\Sigma\to m$ such that the conditioned variable $(X|\mathcal E)$ satisfies 
	$$(X|\mathcal E)= \sum_{\alpha\in \pi_{c-1}(S)}k^{c-1}[h_{c-1}(\alpha) = y\xor y'] \stackrel d= \sum_{\alpha\in \Sigma}k^{c-1}[h_{c-1}(\alpha) = 0] =: X',$$ 
	where $\stackrel d=$ denotes equality of distribution.
	We write $\sigma'^2=\Var {X'} = k^{2(c-1)}\abs \Sigma \frac{m-1}{m^2}$ and note that $\E{X'}=\mu$. Now, since $h_{c-1}$ is a uniformly random hash function, tightness of the Bennet inequality, \cref{eq:var-chernoff}, implies that for $t=O(\sigma'^2)$,
	\begin{align}\label{eq:conditioned_lower_bound_on_cher}
		\Pr\left[ \abs{X'- \mu} \geq t\right] =\Omega\left( \exp\left( -\sigma'^2 \C (t/\sigma'^2) \right) \right) = \Omega\left(\exp\left(-\frac{t^2}{3\sigma'^2}\right)\right)
	\end{align}
	where we have applied \cref{rateofgrowth}. 
	
	Towards our main conclusion, observe that $\sigma^2 = k^{c-1}\abs \Sigma \frac{m-1}{m^2} = \sigma'^2/k^{c-1}$. Letting $t = \sigma'\sqrt{\log m}$,  $t\leq \sigma'^2$ since $\sigma'>\sqrt{\log m}$ by the assumption on the size of $m$, so we may apply \eqref{eq:conditioned_lower_bound_on_cher} to get
	\begin{align*}
		\Pr\left[\abs{X-\mu}\geq t\right] \geq \Pr\left[ \mathcal E \right]\cdot \Pr\left[ \abs{X'- \mu} \geq t\right] \geq \Omega\left(m^{-ck}\right).
	\end{align*} 
	On the other hand, \eqref{eq:hypotheticalbound} demands that whenever $k\leq \gamma$,
	\begin{align*}
		\Pr\left[\abs{X-\mu}\geq t\right] \leq 2\exp\left( -\frac{\sigma^2\C\left(\sqrt{\log(m)k^{c-1}}/\sigma\right)}{C(\gamma)} \right)+m^{-\gamma}\leq 2m^{-\frac{k^{c-1}}{C(\gamma)}} + m^{-\gamma},
	\end{align*}
	where the last inequality used \cref{rateofgrowth} and $\sqrt{\log(m)k^{c-1}}<\sigma$. Let $k = \frac{\gamma}{2c}$ and combine the above inequalities to conclude that 
		$2m^{-\frac{(\gamma/(2c))^{c-1}}{C(\gamma)}} + m^{-\gamma} = \Omega(m^{-\gamma/2})$. It follows that indeed, $C(\gamma) = \Omega(\gamma^{c-2})$.
	
\end{proof}

\paragraph{Twisted Tabulation and ``permutation-tabulation''}
Variations upon the example above can also be used to show that the analysis of twisted tabulation hashing is tight in the sense that the \errorterm cannot be improved while maintaining strong concentration. In twisted tabulation we \emph{twist} the last position character of the input before applying simple tabulation. The twist is a Feistel permutation that for the key set
$S=[k]^b\times \Sigma^{c-b}$, will only permute the keys within
the set. Since the set of twisted keys is the same as the original set $S$, this has no effect on the filling of bins.
For almost the same reason, a reversal of the order of operations in our new tabulation-permutation hashing, i.e., if is first permuted each position character and the applied simple tabulation, would not improve the analysis, since the set $S$ while not invariant under the operation, would retain the same structure.

\section*{Acknowledgement}
Anders Aamand, Jakob B.\ T.\ Knudsen, Peter M.\ R.\ Rasmussen, and Mikkel Thorup are partly supported by Thorup's Investigator Grant 16582, Basic Algorithms Research Copenhagen (BARC), from the VILLUM Foundation.

\bibliographystyle{acm}
\bibliography{bib}

\begin{thebibliography}{10}

\bibitem{NoRepetitions}
{\sc Aamand, A., Das, D., Kipouridis, E., Knudsen, J. B.~T., Rasmussen, P.
  M.~R., and Thorup, M.}
\newblock No repetition: Fast streaming with highly concentrated hashing.
\newblock {\em CoRR\/} (2020).
\newblock \texttt{arxiv.org/abs/2004.01156}.

\bibitem{AMRT96}
{\sc Andersson, A., Miltersen, P.~B., Riis, S., and Thorup, M.}
\newblock Static dictionaries on {$AC^0$} {RAM}s: Query time
  {$\Theta(\sqrt{\log n/\log \log n})$} is necessary and sufficient.
\newblock In {\em 37th Annual Symposium on Foundations of Computer Science
  {(FOCS)}\/} (1996), pp.~441--450.

\bibitem{murmur3}
{\sc Appleby, A.}
\newblock {MurmurHash3}.
\newblock \url{https://github.com/aappleby/smhasher/wiki/MurmurHash3}, 2016.

\bibitem{Aumasson2013Blake}
{\sc Aumasson, J.-P., Neves, S., Wilcox-O'Hearn, Z., and Winnerlein, C.}
\newblock Blake2: Simpler, smaller, fast as {MD5}.
\newblock In {\em Applied Cryptography and Network Security\/} (Berlin,
  Heidelberg, 2013), M.~Jacobson, M.~Locasto, P.~Mohassel, and R.~Safavi-Naini,
  Eds., Springer Berlin Heidelberg, pp.~119--135.

\bibitem{BJKST02}
{\sc Bar-Yossef, Z., Jayram, T.~S., Kumar, R., Sivakumar, D., and Trevisan, L.}
\newblock Counting distinct elements in a data stream.
\newblock In {\em International Workshop on Randomization and Approximation
  Techniques in Computer Science (RANDOM)\/} (2002), pp.~1--10.

\bibitem{BennetInequality}
{\sc Bennett, G.}
\newblock Probability inequalities for the sum of independent random variables.
\newblock {\em Journal of the American Statistical Association 57}, 297 (1962),
  33--45.

\bibitem{Bernstein1924}
{\sc Bernstein, S.~N.}
\newblock On a modification of {C}hebyshev's inequality and of the error
  formula of {L}aplace.
\newblock {\em Ann. Sci. Inst. Sav. Ukraine, Sect. Math.}, 1 (1924), 38--49.

\bibitem{Broder97onthe}
{\sc Broder, A.~Z.}
\newblock On the resemblance and containment of documents.
\newblock In {\em Compression and Complexity of Sequences (SEQUENCES)\/}
  (1997), pp.~21--29.

\bibitem{carter77universal}
{\sc Carter, L., and Wegman, M.~N.}
\newblock Universal classes of hash functions.
\newblock {\em Journal of Computer and System Sciences 18}, 2 (1979), 143--154.
\newblock Announced at STOC'77.

\bibitem{CRSW11:balance}
{\sc Celis, L.~E., Reingold, O., Segev, G., and Wieder, U.}
\newblock Balls and bins: Smaller hash families and faster evaluation.
\newblock In {\em 52nd Annual Symposium on Foundations of Computer Science
  ({FOCS})\/} (2011), pp.~599--608.

\bibitem{CSV84}
{\sc Chandra, A.~K., Stockmeyer, L.~J., and Vishkin, U.}
\newblock Constant depth reducibility.
\newblock {\em {SIAM} J. Comput. 13}, 2 (1984), 423--439.

\bibitem{Che52:chernoff}
{\sc Chernoff, H.}
\newblock A measure of asymptotic efficiency for tests of a hypothesis based on
  the sum of observations.
\newblock {\em Annals of Mathematical Statistics 23}, 4 (1952), 493--507.

\bibitem{christiani14prg}
{\sc Christiani, T., and Pagh, R.}
\newblock Generating k-independent variables in constant time.
\newblock In {\em 55th Annual Symposium on Foundations of Computer Science
  (FOCS)\/} (2014), pp.~196--205.

\bibitem{christiani15indep}
{\sc Christiani, T., Pagh, R., and Thorup, M.}
\newblock From independence to expansion and back again.
\newblock In {\em Proceedings of the 47rd ACM Symposium on Theory of Computing
  (STOC)\/} (2015).

\bibitem{chung2013simple}
{\sc Chung, K.-M., Mitzenmacher, M., and Vadhan, S.}
\newblock Why simple hash functions work: Exploiting the entropy in a data
  stream.
\newblock {\em Theory of Computing 9}, 1 (2013), 897--945.

\bibitem{DKRT15:k-part}
{\sc Dahlgaard, S., Knudsen, M. B.~T., Rotenberg, E., and Thorup, M.}
\newblock Hashing for statistics over k-partitions.
\newblock In {\em 56th Annual Symposium on Foundations of Computer Science
  (FOCS)\/} (2015), pp.~1292--1310.

\bibitem{Dahlgaard2017Practical}
{\sc Dahlgaard, S., Knudsen, M. B.~T., and Thorup, M.}
\newblock Practical hash functions for similarity estimation and dimensionality
  reduction.
\newblock In {\em Proceedings of the 31st International Conference on Neural
  Information Processing Systems (NIPS)\/} (2017), Curran Associates Inc.,
  pp.~6618--6628.

\bibitem{dietzfel96universal}
{\sc Dietzfelbinger, M.}
\newblock Universal hashing and $k$-wise independent random variables via
  integer arithmetic without primes.
\newblock In {\em Proceedings of the 13th Symposium on Theoretical Aspects of
  Computer Science (STACS)\/} (1996), pp.~569--580.

\bibitem{Dietzfelbinger1992}
{\sc Dietzfelbinger, M., and {Meyer auf der Heide}, F.}
\newblock Dynamic hashing in real time.
\newblock In {\em Informatik, Festschrift zum 60. Geburtstag von G{\"{u}}nter
  Hotz}. 1992, pp.~95--119.

\bibitem{dietzfel09splitting}
{\sc Dietzfelbinger, M., and Rink, M.}
\newblock Applications of a splitting trick.
\newblock In {\em Proceedings of the 36th International Colloquium on Automata,
  Languages and Programming (ICALP)\/} (2009), pp.~354--365.

\bibitem{Dietz2007}
{\sc Dietzfelbinger, M., and Weidling, C.}
\newblock Balanced allocation and dictionaries with tightly packed constant
  size bins.
\newblock {\em Theor. Comput. Sci. 380\/} (06 2007), 47--68.

\bibitem{dietzfel03tabhash}
{\sc Dietzfelbinger, M., and Woelfel, P.}
\newblock Almost random graphs with simple hash functions.
\newblock In {\em Proceedings of the 25th ACM Symposium on Theory of Computing
  (STOC)\/} (2003), pp.~629--638.

\bibitem{Dum56}
{\sc Dumey, A.~I.}
\newblock Indexing for rapid random access memory systems.
\newblock {\em Computers and Automation 5}, 12 (1956), 6--9.

\bibitem{FanGramaLiu}
{\sc Fan, X., Grama, I., and Liu, Q.}
\newblock Hoeffding's inequality for supermartingales.
\newblock {\em Stochastic Processes and their Applications 122}, 10 (2012),
  3545 -- 3559.

\bibitem{Fotakis2005}
{\sc Fotakis, D., Pagh, R., Sanders, P., and Spirakis, P.}
\newblock Space efficient hash tables with worst case constant access time.
\newblock {\em Theory of Computing Systems 38}, 2 (Feb 2005), 229--248.

\bibitem{GopalanKM18}
{\sc Gopalan, P., Kane, D.~M., and Meka, R.}
\newblock Pseudorandomness via the discrete fourier transform.
\newblock {\em {SIAM} J. Comput. 47}, 6 (2018), 2451--2487.

\bibitem{hagerup01perfhash}
{\sc Hagerup, T., and Tholey, T.}
\newblock Efficient minimal perfect hashing in nearly minimal space.
\newblock In {\em Proceedings of the 18th Symposium on Theoretical Aspects of
  Computer Science (STACS)\/} (2001), pp.~317--326.

\bibitem{HP12:hardware}
{\sc Hennessy, J.~L., and Patterson, D.~A.}
\newblock {\em Computer Architecture - {A} Quantitative Approach, 5th Edition}.
\newblock Morgan Kaufmann, 2012.

\bibitem{knuth63linprobe}
{\sc Knuth, D.~E.}
\newblock Notes on open addressing.
\newblock Unpublished memorandum. See
  \texttt{http://citeseer.ist.psu.edu/knuth63notes.html}, 1963.

\bibitem{KSZC03}
{\sc Krishnamurthy, B., Sen, S., Zhang, Y., and Chen, Y.}
\newblock Sketch-based change detection: methods, evaluation, and applications.
\newblock In {\em Proceedings of the 3rd Internet Measurement Conference
  ({IMC})\/} (2003), pp.~234--247.

\bibitem{Lemire2016}
{\sc Lemire, D., and Kaser, O.}
\newblock Faster 64-bit universal hashing using carry-less multiplications.
\newblock {\em Journal of Cryptographic Engineering 6}, 3 (Sep 2016), 171--185.

\bibitem{MNT93}
{\sc Mansour, Y., Nisan, N., and Tiwari, P.}
\newblock The computational complexity of universal hashing.
\newblock {\em Theor. Comput. Sci. 107}, 1 (1993), 121--133.

\bibitem{MRRR14:balance}
{\sc Meka, R., Reingold, O., Rothblum, G.~N., and Rothblum, R.~D.}
\newblock Fast pseudorandomness for independence and load balancing - (extended
  abstract).
\newblock In {\em Proceedings of the 41st International Colloquium on Automata,
  Languages and Programming (ICALP)\/} (2014), pp.~859--870.

\bibitem{Mil96}
{\sc Miltersen, P.~B.}
\newblock Lower bounds for static dictionaries on rams with bit operations but
  no multiplication.
\newblock In {\em Proceedings of the 23rd International Colloquium on Automata,
  Languages and Programming (ICALP)\/} (1996), pp.~442--453.

\bibitem{mitzenmacher08hash}
{\sc Mitzenmacher, M., and Vadhan, S.~P.}
\newblock Why simple hash functions work: exploiting the entropy in a data
  stream.
\newblock In {\em Proceedings of the 19th Annual {ACM-SIAM} Symposium on
  Discrete Algorithms ({SODA})\/} (2008), pp.~746--755.

\bibitem{motwani95book}
{\sc Motwani, R., and Raghavan, P.}
\newblock {\em Randomized Algorithms}.
\newblock Cambridge University Press, 1995.

\bibitem{PP08}
{\sc Pagh, A., and Pagh, R.}
\newblock Uniform hashing in constant time and optimal space.
\newblock {\em SIAM Journal on Computing 38}, 1 (2008), 85--96.

\bibitem{patrascu12charhash}
{\sc P{\v a}tra{\c s}cu, M., and Thorup, M.}
\newblock The power of simple tabulation-based hashing.
\newblock {\em Journal of the ACM 59}, 3 (2012), Article 14.
\newblock Announced at STOC'11.

\bibitem{patrascu16kwise-lb}
{\sc P{\v a}tra{\c s}cu, M., and Thorup, M.}
\newblock On the $k$-independence required by linear probing and minwise
  independence.
\newblock {\em {ACM} Trans. Algorithms 12}, 1 (2016), 8:1--8:27.

\bibitem{cityhash}
{\sc Pike, G., and Alakuijala, J.}
\newblock Introducing cityhash.
\newblock
  \url{https://opensource.googleblog.com/2011/04/introducing-cityhash.html},
  2011.

\bibitem{PT13:twist}
{\sc P\v{a}tra\c{s}cu, M., and Thorup, M.}
\newblock Twisted tabulation hashing.
\newblock In {\em Proceedings of the 24th Annual {ACM-SIAM} Symposium on
  Discrete Algorithms ({SODA})\/} (2013), pp.~209--228.

\bibitem{schilling2005}
{\sc Schilling, R.~L.}
\newblock {\em Measures, Integrals and Martingales}.
\newblock Cambridge University Press, 2005.

\bibitem{schmidt95chernoff}
{\sc Schmidt, J.~P., Siegel, A., and Srinivasan, A.}
\newblock Chernoff-{H}oeffding bounds for applications with limited
  independence.
\newblock {\em SIAM Journal on Discrete Mathematics 8}, 2 (1995), 223--250.
\newblock Announced at SODA'93.

\bibitem{siegel04hash}
{\sc Siegel, A.}
\newblock On universal classes of extremely random constant-time hash
  functions.
\newblock {\em SIAM Journal on Computing 33}, 3 (2004), 505--543.
\newblock Announced at FOCS'89.

\bibitem{Tho13:simple-simple}
{\sc Thorup, M.}
\newblock Simple tabulation, fast expanders, double tabulation, and high
  independence.
\newblock In {\em 54th Annual Symposium on Foundations of Computer Science
  ({FOCS})\/} (2013), pp.~90--99.

\bibitem{Thorup2015HighSH}
{\sc Thorup, M.}
\newblock High speed hashing for integers and strings.
\newblock {\em CoRR\/} (2015).
\newblock \texttt{arxiv.org/abs/1504.06804}.

\bibitem{thorup12kwise}
{\sc Thorup, M., and Zhang, Y.}
\newblock Tabulation-based 5-independent hashing with applications to linear
  probing and second moment estimation.
\newblock {\em SIAM Journal on Computing 41}, 2 (2012), 293--331.
\newblock Announced at SODA'04 and ALENEX'10.

\bibitem{wegman81kwise}
{\sc Wegman, M.~N., and Carter, L.}
\newblock New classes and applications of hash functions.
\newblock {\em Journal of Computer and System Sciences 22}, 3 (1981), 265--279.
\newblock Announced at FOCS'79.

\bibitem{zobrist70hashing}
{\sc Zobrist, A.~L.}
\newblock A new hashing method with application for game playing.
\newblock Tech. Rep.~88, Computer Sciences Department, University of Wisconsin,
  Madison, Wisconsin, 1970.

\end{thebibliography}
%\clearpage
%\appendix
%\centerline{\huge\bf Appendix}
%\input{sections/streaming}
%\newpage
\appendix
\section{Experiments}\label{sec:experiments}
This section is dedicated to provide further details regarding the timing experiments presented in the introduction in \cref{sec:intro-exp}. Furthermore, we present experiments which demonstrate concrete bad input sets for several hash functions that do not guarantee strong concentration bounds.

As explained in~\Cref{sec:intro-exp}, we ran experiments on various basic hash functions. 
More precisely, we compared our new hashing schemes tabulation-permutation and tabulation-1permutation with the following hashing schemes: $k$-independent
PolyHash~\cite{carter77universal},
Multiply-Shift~\cite{dietzfel96universal}, simple
tabulation~\cite{zobrist70hashing}, twisted
tabulation~\cite{PT13:twist},
mixed tabulation~\cite{DKRT15:k-part},
 and double
tabulation~\cite{Tho13:simple-simple}.
We were interested in both the speed of the hash functions involved, and the quality of the output. For our timing experiments we studied the hashing of 32-bit keys to 32-bit hash values, and 64-bit keys to 64-bit hash values.
%We evaluated both their speed and  the quality of the output. 
Aside from having strong theoretical guarantees, our experiments show that tabulation-permutation and tabulation-1permutation are very fast in practice. 

 \begin{table}
\begin{center}
    \begin{tabular}{|l | r | r | r | r |}
        \cline{2-5}
        \multicolumn{1}{l|}{} & \multicolumn{4}{|c|}{Running time (ms)} \\
        \cline{2-5}
        \multicolumn{1}{l|}{} \multirow{2}{*}{}&\multicolumn{2}{|c|}{Computer 1} & \multicolumn{2}{|c|}{Computer 2} \\
        \hline Hash function                  & 32 bits & 64 bits & 32 bits & 64 bits \\
        \hline
        \emph{Multiply-Shift}                 & 4.2     &  7.5   & 23.0     & 36.5    \\
        \emph{2-Independent PolyHash}         & 14.8    & 20.0   & 72.2     & 107.3   \\
        \emph{Simple tabulation}              & 13.7    & 17.8   & 53.1     & 55.9    \\
        \emph{Twisted tabulation}             & 17.2    & 26.1   & 65.6     & 92.5    \\ 
        \emph{Mixed tabulation}               & 28.6    & 68.1   & 120.1    & 236.6   \\
        \hline
        \textbf{Tabulation-1permutation}      & 16.0    & 19.3   & 63.8     & 67.7    \\
        \textbf{Tabulation-permutation}       & 27.3    & 43.2   & 118.1    & 123.6   \\ \hline
        Double tabulation                     & 1130.1  & --     & 3704.1   & --      \\
        ``Random'' (100-Independent PolyHash) & 2436.9  & 3356.8 & 7416.8   & 11352.6 \\
        \hline
    \end{tabular}
    \caption{The time for different hash functions to hash $10^7$ keys of  
    length 32 bits and 64 bits, respectively, to ranges of size 32 bits and 64 bits. The experiment was carried out on two computers. The hash functions written in italics are those without general Chernoff-style bounds. Hash functions written in bold are the contributions of this paper. The hash functions in regular font are known to provide Chernoff-style bounds. Note that we were unable to implement double tabulation from 64 bits to 64 bits since the hash tables were too large to fit in memory. \label{tab:speed2}}
\end{center}
\end{table}

All experiments are implemented in C++11 using a random seed from
\url{https://www.random.org}. The seed for the tabulation based hashing methods
uses a random 100-independent PolyHash function. PolyHash is implemented
using the Mersenne primes $p = 2^{61} - 1$ for 32 bits and $p = 2^{89} - 1$ for 64 bits.
Furthermore, it has been implemented using Horner's rule, and GCC's 128-bit integers to ensure
an efficient implementation. Double tabulation is implemented as described
in \cite{Tho13:simple-simple} with $\Sigma = [2^{16}], c = 2, d = 20$.

\paragraph{Timing}
We timed the speed of the hash functions on two different computers. The first computer (Computer 1 in~\Cref{tab:speed2})
has a 2.4 GHz Quad-Core Intel Core i5 processor and 8 GB RAM, and it is running
macOS Catalina. The second computer (Computer 2 in~\Cref{tab:speed2})
has 1.5 GHz Intel Core i3 processor and 4 GB RAM, and it is running Windows 10. We restate the results of our experiments in~\Cref{tab:speed2} and refer the reader to~\Cref{sec:intro-exp} for a discussion of these results and of the choice of parameters used in the various hashing schemes. %We hash the same $10^7$ randomly chosen integers with each hash function. We consider the case where the hash functions output 32 bits and when the hash functions output 8 bits. 

%First we compare with the hash functions that do not have good concentration bounds, that is, Multiply-Shift, 2-independent PolyHash, simple tabulation, and twisted tabulation. We see that when the hash functions output 32 bits then tabulation-permutation is slightly slower than simple tabulation and twisted tabulation which again are slower than Multiply-Shift and 2-independent PolyHash. Next we consider the case where we only have a single 8-bit character output. Recall that this is the case we needed when using tabulation-permutation to hash into a small number $m$ of bins, e.g. $m=2$. Having a small output is only an advantage for simple tabulation, twisted tabulation, and tabulation-permutation. For the latter it means that we only need to perform one more table-lookup than simple tabulation which is also reflected in the running time where tabulation-permutation is only marginally slower than simple tabulation and faster than twisted tabulation 8-bit output.

%Next, comparing with the hash functions that do strong concentration bounds, tabulation-permutation is approximately 20 times faster than double tabulation and approximately 125 times faster than 100-independent PolyHash. 

\paragraph{Quality}
We will now present experiments with concrete bad instances for the schemes
without general concentration bounds, that is, Multiply-Shift, 2-independent 
PolyHash, simple tabulation, and twisted tabulation. In each case, we compare
with our new tabulation-permutation scheme as well as 100-independent PolyHash,
which is our approximation to an ideal fully random hash function.
We note that all schemes considered are 2-independent, so they all have
exactly the same variance as fully-random hashing. From $2$-independence, it also follows that the schemes work perfectly on sufficiently random input~\cite{mitzenmacher08hash}. Our concern is therefore concrete inputs making them fail in the tail.
%the frequency of large deviations in the tails.

%We also note that all schemes work perfectly on sufficiently random input  \cite{mitzenmacher08hash}, so the question is if there exists concrete inputs making them fail in the tail. 

First, we consider simple bad instances for Multiply-Shift and 2-independent
PolyHash.  These are analyzed in detail in~\cite[Appendix
B]{patrascu16kwise-lb}.  The specific instance we consider is that of hashing the arithmetic
progression $A=\setbuilder{a \cdot i}{i \in [50000]}$ into $16$ bins, where we are interested in the number of keys from $A$ that hashes to a specific bin. We performed this experiment $5000$ times, with independently chosen hash functions. The cumulative distribution functions on the number of keys from $A$ hashing to a specific bin is presented in~\Cref{fig:polynomials}. We see that most of the
time 2-independent PolyHash and Multiply-Shift distribute the keys perfectly with exactly $1/16$ of
the keys in our bin. Since the variance is the same as with fully random hashing, this should suggest a  much heavier tail,
which is indeed what our experiments show. For contrast, we see that the cumulative distribution function with our
tabulation-permutation hash function is almost indistinguishable from that of 100-independent Poly-Hash. We note that our experiments with tabulation-permutation is only a sanity check: No experiment can prove good performance on all possible inputs. 

\begin{figure}
    \centering
    \def\svgwidth{\textwidth}
    \input{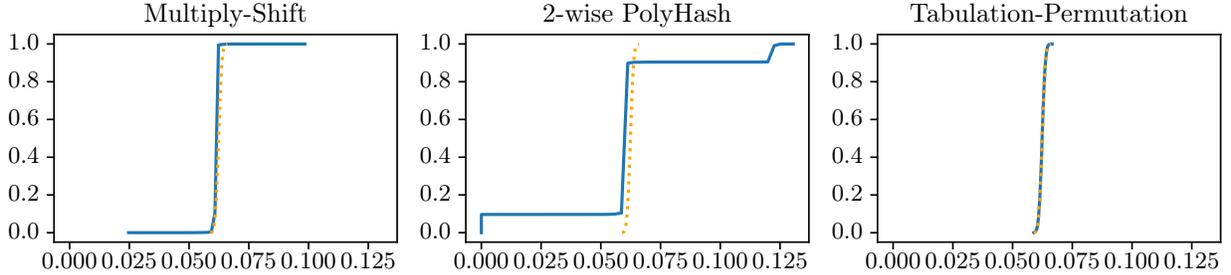}
    \caption{Hashing the arithmetic progression $\setbuilder{a \cdot i}{i \in [50000]}$
    to $16$ bins for a random integer $a$. The dotted line is a 100-independent PolyHash.}
    \label{fig:polynomials}
\end{figure}

Our second set of experiments shows bad instances for simple tabulation and twisted tabulation. We already know theoretically from~\Cref{sec:few-bin-problem} that these bad instances exist, but we shall now see that, in a sense, things can be even worse than described in~\Cref{sec:few-bin-problem} for certain sets of keys. The specific instance we consider is hashing the discrete cube $Q=[2]^7 \times [2^6]$ to $m=2$ bins using simple
tabulation, twisted tabulation, and tabulation-permutation. We performed this experiment $5000$ times, with independently chosen hash functions, and again we were interested in the number of keys from $Q$ hashing to one of the bins. The cumulative distribution functions of the number of such keys is presented in~\Cref{fig:tabulation}.  
Let us explain the appearance of the curves for simple and twisted tabulation.
In general, if we hash
the set of keys $[2] \times R$ to $[2]$ with simple tabulation, then if
$h_1(0) \neq h_1(1)$, each bin will get exactly the same amount of
keys.  When we hash the set of keys $[2]^7 \times [2^6]$ this happens with
probability $1 - 2^{-7}$. If on the other hand $h_i(0)=h_i(1)$ for each $i=1,\dots,7$, which happens with probability $2^{-7}$, the distribution of the balls in the bins is the same as that when $2^6$ balls, each of weight $2^7$, are distributed independently and uniformly at random into the two bins. If this happens, the variance of the number of balls in a bin is a factor of $2^7$ higher, so we expect a much heavier tail than in the
completely independent case. These observations agree with the results in~\Cref{fig:tabulation}. Most of the time, the distribution is perfect, but the tail is very heavy.
We believe that this instance is also one of the
worst instances for tabulation-permutation hashing. We would therefore expect to see that on this instance it performs slightly worse than 100-independent
PolyHash, which is indeed what our experiments show. We note that
that no amount of experimentation can prove that tabulation-permutation
always works well for all inputs. We do, however, have mathematical concentration guarantees, and the experiments performed here give
us some idea of the impact of the constant delay hidden in the exponential decrease in the bounds of~\Cref{thm:intro-tab-perm}. For completeness, we note that the situation with mixed tabulation is unresolved. Neither do we have strong concentration bounds, nor any bad instances showing that such bounds do not hold. Running experiments is not expected to resolve this issue since mixed tabulation, as any other $2$-independent hashing scheme, performs well on almost all inputs~\cite{mitzenmacher08hash}.   

\begin{figure}
    \centering
    \def\svgwidth{\textwidth}
    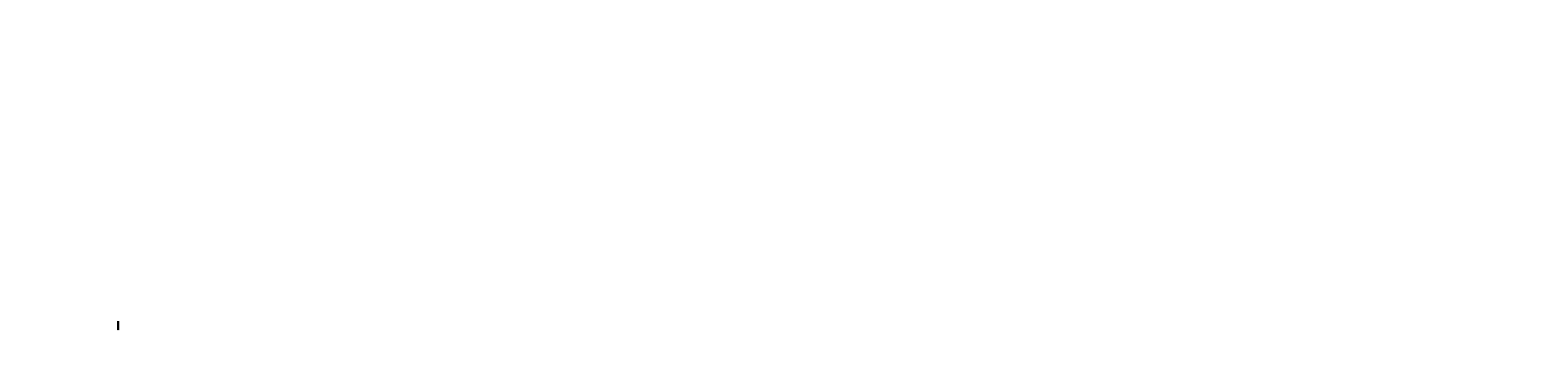
    \caption{Hashing the discrete cube $[2]^7  \times [2^6]$ to $2$ bins. The dotted line
    is a 100-independent PolyHash.}
    \label{fig:tabulation}
\end{figure}

\end{document}